\title{Lecture Notes on Channel Coding}
\author{Georg B\"ocherer\\
Institute for Communications Engineering\\
Technical University of Munich, Germany\\
\url{georg.boecherer@tum.de}}
\newcommand{\four}{\mathcal{F}}
\providecommand{\set}[1]{\ensuremath{\mathcal{#1}}}
\newcommand{\seta}{\set{A}}
\newcommand{\setb}{\set{B}}
\newcommand{\setc}{\set{C}}
\newcommand{\setg}{\set{G}}
\newcommand{\setr}{\set{R}}
\newcommand{\setu}{\set{U}}
\newcommand{\setv}{\set{V}}
\newcommand{\setx}{\set{X}}
\newcommand{\sety}{\set{Y}}
\newcommand{\setz}{\ensuremath{\mathbf{Z}}\xspace}
\providecommand{\field}[1]{\ensuremath{\mathbb{#1}}}
\newcommand{\myfield}{\field{F}}
\newcommand{\fieldtwo}{\field{F}_2}
\newcommand{\veczero}{\boldsymbol{0}}
\newcommand{\vecone}{\boldsymbol{1}}
\newcommand{\veca}{\boldsymbol{a}}
\newcommand{\vecb}{\boldsymbol{b}}
\newcommand{\vecc}{\boldsymbol{c}}
\newcommand{\vece}{\boldsymbol{e}}
\newcommand{\vecg}{\boldsymbol{g}}
\newcommand{\vech}{\boldsymbol{h}}
\newcommand{\vecs}{\boldsymbol{s}}
\newcommand{\vecu}{\boldsymbol{u}}
\newcommand{\vecv}{\boldsymbol{v}}
\newcommand{\vecw}{\boldsymbol{w}}
\newcommand{\vecx}{\boldsymbol{x}}
\newcommand{\vecy}{\boldsymbol{y}}
\newcommand{\vecz}{\boldsymbol{z}}
\newcommand{\matg}{\boldsymbol{G}}
\newcommand{\mathh}{\boldsymbol{H}}
\newcommand{\mati}{\boldsymbol{I}}
\newcommand{\matp}{\boldsymbol{P}}
\newcommand{\hd}{\ensuremath{\de_\text{H}}\xspace}
\newcommand{\hw}{\ensuremath{\we_\text{H}}\xspace}
\newcommand{\enc}{\ensuremath{\mathrm{ev}}\xspace}
\newcommand{\pc}{\ensuremath{P_c}\xspace}
\newcommand{\pe}{\ensuremath{P_e}\xspace}
\newcommand{\bc}{\ensuremath{\setc}\xspace}
\newcommand{\dml}{\ensuremath{\mathrm{d}_\text{ML}}\xspace}
\renewcommand{\mod}{\ensuremath{\bmod}\xspace}
\newcommand{\supp}{\ensuremath{\mathrm{supp}}\xspace}
\newcommand{\matik}{\ensuremath{\mati_k}\xspace}
\DeclareMathOperator{\im}{im}
\DeclareMathOperator{\de}{d}
\DeclareMathOperator{\we}{w}
\DeclareMathOperator*{\argmax}{\ensuremath{\arg\,\max}\xspace}
\DeclareMathOperator*{\argmin}{\arg\,\min}
\newcommand{\bpm}{\begin{pmatrix}}
\newcommand{\epm}{\end{pmatrix}}
\newcommand{\bbm}{\begin{bmatrix}}
\newcommand{\ebm}{\end{bmatrix}}
\newcommand{\oleq}[1]{\overset{\text{(#1)}}{\leq}}
\newcommand{\oeq}[1]{\overset{\text{(#1)}}{=}}
\newcommand{\oiff}[1]{\overset{\text{(#1)}}{\Leftrightarrow}}
\newcommand{\dmin}{d_{\min}}
\numberwithin{equation}{chapter}
\newcounter{problemcount}[chapter]
\renewcommand{\theproblemcount}{\thechapter.\arabic{problemcount}}
\newcommand{\myproblem}{\noindent\refstepcounter{problemcount}\textbf{Problem \theproblemcount.\ }}
\theoremstyle{plain}
\newtheorem{lemma}{Lemma}[chapter]
\newtheorem{theorem}{Theorem}[chapter]
\newtheorem{proposition}{Proposition}[chapter]
\theoremstyle{definition}
\newtheorem{definition}{Definition}[chapter]
\definecolor{examplegray}{rgb}{.95,.95,.95}
\newtheorem{mdexample}{Example}[chapter]
\newenvironment{example}%
  {\begin{mdframed}[backgroundcolor=examplegray,hidealllines=true]\begin{mdexample}}%
  {\end{mdexample}\end{mdframed}}
\begin{document}
\maketitle

\begin{abstract}
These lecture notes on channel coding were developed for a one-semester course for graduate students of electrical engineering. Chapter 1 reviews the basic problem of channel coding. Chapters 2--5 are on linear block codes, cyclic codes, Reed-Solomon codes, and BCH codes, respectively. The notes are self-contained and were written with the intent to derive the presented results with mathematical rigor. The notes contain in total 68 homework problems, of which 20\% require computer programming.
\end{abstract}

\tableofcontents

\pagestyle{scrheadings}

% !TEX root = cc-v1.tex

\chapter*{Preface}

The essence of reliably transmitting data over a noisy communication medium by channel coding is captured in the following diagram.
\begin{align*}
U\rightarrow\boxed{\text{encoder}}\rightarrow X \rightarrow\boxed{\text{channel}}\rightarrow Y \rightarrow\boxed{\text{decoder}}\rightarrow\hat{X}\rightarrow \hat{U}.
\end{align*}
Data $U$ is encoded by a codeword $X$, which is then transmitted over the channel. The decoder uses its observation $Y$ of the channel output to calculate a codeword estimate $\hat{X}$, from which an estimate $\hat{U}$ of the transmitted data is determined.

These notes start in Chapter 1 with an invitation to channel coding, and provide in the following chapters a sample path through algebraic coding theory. The destination of this path is decoding of BCH codes. This seemed reasonable to me, since BCH codes are widely used in standards, of which DVB-T2 is an example. This course covers only a small slice of coding theory. However within this slice, I have tried to derive all results with mathematical rigor, except for some basic results from abstract algebra, which are stated without proof. The notes can hopefully serve as a starting point for the study of channel coding.

\subsubsection{References}

The notes are self-contained. When writing the notes, the following references were helpful:
\begin{itemize}
\item Chapter~\ref{chap:short}: \cite{cover2006elements},\cite{gallager2013stochastic}.
\item Chapter~\ref{chap:linear}: \cite{mceliece2004theory}.
\item Chapter~\ref{chap:cyclic}: \cite{mceliece2004theory}.
\item Chapter~\ref{chap:rs}: \cite{forney2005principles},\cite{blahut2003algebraic}.
\item Chapter~\ref{chap:bch}: \cite{moon2005error},\cite{massey1997applied}.
\end{itemize}
Please report errors of any kind to \texttt{georg.boecherer@tum.de}.

\hfill G. B\"ocherer

\chapter*{Achnowledgments}

I used these notes when giving the lecture ``Channel Coding" at the Technical University of Munich in the winter terms from 2013 to 2015. Many thanks to the students Julian Leyh, Swathi Patil, Patrick Schulte, Sebastian Baur, Christoph Bachhuber, Tasos Kakkavas, Anastasios Dimas, Kuan Fu Lin, Jonas Braun, Diego Su\'arez, Thomas Jerkovits, and Fabian Steiner for reporting the errors to me, to Siegfried B\"ocherer for proofreading the notes, and to Markus Stinner and Hannes Bartz, who were my teaching assistants and contributed many of the homework problems.

\hfill G. B\"ocherer

% !TEX root = cc-v1.tex
\chapter{Channel Coding}
\label{chap:short}

In this chapter, we develop a mathematical model of data transmission over unreliable communication channels. Within this model, we identify a trade-off between reliability, transmission rate, and complexity. We show that the exhaustive search for systems that achieve the optimal trade-off is infeasible. This motivates the development of the algebraic coding theory, which is the topic of this course.

\section{Channel}

We model a communication channel by a discrete and finite input alphabet $\setx$, a discrete and finite output alphabet $\sety$, and transition probabilities
\begin{align}
P_{Y|X}(b|a):=\Pr(Y=b|X=a),\qquad b\in\sety,a\in\setx.
\end{align}
The probability $P_{Y|X}(b|a)$ is called the \emph{likelihood}\index{likelihood} that the output value is $b$ given that the input value is $a$.
For each input value $a\in\setx$, the output value is a random variable $Y$ that is distributed according to $P_{Y|X}(\cdot|a)$.

\begin{example}
The \emph{binary symmetric channel}\index{binary symmetric channel} (BSC)\index{BSC \see{binary symmetric channel}} has the input alphabet $\setx=\{0,1\}$, the output alphabet $\sety=\{0,1\}$ and the transition probabilities
\begin{align}
\text{input }0\colon\quad &P_{Y|X}(1|0)=1-P_{Y|X}(0|0)=\delta\\
\text{input }1\colon\quad&P_{Y|X}(0|1)=1-P_{Y|X}(1|1)=\delta.
\end{align}
The parameter $\delta$ is called the \emph{crossover probability}\index{crossover probability}. Note that 
\begin{align}
P_{Y|X}(0|0)+P_{Y|X}(1|0)=(1-\delta)+\delta=1
\end{align}
which shows that $P_{Y|X}(\cdot|0)$ defines a distribution on $\sety=\{0,1\}$.
\end{example}

\begin{example}
The \emph{binary erasure channel}\index{binary erasure channel} (BEC)\index{BEC \see{binary erasure channel}} has the input alphabet $\setx=\{0,1\}$, the output alphabet $\sety=\{0,1,e\}$ and the transition probabilities
\begin{align}
\text{input }0\colon\quad&P_{Y|X}(e|0)=1-P_{Y|X}(0|0)=\epsilon,\quad P_{Y|X}(1|0)=0\\
\text{input }1\colon\quad&P_{Y|X}(e|1)=1-P_{Y|X}(1|1)=\epsilon,\quad P_{Y|X}(0|1)=0.
\end{align}
The parameter $\epsilon$ is called the \emph{erasure probability}\index{erasure probability}.
\end{example}

\section{Encoder}

For now, we model the encoder as a device that chooses the channel input $X$ according to a distribution $P_X$ that is defined as
\begin{align}
P_X(a):=\Pr(X=a),\qquad a\in\setx.
\end{align}
For each symbol $a\in\setx$, $P_X(a)$ is called the \emph{a priori probability} of the input value $a$. In Section~\ref{sec:cyclic:encoder}, we will take a look at how an encoder generates the channel input $X$ by encoding data.

\section{Decoder}
Suppose we want to use the channel once. This corresponds to choosing the input value according to a distribution $P_X$ on the input alphabet $\setx$. The probability to transmit a value $a$ and to receive a value $b$ is given by
\begin{align}
P_{XY}(ab)=P_X(a)P_{Y|X}(b|a).
\end{align}
We can think of one channel use as a random experiment that consists in drawing a sample from the joint distribution $P_{XY}$. We assume that both the a priori probabilities $P_X$ and the likelihoods $P_{Y|X}$ are known at the decoder.
\subsection{Observe the Output, Guess the Input}
At the decoder, the channel output $Y$ is observed. Decoding consists in guessing the input $X$ from the output $Y$. More formally, the decoder consists in a deterministic function
\begin{align}
f\colon\sety\to\setx.
\end{align}
We want to design an optimal decoder, i.e., a decoder for which some quantity of interest is maximized. A natural objective for decoder design is to maximize the average probability of correctly guessing the input from the output, i.e., we want to maximize the \emph{average probability of correct decision}\index{average probability of correct decision}\index{Pc@\pc \see{average probability of correct decision}}, which is given by
\begin{align}
P_c:=\Pr[X=f(Y)].
\end{align}
The \emph{average probability of error}\index{average probability of error}\index{Pe@\pe \see{average probability of error}} is given by
\begin{align}
P_e:=1-P_c.
\end{align}

\subsection{MAP Rule}
We now derive the decoder that maximizes $P_c$.
\begin{align}
P_c=\Pr[X=f(Y)]=&\sum_{ab\in\setx\times\sety\colon a=f(b)}P_{XY}(ab)\\
&=\sum_{ab\in\setx\times\sety\colon a=f(b)}P_{Y}(b)P_{X|Y}(a|b)\\
&=\sum_{b\in\sety}P_{Y}(b)P_{X|Y}[f(b)|b].
\end{align}
From the last line, we see that maximizing the average probability of correct decision is equivalent to maximizing for each observation $b\in\sety$ the probability to guess the input correctly. The optimal decoder is therefore given by
\begin{align}
f(b)=\argmax_{a\in\setx} P_{X|Y}(a|b).\label{eq:map}
\end{align}
The operator `$\argmax$'\index{arg\,max} returns the argument where a function assumes its maximum value, i.e.,
\begin{align*}
a^*=\argmax_{a\in\setx}\Leftrightarrow f(a^*)=\max_{a\in\setx}f(a).
\end{align*}
The probability $P_{X|Y}(a|b)$ is called the \emph{a posteriori probability}\index{a posteriori probability} of the input value $a$ given the output value $b$. The rule \eqref{eq:map} is called the \emph{maximum a posteriori probability} (MAP) rule\index{maximum a posteriori probability}\index{MAP \see{maximum a posteriori probability}}. We write $f_\text{MAP}$ to refer to a decoder that implements the rule defined in \eqref{eq:map}.

We can write the MAP rule \eqref{eq:map} also as 
\begin{align}
\argmax_{a\in\setx} P_{X|Y}(a|b)=&\argmax_{a\in\setx}\frac{P_{XY}(ab)}{P_Y(b)}\nonumber\\
=&\argmax_{a\in\setx} P_{XY}(ab)\nonumber\\
=&\argmax_{a\in\setx} P_{X}(a)P_{Y|X}(b|a).\label{eq:short:mapapriori}
\end{align}
From the last line, we see that the MAP rule is determined by the a priori information $P_X(a)$ and the likelihood $P_{Y|X}(b|a)$. 
\begin{example}
We calculate the MAP decoder for the BSC with crossover probability $\delta=0.2$ and $P_X(0)=0.1$. We calculate
\begin{align}
&P_{XY}(0,0)=0.1\cdot(1-0.2)=0.08\\
&P_{XY}(0,1)=0.1\cdot0.2=0.02\\
&P_{XY}(1,0)=0.9\cdot0.2=0.18\\
&P_{XY}(1,1)=0.9\cdot(1-0.2)=0.72.
\end{align}
Thus, by \eqref{eq:short:mapapriori}, the MAP rule is
\begin{align}
f_\text{MAP}(0)=1,\quad f_\text{MAP}(1)=1\label{eq:short:mapruleex}.
\end{align}
For the considered values of $P_X(0)$ and $\delta$, the MAP decoder that maximizes the probability of correct decision always decides for $1$, irrespective of the observed value $b$.
\end{example}

\subsection{ML Rule}

By neglecting the a priori information in \eqref{eq:short:mapapriori} and by choosing our guess such that the likelihood is maximized, we get the \emph{maximum likelihood} (ML) rule\index{maximum likelihood}\index{ML \see{maximum likelihood}}
\begin{align}
f_\text{ML}(b)=\argmax_{a\in\setx}P_{Y|X}(b|a).\label{eq:short:mlrule}
\end{align}
\begin{example}
We calculate the ML rule for the BSC with crossover probability $\delta=0.2$ and $P_X(0)=0.1$. The likelihoods are
\begin{align}
&P_{Y|X}(0|0)=0.8\\
&P_{Y|X}(0|1)=0.2\\
&P_{Y|X}(1|0)=0.2\\
&P_{Y|X}(1|1)=0.8.
\end{align}
By \eqref{eq:short:mlrule}, the ML rule becomes
\begin{align}
f_\text{ML}(0)=0,\quad f_\text{ML}(1)=1.\label{eq:short:mlruleex}
\end{align}
Note that for this example, the ML rule \eqref{eq:short:mlruleex} is different from the MAP rule \eqref{eq:short:mapruleex}.
\end{example}

\section{Block Codes}

\subsection{Probability of Error vs Transmitted Information}

So far, we have only addressed the decoding problem, namely how to (optimally) guess the input having observed the output taking into account the a priori probabilities $P_X$ and the channel likelihoods $P_{Y|X}$. However, we can also design the encoder, i.e., we can decide on the input distribution $P_X$. Consider a BSC with crossover probability $\delta=0.11$. We plot the probability of transmitting a zero $P_X(0)$ versus the error probability $P_e$ of a MAP decoder. The plot is shown in Figure~\ref{fig:bsc_px_pe}. For $P_X(0)=1$, we have $P_e=0$, i.e., we decode correctly with probability one! The reason for this is that the MAP decoder does not use its observation at all to determine the input. Since $P_X(0)=1$, the decoder knows for sure that the input is equal to zero irrespective of the output value. Although we always decode correctly, the configuration $P_X(0)=1$ is useless, since we do not transmit any information at all. We quantify how much information is contained in the input by
\begin{align}
H(X)=\sum_{a\in\supp P_X}P_X(a)\log_2 \frac{1}{P_X(a)}\label{eq:short:entropy}
\end{align}
where $\supp P_X:=\{a\in\setx\colon P_X(a)>0\}$ denotes the \emph{support}\index{support}\index{s@\supp \see{support}} of $P_X$, i.e., the set of values $a\in\setx$ that occur with positive probability. The quantity $H(X)$ is called the \emph{entropy}\index{entropy}\index{H@\textit{H} \see{entropy}} of the random variable $X$. Since entropy is calculated with respect to $\log_2$ in \eqref{eq:short:entropy}, the unit of information is called \emph{bits}\index{bits}. Entropy has the property (see Problem~\ref{prob:short:entropyproperties})
\begin{align}
&0\leq H(X)\leq\log_2|\setx|.\label{eq:short:entropyproperties}
\end{align}
We plot entropy versus probability of error. The plot is displayed in Figure~\ref{fig:bsc_r_pe}. We now see that there is a trade-off between the amount of information that we transmit over the channel and the probability of error. For $P_X(0)=\frac{1}{2}$, information is maximized, but also the probability of error takes its greatest value. This observation is discouraging. It suggest that the only way to increase reliability is to decrease the amount of transmitted information. Fortunately, this is not the end of the story, as we will see next. 
\begin{figure}
\centering
\footnotesize
%\tikzsetnextfilename{bsc_px_pe}
%\input{figures/bsc_px_pe}
\includegraphics{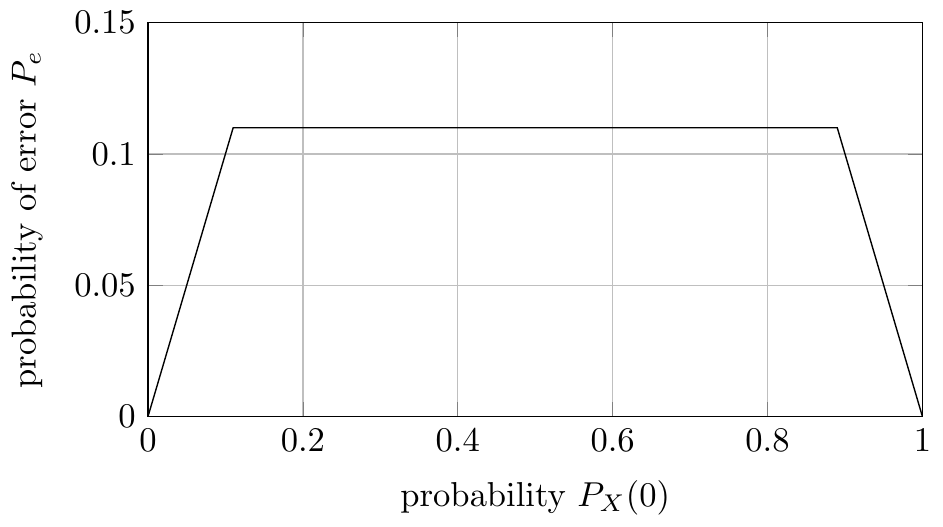}
\caption{Channel input probability $P_X(0)$ versus probability of error $P_e$ of a MAP decoder for a BSC with crossover probability $\delta=0.11$.}
\label{fig:bsc_px_pe}
\end{figure}
\begin{figure}
\centering
\footnotesize
%\tikzsetnextfilename{bsc_r_pe}
%\input{figures/bsc_r_pe}
\includegraphics{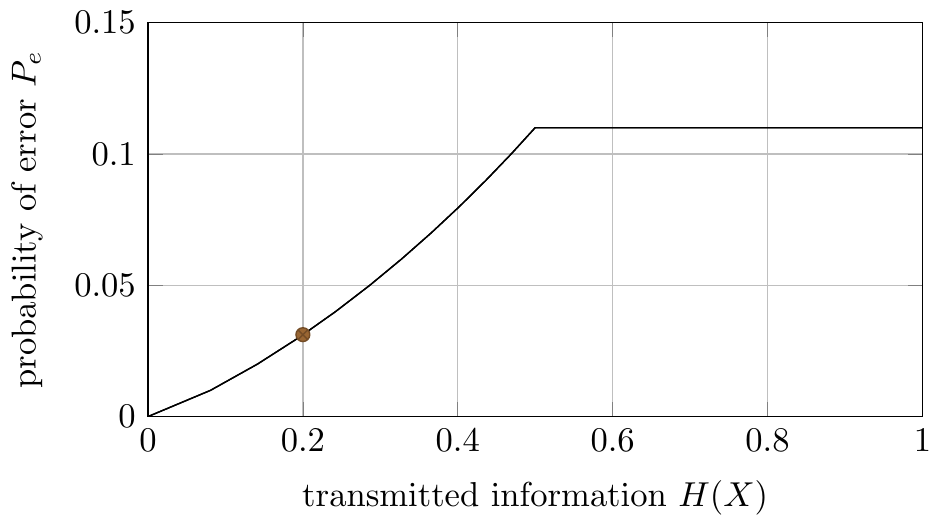}
\caption{Transmitted information $H(X)$ versus probability of error $P_e$ of a MAP decoder for a BSC with crossover probability $\delta=0.11$.}
\label{fig:bsc_r_pe}
\end{figure}
\subsection{Probability of Error, Information Rate, \text{Block Length}}

In Figure~\ref{fig:bsc_r_pe}, we see that transmitting $H(X)=0.2$ bits per channel use over the BSC results in $P_e=0.03$. We can do better than that by using the channel more than once. Suppose we use the channel $n$ times. The parameter $n$ is called the \emph{block length}\index{block length}\index{n@\textit{n} \see{block length}}. The input consists in $n$ random variables $X^n=X_1X_2\dotsb X_n$ and the output consists in $n$ random variables $Y^n=Y_1Y_2\dotsb Y_n$. The joint distribution of the random experiment that corresponds to $n$ channel uses is
\begin{align}
P_{X^nY^n}(a^nb^n)&=P_{X^n}(a^n)P_{Y^n|X^n}(b^n|a^n)\\
&=P_{X^n}(a^n)\prod_{i=1}^n P_{Y|X}(b_i|a_i).
\end{align}
In the last line, we assume that conditioned on the inputs, the outputs are independent, i.e.,
\begin{align}
P_{Y^n|X^n}(b^n|a^n)=\prod_{i=1}^n P_{Y|X}(b_i|a_i).
\end{align}
Discrete channels with this property are called \emph{discrete memoryless channels}\index{discrete memoryless channel} (DMC)\index{DMC \see{discrete memoryless channel}}. To optimally guess blocks of $n$ inputs from blocks of $n$ outputs, we define a \emph{super channel}\index{super channel} $P_{X'|Y'}$ with input $X':=X^n$ and output $Y':=Y^n$ and then use our MAP decoder for the super channel. The \emph{information rate}\index{information rate}\index{R@\textit{R} \see{information rate}} $R$ is defined as the information we transmit per channel use, which is given by
\begin{align}
R:=\frac{H(X^n)}{n}.
\end{align}
For a fixed block length $n$, we can trade probability of error for information rate by choosing the joint distribution $P_{X^n}$ of the input appropriately. 

From now on, we restrict ourselves to special distributions $P_{X^n}$. First, we define a \emph{block code}\index{block code}\index{C@\bc \see{block code}} as the set of input vectors that we choose with non-zero probability, i.e.,
\begin{align}
\setc:=\supp P_{X^n}\subseteq \setx^n.
\end{align}
The elements $c^n\in\setc$ are called \emph{code words}\index{code word}. Second, we let $P_{X^n}$ be a uniform distribution on $\setc$, i.e,
\begin{align}
P_{X^n}(a^n)=\begin{cases}
\frac{1}{|\setc|}&a^n\in\setc\\
0&\text{otherwise}.
\end{cases}
\end{align}
The rate can now be written as
\begin{align}
R=&\frac{H(X^n)}{n}\\
=&\frac{\displaystyle\sum_{a^n\in\supp P_{X^n}} P_{X^n}(a^n)\log_2\frac{1}{P_X(a^n)}}{n}\\
=&\frac{\displaystyle\sum_{c^n\in\setc} \frac{1}{|\setc|}\log_2|\setc|}{n}\\
=&\frac{\log_2|\setc|}{n}.
\end{align}
For a fixed block-length $n$, we can now trade probability of error for information rate via the code $\setc$. First, we would decide on the rate $R$ and then we would choose among all codes of size $2^{nR}$ the one that yields the smallest probability of error.
\begin{example}
For the BSC with crossover probability $\delta=0.11$, we search for the best codes for block length $n=2,3,\dotsc,7$. For complexity reasons, we only evaluate code sizes $|\setc|=2,3,4$. For each pair $(n,|\setc|)$, we search for the code with these parameters that has the lowest probability of error under MAP decoding. The results are displayed in Figure~\ref{fig:optimal_codes_bsc}. 

In Figure~\ref{fig:bsc_r_pe}, we observed for the code $\{0,1\}$ of block length $1$ the information rate $0.2$ and the error probability $0.03$. We achieved this by using the input distribution $P_X(0)=1-P_X(1)=0.034$. This can be improved upon by using the code
\begin{align}
\setc=\{00000,11111\}
\end{align}
with a uniform distribution. The block length is $n=5$ and the rate is
\begin{align}
\frac{\log_2|\setc|}{n}=\frac{1}{5}=0.2.
\end{align}
The resulting error probability is $P_e=0.0112$, see Figure~\ref{fig:optimal_codes_bsc}. Thus, by increasing the block length from $1$ to $5$, we could lower the probability of error from $0.03$ to $0.0112$. In fact, the longer code transmits $0.2\cdot 5$ information bits correctly with probability $1-0.0112$, while the short code only transmits $0.2$ information bits correctly with probability $1-0.03$.

We want to compare the performance of codes with different block length. To this end, we calculate for each code in Figure~\ref{fig:optimal_codes_bsc} the probability $P_\text{cb}$ that it transmits $840$ bits correctly, when it is applied repeatedly. The number $840$ is the least common multiple of the considered block lengths $2,3,\dotsc,8$. For a code with block length $n$ and error probability $P_e$, the probability $P_\text{cb}$ is calculated by
\begin{align}
P_\text{cb}=(1-P_e)^\frac{840}{n}.
\end{align}
The results are displayed in Figure~\ref{fig:optimal_codes_bsc_pb}. Three codes are marked by a circle. They exemplify that by increasing the block length from $4$ to $6$ to $7$, both probability of correct transmission \emph{and} information rate are increased.
\end{example}
\begin{figure}
\footnotesize
%\tikzsetnextfilename{optimal_codes_bsc}
%\input{figures/optimal_codes_bsc}
\includegraphics{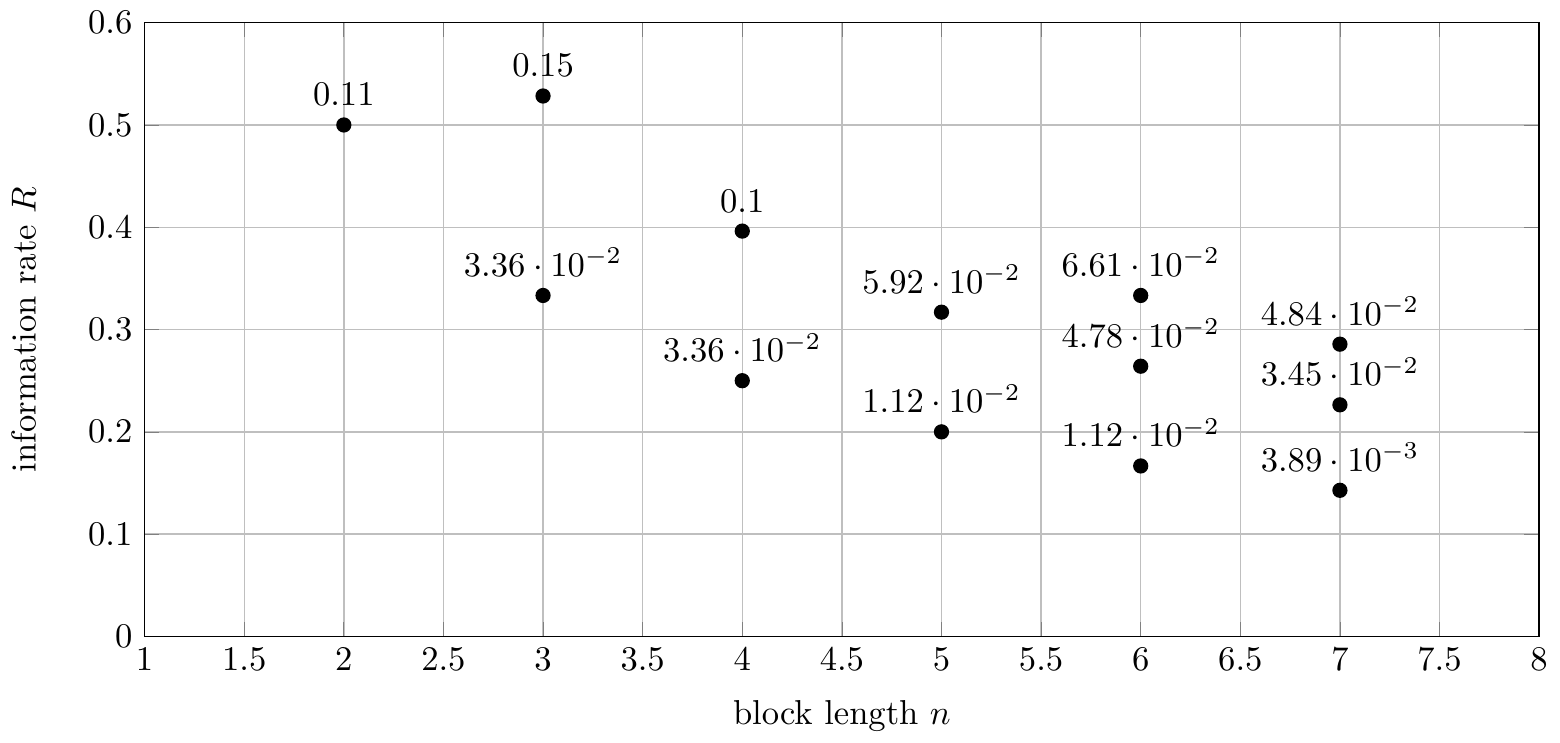}
\caption{Optimal codes for the BSC with crossover probability $\delta=0.11$. Block length and rate are displayed in horizontal and vertical direction, respectively. Each code is labeled by the achieved error probability $P_e$.}
\label{fig:optimal_codes_bsc}
\end{figure}
\begin{figure}
\footnotesize
%\tikzsetnextfilename{optimal_codes_bsc_pb}
%\input{figures/optimal_codes_bsc_pb}
\includegraphics{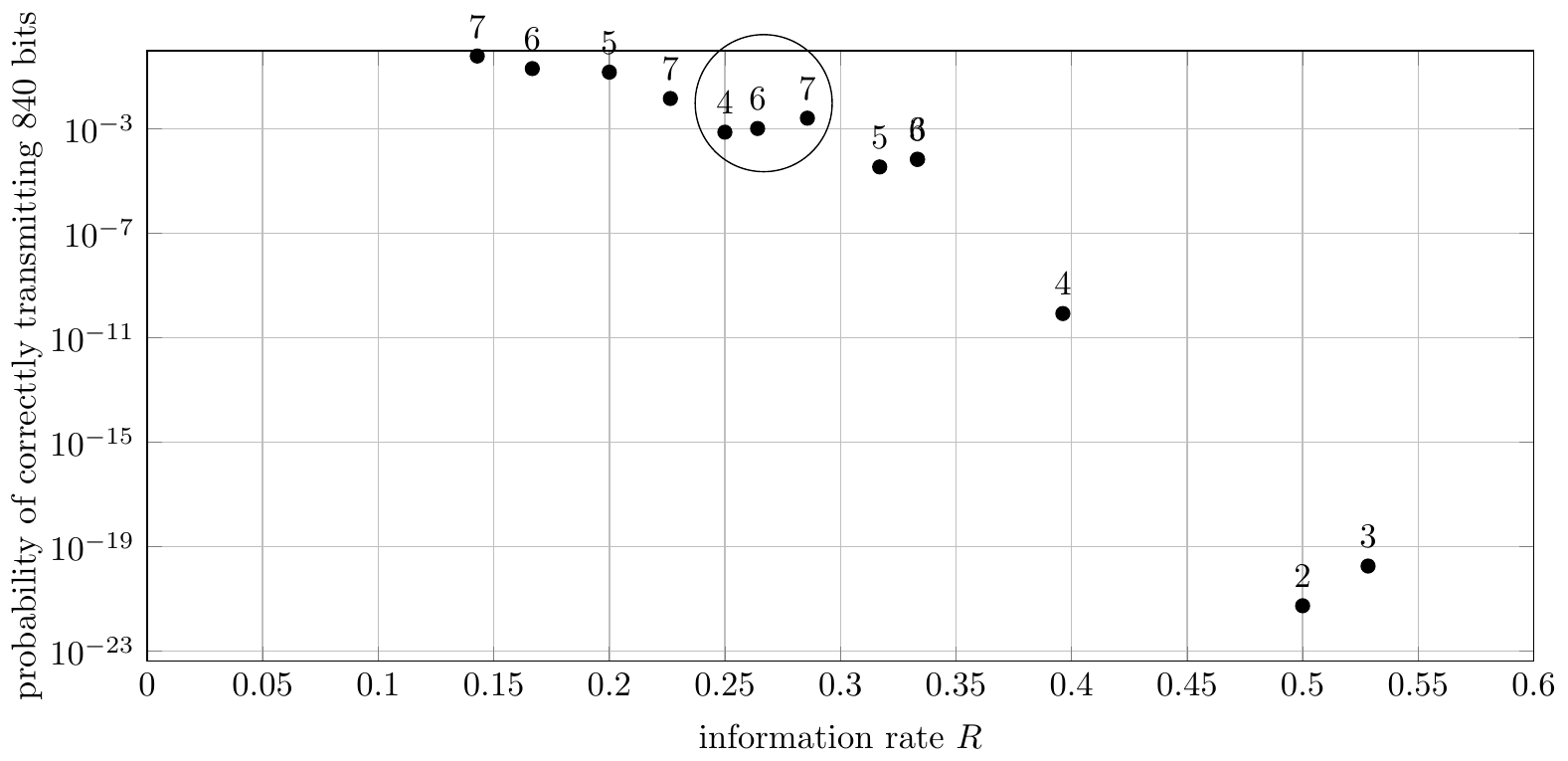}
\caption{Optimal codes for the BSC with crossover probability $\delta=0.11$. In horizontal direction, the transmission rate is displayed. For fair comparison of codes of different length, the probability to transmit 840 bits correctly is displayed in vertical direction. Each code point is labeled by its block length.}
\label{fig:optimal_codes_bsc_pb}
\end{figure}
\subsection{ML Decoder}

Let the input distribution be uniform on the code $\setc$, i.e.,
\begin{align}
P_X(a)=\begin{cases}
\frac{1}{|\setc|},&a\in\setc\\
0,&\text{otherwise}.
\end{cases}\label{eq:short:uniformpxonc}
\end{align}
When \eqref{eq:short:uniformpxonc} holds, the MAP rule can be written as
\begin{align}
f_\text{MAP}(b)&\oeq{a}\argmax_{a\in\setx} P_{X}(a)P_{Y|X}(b|a)\label{eq:short:mapdecoder}\\
&\oeq{b}\argmax_{c\in\setc}P_{Y|X}(b|c)\label{eq:short:mldecoder1}
\end{align}
where we used \eqref{eq:short:mapapriori} in (a) and where (b) is shown in Problem~\ref{prob:short:mldecoder}. 
Note that the maximization in \eqref{eq:short:mapdecoder} is over the whole input alphabet while the maximization in \eqref{eq:short:mldecoder1} is over the code. This shows that when \eqref{eq:short:uniformpxonc} holds, knowing the a priori information $P_X$ is equivalent to knowing the code $\setc$. The rule in \eqref{eq:short:mldecoder1} resembles the ML rule \eqref{eq:short:mlrule}, with the difference that the likelihood is maximized over $\setc$. In accordance with the literature, we define the \emph{ML decoder}\index{ML decoder}\index{dml@\dml \see{ML decoder}} by
\begin{align}
\dml(b):=\argmax_{c\in\setc}P_{Y|X}(b|c)\label{eq:short:mldecoder}
\end{align}
Whenever we speak of an ML decoder in the following chapters, we mean \eqref{eq:short:mldecoder}.

\clearpage

\section{Problems}

\myproblem \label{prob:short:entropyproperties} Let $X$ be a random variable with the distribution $P_X$ on $\mathcal{X}$. Show that
\begin{align*}
0\overset{\text{(a)}}{\leq}H(X)\overset{\text{(b)}}{\leq}\log_2|\mathcal{X}|.
\end{align*}
For which distributions do we have equality in (a) and (b), respectively?

\myproblem \label{prob:short:mldecoder}
Consider a channel $P_{Y|X}$ with input alphabet $\mathcal{X}$ and output alphabet $\mathcal{Y}$. Let $\mathcal{C}\subseteq\mathcal{X}$ be a code and let the input be distributed according to
\begin{align}
P_X(a)=\begin{cases}
\frac{1}{|\mathcal{C}|}&a\in\mathcal{C}\\
0&\text{otherwise}.
\end{cases}
\end{align}
\begin{enumerate}
\item Show that decoding by using the MAP rule to choose a guess from the alphabet $\mathcal{X}$ is equivalent to using the ML rule to choose a guess from the code $\mathcal{C}$.\\\emph{\textbf{Remark}: This is why a MAP decoder for an input that is uniformly distributed over the code is usually called an ML decoder. We also use this convention.}
\end{enumerate}

\myproblem Consider a BEC with erasure probability $\epsilon=0.1$. The input distribution is $P_X(0)=1-P_X(1)=0.3$. 
\begin{enumerate}
\item Calculate the joint distribution $P_{XY}$ of input $X$ and output $Y$.
\item What is the probability that an erasure is observed at the output?
\end{enumerate}

\myproblem Consider a BSC with crossover probability $\delta=0.2$. We observe the output statistics  $P_Y(0)=0.26$ and $P_Y(1)=0.74$.
\begin{enumerate}
\item Calculate the input distribution $P_X$.
\end{enumerate}

\myproblem\label{short:prob:superchannel} A channel $P_{Y|X}$ with input alphabet $\mathcal{X}=\{1,2,\dotsc,|\mathcal{X}|\}$ and output alphabet $\mathcal{Y}=\{1,2,\dotsc,|\mathcal{Y}|\}$ can be represented by a stochastic matrix $\boldsymbol{H}$ with $|\mathcal{X}|$ rows and $|\mathcal{Y}|$ columns that is defined by
\begin{align}
\boldsymbol{H}_{ij}=P_{Y|X}(j|i).
\end{align}
In particular, the $i$th row contains the distribution $P_{Y|X}(\cdot|i)$ on the output alphabet when the input is equal to $i$ and the entries of each row sum up to one. 
\begin{enumerate}
\item What is the stochastic matrix that describes a BSC with crossover probability $\delta$?
\item Suppose you use the BSC twice. What is the stochastic matrix that describes the channel from the length 2 input vector to the length 2 output vector?
\item Write a function in Matlab that calculates the stochastic matrix of $n$ BSC uses.
\end{enumerate}

\myproblem Consider the code
\begin{align}
\mathcal{C}=\{\underbrace{00\dotsb0}_{n\text{ times}},\underbrace{11\dotsb 1}_{n\text{ times}}\}.
\end{align}
Such a code is called a \emph{repetition code}. Each codeword is transmitted equally likely. We transmit over a BSC.  
\begin{enumerate}
\item What is the blocklength and the rate of the code?
\item For crossover probabilities $\delta=0.1,0.2$ and blocklength $n=1,2,3,4,5$, calculate the error probability of an ML decoder.
\item Plot rate versus error probability.
\end{enumerate}
\emph{Hint:} You may want to use your Matlab function from Problem~\ref{short:prob:superchannel}.

\myproblem  Consider a BSC with crossover probability $\delta$. For blocklength $n=5$, we want to find the best code with $3$ code words, under the assumption that all three code words are transmitted equally likely.
\begin{enumerate}
\item How many different codes are there?
\item Write a Matlab script that finds the best code by exhaustive search. What are the best codes for $\delta=0.1,0.2$ and what are the error probabilities?
\item Add rate and error probability to the plot from Problem~\ref{short:prob:bestcodes}.
\end{enumerate}

\myproblem Two random variables $X$ and $Y$ are stochastically independent if
\begin{align*}
P_{XY}(ab)=P_X(a)P_X(b),\quad\text{for all }a\in\mathcal{X},b\in\mathcal{Y}.
\end{align*}
Consider a binary repetition code with block length $n=4$ and let the input distribution be given by
\begin{align*}
P_{X^4}(a^4)=\begin{cases}
\frac{1}{2}&a^4\in\{0000,1111\}\\
0&\text{otherwise}.
\end{cases}
\end{align*}
Show that the entries $X_2$ and $X_4$ of the input $X^4=X_1X_2X_3X_4$ are stochastically dependent.

\myproblem\label{short:prob:bestcodes} Consider a BSC with crossover probability $\delta=0.11$. You are asked to design a transmission system that operates at an information rate of $R=0.2$ bits per channel use. You decide for evaluating the performance of repetition codes.
\begin{enumerate}
\item For block lengths $n=1,2,3,\dotsc$, calculate the input distribution $P_{X^n}$ for which the information rate is equal to $0.2$. What is the maximum block length $n_{\max}$ for which you can achieve $R=0.2$ with a repetition code?
\item For each $n=1,2,3,\dotsc,n_{\max}$, calculate the probability of error $P_e$ that is achieved by a MAP decoder and plot $P_e$ versus the block length $n$.
\item For fair comparison, calculate for each $n=1,2,3,\dotsc,n_{\max}$ the probability $P_\text{cb}$ of correctly transmitting $R\cdot K$ bits, where $R=0.2$ and where $K$ is the least common multiple of $n=1,2,3,\dotsc,n_{\max}$.

\emph{Hint:} First show that for each $n$, $P_\text{cb}=(1-P_e)^\frac{K}{n}$ where $P_e$ is the error probability of the block length $n$ code under consideration.
\item For each block length $n=1,2,3,\dotsc,n_{\max}$, plot information rate versus probability of error $1-P_\text{cb}$ for rates of $0,0.01,0.02,\dotsc,0.2$. Does $n>1$ improve the rate-reliability trade-off?	
\end{enumerate}

\myproblem The code $\setc=\{110,011,101\}$ is used for transmission over a binary erasure channel with input $X$, output $Y$ and erasure probability $\epsilon=\frac{1}{2}$. Each code word is used equally likely.

\begin{enumerate}
\item Calculate the block length and the rate in bits/channel use of the code $\setc$.
\item Suppose the codeword $110$ was transmitted. Calculate the distribution $P_{Y^3|X^3}(\cdot|110)$. 
\item Calculate the probability of correct decision of an ML decoder, given that $110$ was transmitted, i.e., calculate
\begin{align*}
\Pr(f_\text{ML}(Y^3)=110|X^3=110).
\end{align*}
\end{enumerate}

\myproblem Consider a channel with input alphabet $\setx=\{a,b,c\}$ and output alphabet $\sety=\{1,2,3\}$. Let $X$ and $Y$ be random variables with distribution $P_{XY}$ on $\setx\times\sety$. The probabilities are given by
\begin{center}
\begin{tabular}{rl}
$xy$&$P_{XY}(xy)$\\\hline
$a1$&$0.02$\\
$a2$&$0.02$\\
$a3$&$0$\\
$b1$&$0$\\
$b2$&$0.1$\\
$b3$&$0.15$\\
$c1$&$0.31$\\
$c2$&$0$\\
$c3$&$0.4$
\end{tabular}
\end{center}
\begin{enumerate}
\item Calculate the input distribution $P_X$.
\item Calculate the conditional distribution $P_{Y|X}(\cdot|i)$ on $\sety$ for $i=a,b,c$.
\item A decoder uses the function
\begin{align*}
f\colon\sety&\to\setx\\
1&\mapsto a\\
2&\mapsto b\\
3&\mapsto c.
\end{align*}
What is the probability of decoding correctly?
\item Suppose $X=a$ was transmitted. Using $f$, what is the probability of erroneous decoding? What are the respective probabilities of error if $X=b$ and $X=c$ are transmitted?
\item Suppose a MAP decoder is used. Calculate $f_\text{MAP}(1),f_\text{MAP}(2)$, and $f_\text{MAP}(3)$. With which probability does the MAP decoder decide correctly?
\item Suppose an ML decoder is used. Calculate $f_\text{ML}(1),f_\text{ML}(2)$, and $f_\text{ML}(3)$. With which probability does the ML decoder decide correctly?
\end{enumerate}

\myproblem
The binary input $X$ is transmitted over a channel and the binary output $Y=X+Z$ is observed at the receiver. The noise term $Z$ is also binary and addition is in $\fieldtwo$. Input $X$ and noise term $Z$ are independent. The input distribution is $P_X(0)=1-P_X(1)=1/4$ and the output distribution is $P_Y(0)=1-P_Y(1)=3/8$.
\begin{enumerate}
\item Calculate the noise distribution $P_Z$.
\item Is this channel a BSC?
\item Suppose an ML decoder is used. Calculate the ML decisions for $Y=0$ and $Y=1$, i.e., calculate $f_\text{ML}(0)$ and $f_\text{ML}(1)$. With which probability does the ML decoder decide correctly on average?
\item Is there a decoding function that achieves an average error probability that is strictly lower than the average error probability of the ML decoder?
\end{enumerate}

\myproblem Consider the following channel with input alphabet $\setx=\{0,1,2\}$ and output alphabet $\sety=\{0,1,2\}$. Each arrow indicates a transition, which occurs with the indicated probability. The input is distributed according to $P_{X}(0)=P_X(1)=\frac{1}{2}$.	
\begin{center}
\footnotesize
%\tikzsetnextfilename{typewriter_channel}
%\input{figures/typewriter_channel}
\includegraphics{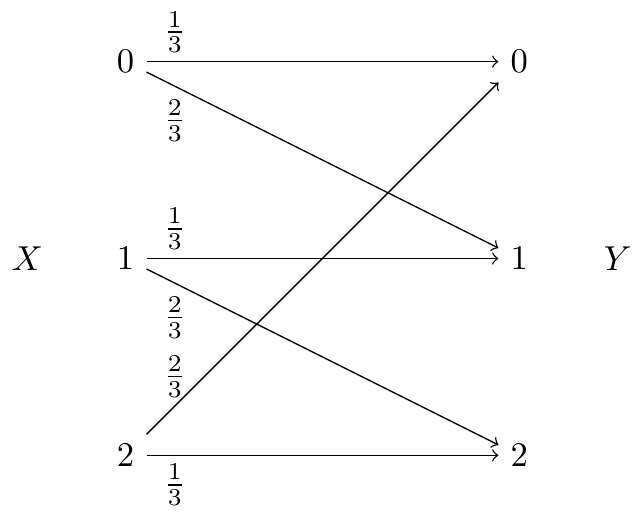}

\end{center}
\begin{enumerate}
\item Calculate the output distribution $P_Y$.
\item Suppose an ML decoder is used. Calculate the ML decisions for $Y=0$, $Y=1$, and $Y=2$.
\item Calculate the average probability of error of the ML decoder.
\item Show that for the considered scenario, the MAP decoder performs strictly better than the ML decoder. 
\item Show that the considered channel is equivalent to a ternary channel with an additive noise term $Z$.
\end{enumerate}
%
%
%

% !TEX root = cc-v1.tex
\chapter{Linear Block Codes}
\label{chap:linear}

In Chapter~\ref{chap:short}, we searched for codes for the BSC that perform well when an ML decoder is used. From a practical point of view, our findings were not very useful. First, the exhaustive search for good codes became infeasible for codes with more than $4$ code words and block lengths larger than $7$. Second, our findings suggested that increasing the block length further would lead to codes with better performance in terms of information rate and probability of error. Suppose we want a binary code with rate $1/2$ and block length $n=512$. Then, there are
\begin{align*}
|\setc|=2^{nR}=2^{256}
\end{align*}
different code words, each of size $512/8=64$ bytes. One gigabyte is $2^{30}$ bytes, so to store the whole code, we need
\begin{align}
2^{256}\cdot 64 \text{ bytes}=2^{262} \text{ bytes}=2^{232} \text{ gigabytes}.
\end{align}
To store this amount of data is impossible. Furthermore, this is the amount of data we need to store \emph{one} code, let alone the search for the best code with the desired parameters. This is the reason why we need to look at codes that have more structure so that they have a more compact description. \emph{Linear codes} have more structure and that is why we are going to study them in this chapter.

\section{Basic Properties}

Before we can define linear block codes, we first need to state some definitions and results of linear algebra.

\subsection{Groups and Fields}

\begin{definition}
A \emph{group}\index{group} is a set of elements $\setg=\{a,b,c,\dotsc\}$ together with an operation $\circ$ for which the following axioms hold:
\begin{enumerate}
\item\emph{Closure:} for any $a\in \setg$, $b\in \setg$, the element $a\circ b$ is in $\setg$.
\item\emph{Associative law:} for any $a,b,c\in \setg$, $(a\circ b)\circ c=a\circ(b\circ c)$.
\item\emph{Identity:} There is an identity element $0$ in $\setg$ for which $a\circ 0=0\circ a=a$ for all $a\in \setg$.
\item\emph{Inverse:} For each $a\in \setg$, there is an inverse $-a$ such that $a\circ(-a)=0$.
\end{enumerate} 
If $a\circ b=b\circ a$ for all $a,b\in \setg$, then $\setg$ is called \emph{commutative}\index{commutative group} or \emph{Abelian}\index{Abelian group}.
\end{definition}

\begin{example}\label{ex:linear:group}
Consider the binary set $\{0,1\}$ with the modulo-$2$ addition and multiplication specified by
\begin{center}
\begin{tabular}{c|cc}
+&0&1\\\hline
0&0&1\\
1&1&0
\end{tabular}
\hspace{2cm}
\begin{tabular}{c|cc}
$\cdot$&0&1\\\hline
0&0&0\\
1&0&1
\end{tabular}
\end{center}
It can be verified that $(+,\{0,1\})$ is an Abelian group. However, $(\cdot,\{0,1\})$ is not a group. This can be seen as follows. The identity with respect to $\cdot$ in $\{0,1\}$ is $1$, since $0\cdot 1=0$ and $1\cdot 1=1$. However, $0\cdot 0=0$ and $0\cdot 1=0$, i.e., the element $0$ has no inverse in $\{0,1\}$.
\end{example}
\begin{example}
The set of \emph{integers}\index{integers}\index{Z@\setz \see{integers}} $\setz=\{\dotsc,-2,-1,0,1,2,3,\dotsc\}$ together with the usual addition is an Abelian group. The set of \emph{positive integers}\index{positive integers} $\{1,2,3,\dotsc\}$, which is also called the set of \emph{natural numbers}\index{natural numbers}, is \emph{not} a group.
\end{example}

\begin{definition}
A \emph{field} is a set $\myfield$ of at least two elements, with two operations $+$ and $\cdot$, for which the following axioms are satisfied:
\begin{enumerate}
\item The set $\myfield$ forms an Abelian group (whose identity element is called $0$) under the operation $+$. $(\myfield,+)$ is called the \emph{additive group}\index{additive group of a field} of $\myfield$.
\item The operation $\cdot$ is associative and commutative on $\myfield$. The set $\myfield^*=\myfield\setminus\{0\}$ forms an Abelian group (whose identity element is called $1$) under the operation $\cdot$. $(\myfield\setminus\{0\},\cdot)$ is called the \emph{multiplicative group}\index{multiplicative group of a field} of $\myfield$.
\item \emph{Distributive law:} For all $a,b,c\in \myfield$, $(a+ b)\cdot c=(a\cdot c)+ (b\cdot c)$.
\end{enumerate}
\end{definition}

\begin{example}
Consider $\{0,1\}$ with ``$+$'' and ``$\cdot$'' as defined in Example~\ref{ex:linear:group}. $(\{0,1\},+)$ forms an Abelian group with identity $0$. $(\{1\},\cdot)$ is an Abelian group with identity $1$, so $(\{0,1\},+,\cdot)$ is a field. We denote it by $\fieldtwo$.
\end{example}
\begin{example}
The integers $\setz$ with the modulo-3 addition and multiplication specified by \begin{center}
\begin{tabular}{c|ccc}
+&0&1&2\\\hline
0&0&1&2\\
1&1&2&0\\
2&2&0&1
\end{tabular}
\hspace{2cm}
\begin{tabular}{c|ccc}
$\cdot$&0&1&2\\\hline
0&0&0&0\\
1&0&1&2\\
2&0&2&1\\
\end{tabular}
\end{center}
form a field, which we denote by $\field{F}_3$.
\end{example}
We study finite fields in detail in Section~\ref{sec:finite}.

\subsection{Vector Spaces}
\begin{definition}
A \emph{vector space}\index{vector space} $\setv$ over a field $\myfield$ is an Abelian group $(\setv,+)$ together with an additional operation ``$\cdot$'' (called the \emph{scalar multiplication}\index{scalar multiplication}) 
\begin{align}
\myfield\times \setv&\to \setv\\
(a,\vecv)&\mapsto a\cdot \vecv
\end{align}
that satisfies the following axioms:
\begin{enumerate}
\item $(a\cdot b)\cdot \vecv = a\cdot(b\cdot \vecv)$ for all $\vecv\in \setv$ and for all $a,b\in \myfield$.
\item $(a+b)\cdot \vecv = a\cdot \vecv+b\cdot \vecv$ for all $\vecv\in \setv$ and for all $a,b\in \myfield$.
\item $a\cdot(\vecv+\vecw)=a\cdot \vecv+ a\cdot \vecw$ for all $\vecv,\vecw\in \setv$ and for all $a\in \myfield$.
\item $1\cdot \vecv=\vecv$ for all $\vecv\in \setv$.
\end{enumerate}
Elements of $\setv$ are called \emph{vectors}\index{vector}. Elements of $\myfield$ are called \emph{scalars}\index{scalar}.
\end{definition}
\begin{example}\label{ex:linear:fn}
Let $n$ be a positive integer. The $n$-fold Cartesian product
\begin{align}
\myfield^n:=\underbrace{\myfield\times\myfield\times\dotsb\times\myfield}_{n\text{ times}}
\end{align}
with the operations
\begin{align}
(v_1,\dotsc,v_n)+(w_1,\dotsc,w_n) &:= (v_1+w_1,\dotsc,v_n+w_n)\\
a\cdot (v_1,\dotsc,v_n)&:=(a\cdot v_1,\dotsc,a\cdot v_n)
\end{align}
is the most important example of a vector space.
\end{example}

In the following definitions, $\setv$ is a vector space over $\myfield$, $\setg\subseteq \setv$ is a set of vectors and $n$ is a finite positive integer. 
\begin{definition}
A vector $\vecv\in \setv$ is \emph{linearly dependent}\index{linearly dependent} of the vectors in $\setg$ if there exist finitely many scalars $a_i\in \myfield$ and appropriate vectors $\vecw_i\in \setg$ such that
\begin{align}
\vecv=\sum_{i=1}^n a_i \vecw_i.
\end{align}
\end{definition}

\begin{definition}
$\setg$ is a \emph{generating set}\index{generating set} of $\setv$, if every vector $\vecv\in \setv$ is linearly dependent of $\setg$.
\end{definition}

\begin{definition}
The vectors $\vecv_1,\dots,\vecv_n$ are \emph{linearly independent}\index{linearly independent}, if for all $a_i\in \myfield$,
\begin{align}
\sum_{i=1}^n a_i \vecv_i=\veczero\;\Rightarrow\;\text{all $a_i$ are equal to zero}.
\end{align}
\end{definition}

\begin{definition}
The vectors $\vecv_1,\dotsc,\vecv_n$ form a \emph{basis}\index{basis} of $\setv$ if they are linearly independent and $\{\vecv_1,\dotsc,\vecv_n\}$ is a generating set of $\setv$.
\end{definition}

\begin{proposition}
A non-empty subset $\setu\subseteq \setv$ is itself a vector space if
\begin{align}
\vecv,\vecw\in \setu\;\Rightarrow\;a\cdot \vecv+b\cdot \vecw\in \setu,\quad\forall a,b\in \myfield.
\end{align}
$\setu$ is then called a \emph{subspace}\index{subspace} of $\setv$.
\end{proposition}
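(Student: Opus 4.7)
The plan is to verify the axioms of a vector space for $\setu$ under the operations inherited from $\setv$. The vast majority of the axioms --- associativity and commutativity of $+$, the scalar multiplication identities $(a\cdot b)\cdot\vecv=a\cdot(b\cdot\vecv)$, $(a+b)\cdot\vecv=a\cdot\vecv+b\cdot\vecv$, $a\cdot(\vecv+\vecw)=a\cdot\vecv+a\cdot\vecw$, and $1\cdot\vecv=\vecv$ --- are universally quantified over $\setv$, so they hold automatically for elements of $\setu\subseteq\setv$. Hence the real work is to check that $(\setu,+)$ is itself an Abelian group and that scalar multiplication maps $\myfield\times\setu$ into $\setu$. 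Everything reduces to picking suitable values of $a,b$ in the hypothesis $a\cdot\vecv+b\cdot\vecw\in\setu$.

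Concretely, I would proceed as follows. First, I would establish closure of the two operations on $\setu$: choosing $a=b=1$ gives $\vecv+\vecw\in\setu$ for all $\vecv,\vecw\in\setu$, and choosing $b=0$ (and any $\vecw\in\setu$, which exists by non-emptiness) gives $a\cdot\vecv\in\setu$ for all $a\in\myfield$ and $\vecv\in\setu$. Next, I would produce the additive identity: using the elementary identity $0\cdot\vecv=\veczero$ that holds in any vector space (provable from $0\cdot\vecv=(0+0)\cdot\vecv=0\cdot\vecv+0\cdot\vecv$), I pick any $\vecv\in\setu$ and set $a=b=0$ to obtain $\veczero=0\cdot\vecv+0\cdot\vecv\in\setu$. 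Finally, for inverses, I would invoke $(-1)\cdot\vecv=-\vecv$ (another standard consequence of the axioms, since $\vecv+(-1)\cdot\vecv=(1+(-1))\cdot\vecv=0\cdot\vecv=\veczero$) and set $a=-1$, $b=0$ to conclude $-\vecv\in\setu$ for every $\vecv\in\setu$.

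With closure under $+$, existence of $\veczero$, and existence of inverses established, and with associativity and commutativity inherited from $\setv$, $(\setu,+)$ is an Abelian group; combined with the closure of scalar multiplication and the inherited scalar-multiplication axioms, this shows $\setu$ is itself a vector space over $\myfield$.

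There is no real obstacle here; the only minor subtlety is that the proof quietly uses the two general facts $0\cdot\vecv=\veczero$ and $(-1)\cdot\vecv=-\vecv$, which are not listed as axioms but follow from them. I would either prove these in one line inside the argument or mention them as standard consequences before applying them.
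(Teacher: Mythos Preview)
Your proof is correct and follows the standard route: inherit the universally-quantified axioms from $\setv$, then specialise $a,b$ in the closure hypothesis to obtain closure under $+$, closure under scalar multiplication, $\veczero\in\setu$, and $-\vecv\in\setu$. The two auxiliary facts $0\cdot\vecv=\veczero$ and $(-1)\cdot\vecv=-\vecv$ are indeed needed and your one-line justifications for them are fine.

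Note, however, that the paper does not actually supply a proof of this proposition; it is stated as a standard fact and left unproved (the sentence ``We state the following theorem without giving a proof'' that follows refers to the subsequent theorem on bases, not to this proposition). So there is nothing in the paper to compare your argument against beyond observing that what you have written is the expected textbook verification.
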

We state the following theorem without giving a proof.
\begin{theorem}\label{theo:linear:vs}
Let $\setv$ be a vector space over $\myfield$ with a basis $\setb$ and $n=|\setb|<\infty$. Any set of $n$ linearly independent vectors in $\setv$ forms a basis of $\setv$. The number $n=|\setb|$ is called the \emph{dimension}\index{dimension} of $\setv$. 
\end{theorem}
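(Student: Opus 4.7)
The plan is to prove the result by iterating a one-step exchange principle (a Steinitz-type lemma), progressively replacing elements of $\setb$ by the given linearly independent vectors until the latter form a basis themselves.

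First I would establish the one-step lemma: if $\{\vecb_1,\dots,\vecb_n\}$ is any basis of $\setv$ and $\vecv\in\setv$ is nonzero, then expanding $\vecv=\sum_{i=1}^n a_i\vecb_i$ there must exist an index $j$ with $a_j\neq 0$, and the modified set $\setb'=\{\vecb_1,\dots,\vecb_{j-1},\vecv,\vecb_{j+1},\dots,\vecb_n\}$ is again a basis of $\setv$. Solving for $\vecb_j=a_j^{-1}\bigl(\vecv-\sum_{i\neq j}a_i\vecb_i\bigr)$ shows that $\vecb_j$ lies in the span of $\setb'$, so $\setb'$ generates $\setv$. For independence, suppose $c\vecv+\sum_{i\neq j}c_i\vecb_i=\veczero$; substituting the expansion of $\vecv$ and using independence of the original basis forces $ca_j=0$ and $ca_i+c_i=0$ for $i\neq j$, whence $c=0$ (since $a_j\neq 0$) and then all $c_i=0$.

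Next, given a linearly independent set $\{\vecv_1,\dots,\vecv_n\}$, I would iterate the one-step lemma to build bases $\setb_0,\setb_1,\dots,\setb_n$ subject to the invariant that $\setb_k$ consists of $\vecv_1,\dots,\vecv_k$ together with $n-k$ vectors of the original basis $\setb_0=\setb$. To pass from $\setb_{k-1}$ to $\setb_k$, expand $\vecv_k$ in $\setb_{k-1}$: the coefficients attached to the surviving original basis vectors cannot all vanish, for otherwise $\vecv_k$ would be a linear combination of $\vecv_1,\dots,\vecv_{k-1}$, contradicting the assumed independence of $\{\vecv_1,\dots,\vecv_n\}$. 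The one-step lemma, applied to such a nonzero coefficient, then produces $\setb_k$. After $n$ steps, $\setb_n=\{\vecv_1,\dots,\vecv_n\}$ is a basis, as claimed.

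The main obstacle is the inductive bookkeeping: one must ensure at each step that a nonzero coefficient is attached to an original $\vecb$-vector (not to an already inserted $\vecv_\ell$), so that the exchange preserves the vectors placed in earlier steps. This is exactly where the \emph{full} linear independence of $\{\vecv_1,\dots,\vecv_n\}$ is used, not merely the fact that each $\vecv_k$ is nonzero. As for the second sentence of the theorem, the label ``dimension'' is meaningful only if every basis of $\setv$ has the same cardinality; this follows from the theorem itself, since a hypothetical second basis of size $m\neq n$ would provide $\min(m,n)$ linearly independent vectors inside a space with a basis of size $\min(m,n)$, which by the theorem already generate $\setv$, forcing the remaining vectors of the larger basis into their span and contradicting its linear independence.
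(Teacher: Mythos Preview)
Your proof is correct and follows the classical Steinitz exchange argument. Note, however, that the paper explicitly states this theorem without proof (``We state the following theorem without giving a proof''), so there is no proof in the paper to compare against; your argument supplies what the paper deliberately omits. One small remark on the final paragraph: your well-definedness argument for the dimension is valid but somewhat terse---it applies the theorem itself with the smaller basis as the reference basis $\setb$, which is fine since the theorem has already been established for arbitrary finite $n$; you might make this self-application explicit to avoid any appearance of circularity.
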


\subsection{Linear Block Codes}

\begin{definition}
An $(n,k)$ \emph{linear block code}\index{linear block code} over a field $\myfield$ is a $k$-dimensional subspace of the $n$-dimensional vector space $\myfield^n$.
\end{definition}
\begin{example}
The set $\setc=\{(0,0),(1,1)\}$ is a one-dimensional subspace of the two-dimensional vector space $\fieldtwo^2$. The set $\setc$ is called a \emph{binary linear block code}\index{binary linear block code}.
\end{example}
In the introductory paragraph of this chapter, we argued that in general, we would need $2^{232}$ gigabytes of storage to store a binary code with block length $n=512$ and $2^{256}$ code words. Now suppose the code is linear. Then its dimension is
\begin{align}
|\field{F}_2|^k\overset{!}{=}2^{256}\;\Rightarrow\;k=256.
\end{align}
By Theorem~\ref{theo:linear:vs}, the code is completely specified by $k$ linearly independent vectors in $\myfield_2^n$. Thus, we need to store $256$ code words of length $512$ to store the linear code. This amounts to
\begin{align}
512\cdot 256\cdot \frac{1}{8}=16\,384\text{ bytes}
\end{align}
which is the storage needed to store a $140\times 180$ pixel portrait photo in JPEG format.

The rate of an $(n,k)$ linear code is given by (see Problem~\ref{prob:linear:rate})
\begin{align}
R=\frac{k\log_2|\myfield|}{n}\quad\left[\frac{\text{bits}}{\text{code symbol}}\right].\label{eq:linear:rate}
\end{align}

\subsection{Generator Matrix}
\begin{definition}
Let $\setc$ be a linear block code. A matrix $\matg$ whose rows form a basis of $\setc$ is called a \emph{generator matrix}\index{generator matrix} for $\setc$. Conversely, the row space of a matrix $\matg$ with entries in $\myfield$ is called the \emph{code generated by} $\matg$.
\end{definition}
\begin{example}\label{linear:ex:nonuniqueG}
Consider the two matrices
\begin{align}
\matg_1=\bpm 1&0\\0&1\epm,\quad\matg_2=\bpm 1&0\\1&1\epm.
\end{align}
The rows of each of the matrices are vectors in some vector space over $\fieldtwo$. Since they are linearly independent, they form a basis of a vector space. By calculating all linear combinations of the rows, we find that both row spaces are equal to the vector space $\fieldtwo^2$.
\end{example}
Example~\ref{linear:ex:nonuniqueG} shows that different generator matrices can span the same vector space. In general, suppose now you have two codes specified by two generator matrices $\matg_1$ and $\matg_2$. A natural question is if these two codes are identical. To answer this question, we want to represent each linear block code by a unique canonical generator matrix and we would like to have a procedure that allows us to bring an arbitrary generator matrix into this canonical form.

The following \emph{elementary row operations}\index{elementary row operations} leave the row space of a matrix unchanged.
\begin{enumerate}
\item \emph{Row switching:} Any row $\vecv_i$ within the matrix can be switched with any other row $\vecv_j$:
\begin{align}
\vecv_i\leftrightarrow \vecv_j.
\end{align}
\item \emph{Row multiplication:} Any row $\vecv_i$ can be multiplied by any non-zero element $a\in \myfield$:
\begin{align}
\vecv_i\leftarrow a\cdot \vecv_i.
\end{align}
\item \emph{Row addition:} We can add a multiple of any row $\vecv_j$ to any row $\vecv_i$:
\begin{align}
\vecv_i\leftarrow \vecv_i+a\cdot \vecv_j.
\end{align} 
\end{enumerate}
With these three operations, we can bring any generator matrix into the so called \emph{reduced row echelon} form.
\begin{definition}
A matrix is in \emph{reduced row echelon} (RRE) \emph{form}\index{reduced row echelon form}\index{RRE form \see{reduced row echelon form}}, if it has the following three properties.
\begin{enumerate}
\item The leftmost nonzero entry in each row is $1$.
\item Every column containing such a leftmost $1$ has all its other entries equal to $0$.
\item If the leftmost nonzero entry in a row $i$ occurs in column $t_i$, then $t_1<t_2<\dotsb$.
\end{enumerate}
\end{definition}
We can now state the following important property of linear block codes.
\begin{theorem}\label{theo:linear:rre}
Every linear block code has a unique generator matrix in RRE form. This matrix can be obtained by applying elementary row operations to any matrix that generates the code.
\end{theorem}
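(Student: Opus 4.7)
The plan is to prove existence via Gauss--Jordan elimination and uniqueness by identifying the pivot columns as an invariant of the code itself.

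For existence, starting from any generator matrix $\matg$ of $\setc$, I would process one row at a time. At step $i$, I look at rows $i,i+1,\dotsc,k$ and find the leftmost column containing a nonzero entry among them; call it $t_i$. Applying row switching I bring a nonzero entry in column $t_i$ into row $i$; row multiplication by the inverse of that entry (which exists since $\myfield$ is a field) rescales it to $1$; finally row additions zero out every other entry in column $t_i$, both above and below row $i$. After $k$ such sweeps the matrix satisfies the three RRE conditions by construction. Since the three elementary row operations leave the row space unchanged, as noted in the text, the final matrix still generates $\setc$.

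For uniqueness, suppose $\matg$ and $\matg'$ are two RRE generator matrices of $\setc$ with pivot columns $t_1<\dotsb<t_k$ and $t_1'<\dotsb<t_k'$ respectively. The first step is to show that the pivot columns depend only on $\setc$. Let $\pi_j\colon\myfield^n\to\myfield^j$ denote the projection onto the first $j$ coordinates. Applied to an RRE matrix, the rows with leading $1$ in a column $\leq j$ project to linearly independent vectors in $\myfield^j$ (they inherit an echelon structure on columns $1,\dotsc,j$), while rows with leading $1$ in a column $>j$ project to zero, since in RRE form all entries strictly to the left of a leading $1$ vanish. Hence $\dim\pi_j(\setc)$ equals the number of pivot columns $\leq j$, so the pivot set can be recovered as $\{j : \dim\pi_j(\setc) > \dim\pi_{j-1}(\setc)\}$, which is intrinsic to $\setc$. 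Therefore $t_i=t_i'$ for every $i$.

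Once the pivots agree, row $i$ of $\matg$ and row $i$ of $\matg'$ are codewords that both have value $1$ in column $t_i$ and value $0$ in every other pivot column $t_j$ with $j\neq i$, by the second RRE axiom. But any codeword $\vecc=\sum_i a_i\vecg_i$ satisfies $c_{t_i}=a_i$, so a codeword is uniquely determined by its entries at the pivot columns. Hence row $i$ of $\matg$ and of $\matg'$ coincide for every $i$, giving $\matg=\matg'$. The existence part is routine linear algebra and should present no real obstacle; the main subtlety is phrasing the pivot columns as a code invariant in the uniqueness argument, and once this projection-dimension characterization is set up, the remainder follows immediately.
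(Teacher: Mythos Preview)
Your proof is correct and complete. The paper, however, does not actually prove this theorem: it is stated as a standard linear-algebra fact, with only the remark that ``the transformation into RRE form can be done efficiently by the Gaussian elimination.'' So there is no paper proof to compare against; your argument supplies what the notes omit.

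For the record, both halves of your argument are sound. The existence half is the usual Gauss--Jordan sweep; since the rows of a generator matrix are linearly independent by definition, no zero rows arise and the procedure terminates with a $k\times n$ RRE matrix. The uniqueness half is the more interesting part, and your projection-dimension characterization of the pivot set is exactly the right invariant: in an RRE matrix every entry strictly left of a leading $1$ is zero, so rows with pivot beyond column $j$ vanish under $\pi_j$, while the remaining rows stay in echelon form and are independent. This pins down the pivot columns as $\{j:\dim\pi_j(\setc)>\dim\pi_{j-1}(\setc)\}$, after which the ``coordinates at pivots determine the codeword'' step finishes the job.
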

The transformation into RRE form can be done efficiently by the \emph{Gaussian elimination}. Theorem~\ref{theo:linear:rre} gives us the tool we were seeking for: to check if two codes are identical, we first bring both generator matrices into RRE form. If the resulting matrices are identical, then so are the codes. Conversely, if the two generator matrices in RRE form differ, then they generate different codes.
\begin{example} The binary \emph{repetition code}\index{repetition code} is a $(n,1)$ linear block code over $\fieldtwo$ with the generator matrix
\begin{align}
\matg_\text{rep}=(\underbrace{111\dotsb 1}_{n\text{ times}}).
\end{align}
The code has only one generator matrix, which already is in RRE form.
\end{example}

\begin{example}
The $(7,4)$ \emph{Hamming code}\index{Hamming code} is a code over $\fieldtwo$ with the generator matrix in RRE form
\begin{align}
\matg_\text{ham} = \bpm 1&0&0&0&0&1&1\\
0&1&0&0&1&0&1\\
0&0&1&0&1&1&0\\
0&0&0&1&1&1&1
\epm.
\end{align}
\end{example}

\section{Code Performance}

In the previous section, we have defined linear block codes and we have stated some basic properties. Our goal is to analyze and design codes. In this section, we develop important tools to assess the quality of linear block codes.

\subsection{Hamming Geometry}

Consider an $(n,k)$ linear block code $\setc$ over some finite field $\myfield$.
\begin{definition}
The \emph{Hamming weight}\index{Hamming weight}\index{wh@\hw \see{Hamming weight}} of a code word $\vecv$ is defined as the number of non-zero entries of $\vecv$, i.e.,
\begin{align}
\hw(\vecv):=\sum_{i=1}^n\mathbbm{1}(v_i\neq 0).\label{eq:linear:defhw}
\end{align}
\end{definition}
The mapping $\mathbbm{1}$ in \eqref{eq:linear:defhw} is defined as
\begin{align}
\mathbbm{1}\colon \{\text{true},\text{false}\}&\to\{0,1\}\\
\text{true}&\mapsto 1\\
\text{false}&\mapsto 0.
\end{align}
The summation in \eqref{eq:linear:defhw} is in $\setz$. We illustrate this in the following example.
\begin{example}
Consider the code word $\vecv=(0,1,0,1)$ of some linear block code over $\fieldtwo$ and the codeword $\vecw=(0,2,0,1)$ of some linear block code over $\field{F}_3$. The Hamming weights of the two code words are given by
\begin{align}
\hw(\vecv)=\hw(\vecw)=0+1+0+1=2.
\end{align}
\end{example}

\begin{definition}
The \emph{Hamming distance}\index{Hamming distance}\index{dh@\hd \see{Hamming distance}} of two code words $\vecv$  and $\vecw$ is defined as the number of entries at which the code words differ, i.e.,
\begin{align}
\hd(\vecv,\vecw):=\sum_{i=1}^n\mathbbm{1}(v_i\neq w_i)=\hw(\vecv-\vecw).\label{eq:linear:defhd}
\end{align}
\end{definition}

The Hamming distance defines a metric on the vector space $\setc$, see Problem~\ref{prob:linear:metric}.

The \emph{minimum distance}\index{minimum distance}\index{d@\textit{d} \see{minimum distance}} of a linear code $\setc$ is defined as
\begin{align}
d:=\min_{\veca\neq\vecb\in\setc}\hd(\veca,\vecb).
\end{align}
It is given by
\begin{align}
d=\min_{\vecc\in\setc\setminus\veczero}\hw(\vecc)\label{eq:linear:minimumweight}
\end{align}
that is, the minimum distance of a linear code is equal to the minimum weight of the non-zero code words. See Problem~\ref{prob:linear:minimumweight} for a proof of this statement.

For an $(n,k)$ linear code $\setc$, we define $A_i$ as the number of code words with Hamming weight $i$, i.e.,
\begin{align}
A_i:=|\{\vecv\in\setc\colon\hw(\vecv)=i\}|.
\end{align}
We represent the sequence $A_0,A_1,A_2,\dotsc,A_n$ by the \emph{weight enumerator}\index{weight enumerator} 
\begin{align}
A(x):=\sum_{i=0}^n A_i x^i.
\end{align}
The weight enumerator $A(x)$ is a \emph{generating function}\index{generating function}, i.e., a formal power series in the indeterminate $x$.

Let $\vecv\in\setc$ be some code word. How many code words are in $\setc$ with Hamming distance $i$ from $\vecv$? To answer this question, we use the identity that is proved in Problem~\ref{prob:linear:coset}, namely
\begin{align}
\vecv+\setc=\setc.
\end{align}
We now have
\begin{align}
|\{\vecw\in\setc\colon\hd(\vecv,\vecw)=i\}|=&|\{\vecw\in\setc+\vecv\colon\hd(\vecv,\vecw)=i\}|\\
=&|\{\vecu\in\setc\colon\hd(\vecv,\vecu+\vecv)=i\}|\\
=&|\{\vecu\in\setc\colon\hw(\vecu)=i\}|\\
=& A_i.
\end{align}

\subsection{Bhattacharyya Parameter}

\begin{definition}
Let $P_{Y|X}$ be a DMC with binary input alphabet $\setx=\{0,1\}$ and output alphabet $\sety$. The \emph{channel Bhattacharyya parameter}\index{Bhattacharyya parameter} is defined as
\begin{align}
\beta:=\sum_{b\in\sety}\sqrt{P_{Y|X}(b|0)P_{Y|X}(b|1)}.
\end{align}
\end{definition}
\begin{example}
For a BSC with crossover probability $\delta$, the Bhattacharyya parameter is $\beta_\text{BSC}(\delta)=2\sqrt{\delta(1-\delta)}$.
\end{example}
The Bhattacharyya parameter is a measure for how ``noisy'' a channel is.

\subsection{Bound on Probability of Error}\label{sec:linear:bound}

Suppose we want to transmit code words of a linear code over a binary input channel and suppose further that we use an ML decoder to recover the transmitted code words from the channel output. The following theorem states an upper bound on the resulting average probability of error. 
\begin{theorem}\label{theo:linear:bhattacharyya}
Let $\setc$ be an $(n,k)$ binary linear code with weight enumerator $A$. Let $P_{Y|X}$ be a DMC with input alphabet $\setx=\{0,1\}$ and output alphabet $\sety$. Let $\beta$ be the Bhattacharyya parameter of the channel. Then the error probability of an ML decoder is bounded by
\begin{align}
P_\text{ML}\leq A(\beta)-1.
\end{align}
\end{theorem}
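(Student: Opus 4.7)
The plan is to combine a union bound over pairwise ML error events with the classical Bhattacharyya pairwise bound, and then use the linearity of $\setc$ (specifically the identity $\vecv+\setc=\setc$ that was already invoked in the section on Hamming geometry) to rewrite the resulting sum through the weight enumerator $A$.

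First, fix an arbitrary codeword $\vecc\in\setc$ and suppose it was transmitted. An ML decoding error occurs only if there is some other codeword $\vecc'\neq\vecc$ with $P_{Y^n|X^n}(\vecy|\vecc')\ge P_{Y^n|X^n}(\vecy|\vecc)$ (treating ties as errors, which only loosens the bound). The union bound therefore gives
\begin{align}
\Pr[\text{error}\mid \vecc\text{ sent}]\le\sum_{\vecc'\in\setc\setminus\{\vecc\}}\Pr[\vecc\to\vecc']
\end{align}
where $\Pr[\vecc\to\vecc']$ denotes the pairwise error probability.

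Next I would establish the Bhattacharyya pairwise bound. On the error event $\sqrt{P_{Y^n|X^n}(\vecy|\vecc')/P_{Y^n|X^n}(\vecy|\vecc)}\ge 1$, so
\begin{align}
\Pr[\vecc\to\vecc']\le\sum_{\vecy}\sqrt{P_{Y^n|X^n}(\vecy|\vecc)\,P_{Y^n|X^n}(\vecy|\vecc')}.
\end{align}
Because the channel is memoryless, the right-hand side factorizes into a product of $n$ one-symbol sums. At positions where $c_i=c'_i$ that sum equals $1$, and at positions where $c_i\neq c'_i$ (necessarily the two binary inputs $0$ and $1$) it equals $\beta$. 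This yields $\Pr[\vecc\to\vecc']\le \beta^{\hd(\vecc,\vecc')}$. This Bhattacharyya step is the technical heart of the proof; everything else is bookkeeping.

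Finally I invoke linearity: $\vecc+\setc=\setc$, so the multiset $\{\hd(\vecc,\vecc')\colon\vecc'\in\setc\setminus\{\vecc\}\}$ coincides with the multiset $\{\hw(\vecu)\colon\vecu\in\setc\setminus\{\veczero\}\}$, exactly as in the calculation preceding the Bhattacharyya section. Therefore
\begin{align}
\Pr[\text{error}\mid \vecc\text{ sent}]\le\sum_{\vecu\in\setc\setminus\{\veczero\}}\beta^{\hw(\vecu)}=\sum_{i=1}^{n}A_i\beta^i=A(\beta)-A_0=A(\beta)-1,
\end{align}
where we used $A_0=1$ since the only weight-$0$ codeword is $\veczero$. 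Since this bound is independent of $\vecc$, averaging over the uniform input distribution on $\setc$ preserves the same bound, giving $P_\text{ML}\le A(\beta)-1$ as claimed.
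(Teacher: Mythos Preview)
Your proof is correct and follows essentially the same approach as the paper: union bound over competing codewords, Bhattacharyya bound on each pairwise term via the $\sqrt{\cdot}\ge 1$ trick, factorization by memorylessness to obtain $\beta^{\hd(\vecc,\vecc')}$, and then the coset identity $\vecc+\setc=\setc$ to convert the distance spectrum to the weight enumerator. The paper carries out the same steps with only cosmetic differences in presentation (it names the set $\sety_{ij}$ explicitly before extending the sum to all of $\sety^n$).
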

Before we give the proof, let's discuss the implication of this theorem. The bound is in terms of the weight enumerator of the code and the Bhattacharyya parameter of the channel. Let $d$ be the minimum distance of the considered code. We write out the weight enumerator.
\begin{align}
P_\text{ML}&\leq A(\beta)-1\\
&=\sum_{i=0}^n A_i \beta^i-1\\
&=1+\sum_{i=d}^n A_i \beta^i-1\\
&=A_d\beta^d+A_{d+1}\beta^{d+1}+\dotsb+A_n\beta^n.
\end{align}
By Problem~\ref{prob:linear:bhattacharyya}, if the channel is not completely useless, $\beta<1$. If $\beta$ is small enough, then the term $A_d\beta^d$ dominates the others. In this case, the minimum distance of the code determines the code performance. This is one of the reasons why a lot of research was done to construct linear codes with large minimum distance.
\begin{proof}[Proof of Theorem~\ref{theo:linear:bhattacharyya}]
 The code is $\setc=\{\vecv_1,\vecv_2,\dotsc,\vecv_{2^k}\}$. Suppose $\vecv_i\in\setc$ is transmitted. The probability of error is
\begin{align}
\Pr(f_\text{ML}(Y^n)\neq \vecv_i|X^n=\vecv_i)=\sum_{j\neq i}\underbrace{\Pr(f_\text{ML}(Y^n)=\vecv_j|X^n=\vecv_i)}_{:=P_{i\to j}}.
\end{align}
We next bound the probability $P_{i\to j}$ that $\vecv_i$ is erroneously decoded as $\vecv_j$. The ML decoder does \emph{not} decide for $\vecv_j$ if $P_{Y^n|X^n}(\vecw|\vecv_i)>P_{Y^n|X^n}(\vecw|\vecv_j)$. Define
\begin{align}
\sety_{ij}:=\bigl\{\vecw\in\sety^n\colon P_{Y^n|X^n}(\vecw|\vecv_i)\leq P_{Y^n|X^n}(\vecw|\vecv_j)\bigr\}.\label{eq:linear:setdef}
\end{align}
We have
\begin{align}
\Bigl\{\vecw\colon f_\text{ML}(\vecw)=\vecv_j\Bigr\}\subseteq\sety_{ij}.\label{eq:linear:setbound}
\end{align}
We can now bound $P_{i\to j}$ as
\begin{align}
P_{i\to j}&\oleq{a}\sum_{\vecw\in\sety_{ij}}P_{Y^n|X^n}(\vecw|\vecv_i)\nonumber\\
&\oleq{b}\sum_{\vecw\in\sety_{ij}}P_{Y^n|X^n}(\vecw|\vecv_i)\sqrt{\frac{P_{Y^n|X^n}(\vecw|\vecv_j)}{P_{Y^n|X^n}(\vecw|\vecv_i)}}\nonumber\\
&= \sum_{\vecw\in\sety_{ij}}\sqrt{P_{Y^n|X^n}(\vecw|\vecv_i)P_{Y^n|X^n}(\vecw|\vecv_j)}\nonumber\\
&\leq \sum_{\vecw\in\sety^n}\sqrt{P_{Y^n|X^n}(\vecw|\vecv_i)P_{Y^n|X^n}(\vecw|\vecv_j)}\nonumber\\
&\oeq{c} \sum_{\vecw\in\sety^n}\prod_{\ell=1}^n\sqrt{P_{Y|X}(w_\ell|v_{i\ell})P_{Y|X}(w_\ell|v_{j\ell})}\label{eq:linear:pijbound1}\\
&=\prod_{\ell=1}^n\sum_{b\in\sety}\sqrt{P_{Y|X}(b|v_{i\ell})P_{Y|X}(b|v_{j\ell})}.\label{eq:linear:pijbound2}
\end{align}
Inequality (a) follows by \eqref{eq:linear:setbound}, (b) follows by \eqref{eq:linear:setdef} and we used in (c) that the channel is memoryless. Note that the sum in \eqref{eq:linear:pijbound1} is over $\sety^n$ and the sum in \eqref{eq:linear:pijbound2} is over $\sety$. We evaluate the sum in \eqref{eq:linear:pijbound2}:
\begin{align}
\sum_{b\in\sety}\sqrt{P_{Y|X}(b|v_{i\ell})P_{Y|X}(b|v_{j\ell})}=\begin{cases}
1,&\text{if }v_{i\ell}=v_{j\ell}\\
\beta,&\text{otherwise}.
\end{cases}\label{eq:linear:pijcases}
\end{align}
The number of times the second case occurs is the Hamming distance of $\vecv_i$ and $\vecv_j$. We use \eqref{eq:linear:pijcases} in \eqref{eq:linear:pijbound2} and get
\begin{align}
P_{i\to j}\leq \beta^{\hd(\vecv_i,\vecv_j)}.
\end{align}
We can now bound the error probability of an ML decoder when $\vecv_i$ is transmitted by
\begin{align}
\Pr(f_\text{ML}(Y^n)\neq\vecv_i|X^n=\vecv_i)&=\sum_{j\neq i}P_{i\to j}\\
&\leq\sum_{j\neq i}\beta^{\hd(\vecv_i,\vecv_j)}\\
&=\sum_{\ell=1}^n A_\ell\beta^\ell\\
&=A(\beta)-A_0\\
&=A(\beta)-1.
\end{align}
The probability of error is now given by
\begin{align}
P_\text{ML}&=\sum_{\vecv\in\setc}P_{X^n}(\vecv)\Pr(f_\text{ML}(Y^n)\neq\vecv|X^n=\vecv)\\
&\leq \sum_{\vecv\in\setc}P_{X^n}(\vecv)[A(\beta)-1]\\
&=A(\beta)-1.
\end{align}
\end{proof}

\section{Syndrome Decoding}

Suppose we have a channel where the input alphabet is a field $\myfield_q$ with $|\myfield_q|=q$ elements and where for each input value $a\in\myfield_q$, the output is given by
\begin{align}
Y=a+Z.\label{eq:linear:qary}
\end{align}
The noise random variable $Z$ takes values in $\myfield_q$ according to the distribution $P_Z$. The addition in \eqref{eq:linear:qary} is in $\myfield_q$. Consequently, the output $Y$ also takes values in $\myfield_q$ and has the conditional distribution 
\begin{align}
P_{Y|X}(b|a)=P_Z(b-a).
\end{align}
The channel defined in \eqref{eq:linear:qary} is called a \emph{$q$-ary channel}\index{q@\textit{q}-ary channel}. If in addition, the noise distribution is of the form
\begin{align}
P_Z(a)=\begin{cases}
\delta,&a\neq 0\\
1-(q-1)\delta,&\text{otherwise}
\end{cases}\label{eq:linear:qarynoise}
\end{align}
then the channel is called a \emph{$q$-ary symmetric channel}\index{q@\textit{q}-ary symmetric channel}.
\begin{example}
Let the input alphabet of a channel be $\myfield_2$ and for $a\in\myfield_2$, define the output by
\begin{align}
Y=a+Z\label{eq:linear:symmetricbsc}
\end{align}
where $P_Z(1)=1-P_Z(0)=\delta$. The transition probabilities are
\begin{align}
&P_{Y|X}(0|0)=P_Z(0)=P_Z(0-0)=1-\delta\\
&P_{Y|X}(1|0)=P_Z(1)=P_Z(1-0)=\delta\\
&P_{Y|X}(0|1)=P_Z(1)=P_Z(0-1)=\delta\\
&P_{Y|X}(1|1)=P_Z(0)=P_Z(1-1)=1-\delta
\end{align}
where $X$ represents the channel input. We conclude that \eqref{eq:linear:symmetricbsc} defines a BSC with crossover probability $\delta$. The BSC is thus an instance of the class of $q$-ary symmetric channels.
\end{example}
The probability of a specific error pattern $\vecz$ on a $q$-ary symmetric channel is
\begin{align}
P_Z^n(\vecz)=\delta^{\hw(\vecz)}[1-(q-1)\delta]^{n-\hw(\vecz)}.
\end{align}
We define
\begin{align}
\text{a $q$-ary symmetric channel is \emph{not too noisy}}\Leftrightarrow\delta<1-(q-1)\delta. 
\end{align}
\index{not too noisy}
Suppose a $q$-ary symmetric channel is not too noisy. Then for two error patterns $\vecz_1$ and $\vecz_2$, we have
\begin{align}
P_{Z}^n(\vecz_1)>P_{Z}^n(\vecz_2)\Leftrightarrow \hw(\vecz_1)<\hw(\vecz_2).\label{eq:linear:weightcondition}
\end{align}
We formulate the ML decoder for a $q$-ary channel. Let $\setc$ be a (not necessarily linear) block length $n$ code over $\myfield_q$. Suppose the decoder observes $\vecy\in\myfield_q^n$ at the channel output. The ML decoder is
\begin{align}
\dml(\vecy)&=\argmax_{\vecc\in\setc}P_{Y|X}(\vecy|\vecc)\\
&\oeq{a}\argmax_{\vecc\in\setc}P_Z(\vecy-\vecc)\label{eq:linear:mlqary}
\end{align}
where (a) follows because the channel is $q$-ary. If in addition the channel is symmetric and not too noisy, we have by \eqref{eq:linear:weightcondition}
\begin{align}
\dml(\vecy)=\argmin_{\vecc\in\setc}\hw(\vecy-\vecc)\label{eq:linear:mldistance}
\end{align}
The expression \eqref{eq:linear:mldistance} has the following interpretation.
\begin{quote}
On a not too noisy $q$-ary symmetric channel, optimal decoding consists in searching for the code word that is closest in terms of Hamming distance to the observed channel output. 
\end{quote}
This observation suggests the construction of codes with large minimum distance, since then, only improbable error patterns of large weight could move the channel output too far away from the code word that was actually transmitted.

The rest of this section is dedicated to develop tools that allow us to implement the decoding rule \eqref{eq:linear:mlqary} efficiently. The resulting device is the so called \emph{syndrome decoder}.
\subsection{Dual Code}
\begin{definition}
For $\vecv,\vecw\in\myfield^n$, the scalar
\begin{align}
\sum_{i=1}^n v_iw_i=\vecv\vecw^T.
\end{align}
is called the \emph{inner product}\index{inner product} of $\vecv$ and $\vecw$.
\end{definition}
The inner product has the following properties. For all $a,b\in\myfield$ and $\vecv,\vecw,\vecu\in\myfield^n$:
\begin{align}
&\vecv\vecw^T=\vecw\vecv^T\\
&(a\vecv+b\vecw)\vecu^T=a\vecv\vecu^T+b\vecw\vecu^T\\
&\vecu(a\vecv+b\vecw)^T=a\vecu\vecv^T+b\vecu\vecw^T.
\end{align}
In the following, let $\setc\subseteq\myfield^n$ be a linear block code. 
\begin{definition}\label{def:linear:dual}
The \emph{dual code}\index{dual code} $\setc^\perp$ is the \emph{orthogonal complement}\index{orthogonal complement} of $\setc$, i.e.,
\begin{align}
\setc^\perp:=\bigl\{\vecv\in\myfield^n\colon \vecv\vecw^T=0\text{ for every }\vecw\in\setc\bigr\}.
\end{align}
\end{definition}
\begin{proposition}\label{prop:linear:dualtest}
Let $\matg$ be a generator matrix of $\setc$. Then $\vecv\in\setc^\perp\Leftrightarrow \vecv\matg^T=\veczero$.
\end{proposition}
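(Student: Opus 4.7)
The plan is to prove the equivalence by showing both implications using the definition of the dual code and the fact that the rows of $\matg$ form a basis of $\setc$.

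For the forward direction ($\Rightarrow$), I would observe that if $\vecv \in \setc^\perp$, then by Definition~\ref{def:linear:dual}, $\vecv$ is orthogonal to every codeword $\vecw \in \setc$. In particular, each row $\vecg_i$ of $\matg$ is a codeword (since the rows form a basis of $\setc$ and in particular belong to $\setc$), so $\vecv \vecg_i^T = 0$ for every $i$. Stacking these equations gives exactly $\vecv \matg^T = \veczero$.

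For the reverse direction ($\Leftarrow$), I would assume $\vecv \matg^T = \veczero$, which says $\vecv \vecg_i^T = 0$ for every row $\vecg_i$ of $\matg$. I then need to show $\vecv \vecw^T = 0$ for an arbitrary $\vecw \in \setc$. Since the rows of $\matg$ form a basis of $\setc$, every codeword can be written as $\vecw = \sum_{i=1}^k a_i \vecg_i$ for some scalars $a_i \in \myfield$. Using the linearity property of the inner product stated just before the proposition, I would compute
\begin{align*}
\vecv \vecw^T = \vecv \Bigl(\sum_{i=1}^k a_i \vecg_i\Bigr)^T = \sum_{i=1}^k a_i\, \vecv \vecg_i^T = 0,
\end{align*}
which shows $\vecv \in \setc^\perp$ by Definition~\ref{def:linear:dual}.

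There is no real obstacle here; the proof is essentially a repackaging of the definitions together with the defining property of a basis (every vector in the space is a linear combination of basis vectors). The only thing one must be careful about is making the forward direction explicit by noting that each row of $\matg$ is itself a codeword, so the orthogonality condition in the definition of $\setc^\perp$ directly yields $\vecv \matg^T = \veczero$ without any additional work.
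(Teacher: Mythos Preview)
Your proposal is correct and follows essentially the same approach as the paper's proof. The only cosmetic difference is that for the reverse direction the paper writes an arbitrary codeword in matrix form as $\vecw=\vecu\matg$ and computes $\vecv\vecw^T=\vecv\matg^T\vecu^T=\veczero\vecu^T=0$, whereas you expand the same computation as a sum over the basis vectors; the underlying argument is identical.
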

\begin{proof}
``$\Rightarrow$'': If $\vecv\in\setc^\perp$, then $\vecv\vecw^T=0$ for all $\vecw\in\setc$. The rows of $\matg$ are in $\setc$, so $\vecv\matg^T=\veczero.$

``$\Leftarrow$'': Suppose $\vecv\matg^T=\veczero$. Let $\vecw\in\setc$. Since the rows of $\matg$ form a basis of $\setc$,
\begin{align}
\vecw=\vecu\matg
\end{align}
for some $\vecu\in\myfield^k$. We calculate
\begin{align}
\vecv\vecw^T&=\vecv(\vecu\matg)^T\\
&=\vecv\matg^T\vecu^T\\
&=\veczero\vecu^T\\
&=0\\
\Rightarrow &\vecv\in\setc^\perp.
\end{align} 
\end{proof}

\begin{proposition}
$\setc^\perp$ is a linear block code.
\end{proposition}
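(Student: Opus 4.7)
The plan is to invoke the subspace criterion stated earlier in the chapter (the proposition characterizing when a non-empty subset $\setu\subseteq\setv$ is itself a vector space): it suffices to exhibit $\setc^\perp$ as a non-empty subset of $\myfield^n$ closed under $\myfield$-linear combinations. Since a linear block code is by definition a subspace of $\myfield^n$, establishing this will complete the proof; the dimension will not be needed here, only the subspace property.

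First I would note that $\setc^\perp$ is non-empty, because $\veczero\in\myfield^n$ satisfies $\veczero\vecw^T=0$ for every $\vecw\in\setc$ and hence lies in $\setc^\perp$ by Definition~\ref{def:linear:dual}.

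Next, I would take $\vecv_1,\vecv_2\in\setc^\perp$ and scalars $a,b\in\myfield$ and show that $a\vecv_1+b\vecv_2\in\setc^\perp$. By Definition~\ref{def:linear:dual}, I must verify that $(a\vecv_1+b\vecv_2)\vecw^T=0$ for every $\vecw\in\setc$. This follows directly from the bilinearity properties of the inner product listed just before Definition~\ref{def:linear:dual}, namely $(a\vecv_1+b\vecv_2)\vecw^T=a\vecv_1\vecw^T+b\vecv_2\vecw^T$, combined with the assumption that $\vecv_1\vecw^T=\vecv_2\vecw^T=0$. (Alternatively, one could shortcut this step using Proposition~\ref{prop:linear:dualtest}: if $\vecv_1\matg^T=\vecv_2\matg^T=\veczero$, then $(a\vecv_1+b\vecv_2)\matg^T=\veczero$ as well, so $a\vecv_1+b\vecv_2\in\setc^\perp$.)

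There is no real obstacle here; the argument is a one-line application of bilinearity of the inner product. The only thing to be mindful of is to invoke the correct earlier result (the subspace criterion) so that concluding ``$\setc^\perp$ is a linear block code'' is immediate rather than requiring a separate dimension argument.
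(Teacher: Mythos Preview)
Your proposal is correct and follows essentially the same route as the paper: verify the subspace criterion by showing closure under $\myfield$-linear combinations. The paper opts directly for your parenthetical alternative via Proposition~\ref{prop:linear:dualtest} (i.e., $(a\vecv+b\vecw)\matg^T=\veczero$), and it omits the explicit non-emptiness check, but otherwise the arguments coincide.
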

\begin{proof}
$\setc^\perp$ is a linear block code if it is a subspace of $\myfield^n$. By definition, $\setc^\perp\subseteq\myfield^n$. It remains to show that $\setc^\perp$ is closed under addition and scalar multiplication. To this end, let $\vecv,\vecw\in\setc^\perp$ and $a,b\in\myfield$. Then
\begin{align}
(a\vecv+b\vecw)\matg^T&=a\vecv\matg^T+b\vecw\matg^T\\
&=a\cdot 0+b\cdot 0\\
&=0\\
\Rightarrow &a\vecv+b\vecw\in\setc^\perp.
\end{align} 
\end{proof}
\begin{proposition}\ \label{prop:linear:dualdimension}
$\dim \setc+\dim\setc^\perp=n$.
\end{proposition}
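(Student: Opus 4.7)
The plan is to prove $\dim\setc^\perp = n-k$ by exhibiting an explicit basis of $\setc^\perp$ built from a convenient generator matrix of $\setc$. The whole argument runs parallel to the classical construction of a parity-check matrix from a systematic generator matrix.

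First, I would reduce to a canonical form for $\matg$. Let $\matg$ be any generator matrix of $\setc$, which is $k\times n$. By Theorem \ref{theo:linear:rre} I may apply elementary row operations to put $\matg$ into RRE form without changing $\setc$, and after a suitable permutation of the $n$ coordinates I may further assume $\matg = [\mati_k \mid \matp]$ for some $k\times (n-k)$ matrix $\matp$ over $\myfield$. The permutation of coordinates is harmless because simultaneously permuting the entries of both vectors in the inner product leaves $\vecv\vecw^T$ invariant; hence such a permutation sends $\setc^\perp$ bijectively onto the dual of the permuted code and preserves both dimensions. So it suffices to prove the claim for $\matg$ in the standard form above.

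Second, I would define the candidate matrix
\begin{align*}
\mathh := [-\matp^T \mid \mati_{n-k}]
\end{align*}
of size $(n-k)\times n$ and check that every row of $\mathh$ lies in $\setc^\perp$. A direct block multiplication gives
\begin{align*}
\matg \mathh^T = [\mati_k \mid \matp]\bpm -\matp \\ \mati_{n-k}\epm = -\matp + \matp = \veczero,
\end{align*}
so by Proposition \ref{prop:linear:dualtest} every row of $\mathh$ belongs to $\setc^\perp$. Moreover the $n-k$ rows of $\mathh$ are linearly independent since its right block is the identity $\mati_{n-k}$.

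Third, I would show the rows of $\mathh$ already span $\setc^\perp$. Take any $\vecv \in \setc^\perp$ and split it as $\vecv = (\vecv_1,\vecv_2)$ with $\vecv_1\in\myfield^k$ and $\vecv_2\in\myfield^{n-k}$. By Proposition \ref{prop:linear:dualtest}, $\vecv\matg^T=\veczero$, which expands to $\vecv_1 + \vecv_2\matp^T = \veczero$, i.e.\ $\vecv_1 = -\vecv_2\matp^T$. Therefore $\vecv = \vecv_2[-\matp^T \mid \mati_{n-k}] = \vecv_2\mathh$, exhibiting $\vecv$ as a linear combination of the rows of $\mathh$. Combined with the previous step, the rows of $\mathh$ form a basis of $\setc^\perp$, so $\dim\setc^\perp = n-k$ and the claim follows. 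The only subtle point I expect to need extra care on is the coordinate-permutation reduction in the first step, since the definition of $\setc^\perp$ depends on the inner product; once one observes that a simultaneous permutation of coordinates is an isometry for $\vecv\vecw^T$, the remainder of the argument is routine block algebra.
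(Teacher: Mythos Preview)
Your proof is correct and follows essentially the same approach as the paper: both arguments pass to the systematic form $\matg=[\mati_k\mid\matp]$, build $\mathh=[-\matp^T\mid\mati_{n-k}]$, and read off $\dim\setc^\perp=n-k$ from the fact that the last $n-k$ coordinates of a vector in $\setc^\perp$ determine the first $k$. The paper in fact only treats the special case where $\matg$ already has the form $[\mati_k\mid\matp]$ and explicitly leaves the general case aside; your additional paragraph justifying the reduction via a coordinate permutation (and noting that such a permutation is an isometry for the inner product, hence carries $\setc^\perp$ to the dual of the permuted code) fills exactly that gap.
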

\begin{proof}
1. Suppose $\dim\setc=k$. Statement 1. is true in general. We prove it for the special case when $\matg=[\mati_k,\matp]$, where $\mati_k$\index{I@\matik \see{identity matrix}} denotes the $k\times k$ \emph{identity matrix}\index{identity matrix} and $\matp$ is some $k\times (n-k)$ matrix.
\begin{align}
\vecv\in\setc^\perp&\Leftrightarrow \vecv\matg^T=\veczero\\
&\Leftrightarrow v_i+\sum_{j=1}^{n-k}p_{ij}v_{k+j}=0,\quad \forall i=1,2,\dotsc,k\\
&\Leftrightarrow v_i=-\sum_{j=1}^{n-k}p_{ij}v_{k+j},\quad \forall i=1,2,\dotsc,k.
\end{align}
Each $(n-k)$-tuple $(v_{k+1},\dotsc,v_n)$ determines a unique $(v_1,\dotsc,v_k)$ so that the resulting vector $\vecv$ fulfills the set of equations. Thus, a generator matrix for $\setc^\perp$ is
\begin{align}
[-\matp^T,\mati_{n-k}]
\end{align}
and the dimension of $\setc^\perp$ is $\dim\setc^\perp=n-k$.
\end{proof}

\begin{proposition}\label{prop:linear:dualperpperp}
$(\setc^\perp)^\perp=\setc$.
\end{proposition}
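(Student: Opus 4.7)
The plan is to prove the two inclusions separately, using the dimension formula from Proposition~\ref{prop:linear:dualdimension} to upgrade a one-sided inclusion into equality.

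First I would establish the easy inclusion $\setc\subseteq(\setc^\perp)^\perp$ directly from Definition~\ref{def:linear:dual} and the symmetry of the inner product. Pick any $\vecw\in\setc$. For every $\vecv\in\setc^\perp$, we have $\vecv\vecw^T=0$ by the defining property of the dual code, and since $\vecv\vecw^T=\vecw\vecv^T$, this shows $\vecw$ is orthogonal to every element of $\setc^\perp$, i.e., $\vecw\in(\setc^\perp)^\perp$. This step is pure unpacking of definitions.

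Next I would match dimensions. Proposition~\ref{prop:linear:dualdimension} applied to $\setc$ gives $\dim\setc+\dim\setc^\perp=n$. Since the preceding proposition guarantees that $\setc^\perp$ is itself a linear block code, I can legally apply Proposition~\ref{prop:linear:dualdimension} a second time, now to $\setc^\perp$, yielding $\dim\setc^\perp+\dim(\setc^\perp)^\perp=n$. Subtracting the two identities produces $\dim\setc=\dim(\setc^\perp)^\perp$.

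To conclude, I would invoke Theorem~\ref{theo:linear:vs}: any basis of $\setc$ is a set of $\dim\setc$ linearly independent vectors sitting inside $(\setc^\perp)^\perp$ by the first step, and since $\dim(\setc^\perp)^\perp=\dim\setc$, this basis is already a basis of $(\setc^\perp)^\perp$, forcing the two subspaces to coincide. The only subtlety is the legitimacy of applying the dimension formula to $\setc^\perp$, which is precisely why the paper proved that $\setc^\perp$ is itself a linear block code immediately beforehand; everything else is routine. I don't anticipate a genuine obstacle here.
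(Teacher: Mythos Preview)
Your proposal is correct and matches the paper's proof essentially step for step: the paper also first shows $\setc\subseteq(\setc^\perp)^\perp$ from the definition, then applies Proposition~\ref{prop:linear:dualdimension} twice to get $\dim(\setc^\perp)^\perp=n-(n-k)=k=\dim\setc$, and concludes equality. Your version is slightly more explicit about why equal dimensions plus inclusion forces equality (invoking Theorem~\ref{theo:linear:vs}), which the paper leaves implicit, but the argument is the same.
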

\begin{proof}
Let $\vecv\in\setc$. Then, for any $\vecw\in\setc^\perp$, $\vecv\vecw^T=0$, so $\setc\subseteq(\setc^\perp)^\perp$. Suppose $\dim\setc=k$. Then by Proposition~\ref{prop:linear:dualdimension}, $\dim(\setc^\perp)^\perp=n-(n-k)=k$, so $\setc=(\setc^\perp)^\perp$.
\end{proof}

\subsection{Check Matrix}

The notion of dual spaces allows for an alternative representation of linear block codes.
\begin{definition}\label{def:linear:checkmatrix}
A generator matrix $\mathh$ of $\setc^\perp$ is called a \emph{check matrix}\index{check matrix} of $\setc$.
\end{definition}
\begin{theorem}\label{theo:linear:checkmatrix}
If $\mathh$ is a check matrix for $\setc$ then $\setc=\{\vecv\in\myfield^n\colon \vecv\mathh^T=\veczero\}$.
\end{theorem}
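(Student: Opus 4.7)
The plan is to deduce this theorem immediately by chaining two earlier results, namely Proposition~\ref{prop:linear:dualtest} and Proposition~\ref{prop:linear:dualperpperp}. The key observation is that while Proposition~\ref{prop:linear:dualtest} was stated for a code $\setc$ and \emph{its} generator matrix $\matg$, it applies equally to \emph{any} linear code together with \emph{its} generator matrix. Here the relevant code is $\setc^\perp$, whose generator matrix is exactly $\mathh$ by Definition~\ref{def:linear:checkmatrix}.

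First I would apply Proposition~\ref{prop:linear:dualtest} with the code taken to be $\setc^\perp$ and its generator matrix taken to be $\mathh$. This yields the equivalence
\begin{align*}
\vecv\in(\setc^\perp)^\perp\;\Longleftrightarrow\;\vecv\mathh^T=\veczero.
\end{align*}
Second I would invoke Proposition~\ref{prop:linear:dualperpperp} to replace $(\setc^\perp)^\perp$ by $\setc$ on the left-hand side, obtaining $\vecv\in\setc\Leftrightarrow\vecv\mathh^T=\veczero$, which is exactly the claimed set equality.

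There is no real obstacle here, so the ``hard part'' is only a matter of exposition: making clear that Proposition~\ref{prop:linear:dualtest} is being reused with $\setc^\perp$ in the role of $\setc$ (rather than being reapplied to the original $\setc$), and flagging that this legitimate reuse is precisely what makes the check-matrix characterization of $\setc$ dual to the generator-matrix characterization of $\setc^\perp$. Since both ingredients are already in place, the proof should be no more than two or three lines.
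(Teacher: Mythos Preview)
Your proposal is correct and follows exactly the paper's approach: the paper's proof is the single sentence ``Combining Proposition~\ref{prop:linear:dualtest} and Proposition~\ref{prop:linear:dualperpperp} proves the statement,'' and you have simply spelled out how that combination works.
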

\begin{proof}
Combining Proposition~\ref{prop:linear:dualtest} and Proposition~\ref{prop:linear:dualperpperp} proves the statement.
\end{proof}
\begin{theorem}\label{theo:linear:checkmatrixd}
The minimum distance of a code $\setc$ is equal to the minimum number of columns of the check matrix that sum up to $\veczero$.
\end{theorem}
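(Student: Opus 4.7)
The plan is to reduce the claim to two already established facts: \eqref{eq:linear:minimumweight}, which says that the minimum distance of a linear code equals the minimum Hamming weight of a non-zero codeword, and Theorem~\ref{theo:linear:checkmatrix}, which says $\vecv\in\setc\iff\vecv\mathh^T=\veczero$. Combining these turns the statement into a purely linear-algebraic claim about columns of $\mathh$.

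First, I would write $\mathh$ column-wise as $\mathh=[\vech_1\;\vech_2\;\dotsb\;\vech_n]$, so that for any $\vecv=(v_1,\dotsc,v_n)\in\myfield^n$ one has
\begin{align*}
\vecv\mathh^T=\sum_{i=1}^n v_i\vech_i^T.
\end{align*}
Hence $\vecv\in\setc$ iff $\sum_{i=1}^n v_i\vech_i=\veczero$. The key observation is the following dictionary between codewords and column dependencies: a non-zero codeword $\vecv$ with support $\set{I}=\{i\colon v_i\neq0\}$ produces the relation $\sum_{i\in\set{I}} v_i\vech_i=\veczero$, a non-trivial linear combination of $|\set{I}|=\hw(\vecv)$ columns with all coefficients non-zero (in the binary case this is literally a sum to $\veczero$). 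Conversely, any collection of $w$ columns $\vech_{i_1},\dotsc,\vech_{i_w}$ admitting a non-trivial relation $\sum_{j=1}^w\alpha_j\vech_{i_j}=\veczero$ with every $\alpha_j\neq0$ yields the vector $\vecv$ defined by $v_{i_j}=\alpha_j$ and $v_i=0$ otherwise, which satisfies $\vecv\mathh^T=\veczero$ and is thus a codeword of weight exactly $w$.

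Second, I would compare the two minima. Let $w^\ast$ denote the smallest number of columns of $\mathh$ admitting such a non-trivial zero combination. Taking a minimum-weight non-zero codeword $\vecv$ and applying the forward direction of the dictionary gives $w^\ast\leq\hw(\vecv)=d$, by \eqref{eq:linear:minimumweight}. Taking $w^\ast$ columns that sum to $\veczero$ and applying the reverse direction produces a non-zero codeword of weight $w^\ast$, so $d\leq w^\ast$. Equality follows.

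I do not foresee a serious obstacle. The only point that requires care is interpreting ``columns that sum up to $\veczero$'' as a non-trivial linear combination with non-zero scalar coefficients when $\myfield\neq\fieldtwo$; for binary codes this is literally a sum, and the dictionary above makes the general statement unambiguous.
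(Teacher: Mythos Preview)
Your proposal is correct and is exactly the standard argument; the paper itself does not give a proof but defers it to Problem~\ref{prob:linear:checkmatrixd}, and your write-up is precisely the intended solution via \eqref{eq:linear:minimumweight} and Theorem~\ref{theo:linear:checkmatrix}. Your remark about interpreting ``sum up to $\veczero$'' over non-binary fields as a non-trivial $\myfield$-linear combination with all coefficients non-zero is the right clarification and is needed for the statement to be literally true when $\myfield\neq\fieldtwo$.
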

\begin{proof}
See Problem~\ref{prob:linear:checkmatrixd}.
\end{proof}

\begin{theorem}
Suppose $\matg=[\mati,\matp]$. Then $\mathh=[-\matp^T,\mati_{n-k}]$.
\end{theorem}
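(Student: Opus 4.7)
The plan is to verify that $\mathh=[-\matp^T,\mati_{n-k}]$ satisfies the defining property of a check matrix, namely that it is a generator matrix of $\setc^\perp$ (Definition~\ref{def:linear:checkmatrix}). By Proposition~\ref{prop:linear:dualtest}, it suffices to show that (i) every row of $\mathh$ lies in $\setc^\perp$, (ii) the rows of $\mathh$ are linearly independent, and (iii) there are exactly $\dim\setc^\perp=n-k$ of them. Item (iii) is immediate from the block structure: $\mathh$ is an $(n-k)\times n$ matrix.

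For (i), I would compute the product $\mathh\matg^T$ using the block decomposition. Writing
\begin{align*}
\matg^T=\bpm\mati_k\\ \matp^T\epm,\qquad \mathh=[-\matp^T,\,\mati_{n-k}],
\end{align*}
block multiplication gives $\mathh\matg^T=(-\matp^T)\mati_k+\mati_{n-k}\matp^T=-\matp^T+\matp^T=\veczero$. By Proposition~\ref{prop:linear:dualtest}, this means each row of $\mathh$ is in $\setc^\perp$, so the row space of $\mathh$ is contained in $\setc^\perp$.

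For (ii), the last $n-k$ columns of $\mathh$ form the identity matrix $\mati_{n-k}$, so any non-trivial linear combination of the rows of $\mathh$ already produces a non-zero vector in those last $n-k$ coordinates. Hence the rows of $\mathh$ are linearly independent, and the row space of $\mathh$ is an $(n-k)$-dimensional subspace of $\setc^\perp$. Combined with Proposition~\ref{prop:linear:dualdimension}, which gives $\dim\setc^\perp=n-k$, the row space equals $\setc^\perp$, so $\mathh$ is a generator matrix of $\setc^\perp$ and thus a check matrix of $\setc$.

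There is no real obstacle here; the whole argument is essentially a block-matrix calculation already carried out inside the proof of Proposition~\ref{prop:linear:dualdimension}. The only mild subtlety is being careful about matrix dimensions when multiplying the block forms, and invoking Proposition~\ref{prop:linear:dualdimension} to upgrade ``contained in $\setc^\perp$ with $n-k$ linearly independent rows'' to ``equals $\setc^\perp$.''
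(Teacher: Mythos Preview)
Your proof is correct and follows essentially the same route as the paper, which simply defers to the computation already done in the proof of Proposition~\ref{prop:linear:dualdimension}. Your block-matrix verification $\mathh\matg^T=-\matp^T+\matp^T=\veczero$ is a clean repackaging of the coordinate equations used there, and the appeal to Proposition~\ref{prop:linear:dualdimension} for the dimension count is exactly what is needed.
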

\begin{proof}
The statement is shown in the proof of Proposition~\ref{prop:linear:dualdimension}.
\end{proof}
\begin{example}
Suppose we have an $(n,k)$ linear block code. We can represent it by a $k\times n$ generator matrix. By Theorem~\ref{theo:linear:checkmatrix}, it can alternatively be represented by a check matrix, which by Proposition~\ref{prop:linear:dualdimension} has size $(n-k)\times n$. If $k>n/2$, then the check matrix representation is more compact than the generator matrix representation.
\end{example}

\subsection{Cosets}
\label{subsec:cosets}

\begin{proposition}\label{prop:equivalence relation}
Let $\setg$ be a group and let $\setu\subseteq\setg$ be a subgroup of $\setg$. Then
\begin{align}
g_1\sim g_2 \Leftrightarrow g_1-g_2\in\setu
\end{align}
defines an \emph{equivalence relation}\index{equivalence relation} on $\setg$, i.e., it is \emph{reflexive}, \emph{transitive} and \emph{symmetric}. The equivalence classes are $\{g+\setu\colon g\in\setg\}$ and are called \emph{cosets}\index{coset} of $\setu$ in $\setg$.
\end{proposition}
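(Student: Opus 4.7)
The plan is to verify the three defining properties of an equivalence relation directly from the subgroup axioms, and then to read off the equivalence classes from the definition of $\sim$. All of this is routine; the only thing to be careful about is that the paper's group axioms use the additive notation ``$g_1 - g_2$'' to mean $g_1 \circ (-g_2)$, and that ``subgroup'' is implicitly understood to mean a subset of $\setg$ that is itself a group under the inherited operation, hence closed under the operation, containing the identity $0$, and closed under taking inverses.

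For reflexivity I would note that $g - g = 0$, and since $\setu$ is a subgroup it contains the identity $0$, so $g \sim g$ for every $g \in \setg$. For symmetry, suppose $g_1 \sim g_2$, i.e.\ $g_1 - g_2 \in \setu$. Since $\setu$ is closed under taking inverses, $-(g_1 - g_2) \in \setu$; and using the group axioms one checks that $-(g_1 - g_2) = g_2 - g_1$, so $g_2 \sim g_1$. For transitivity, if $g_1 - g_2 \in \setu$ and $g_2 - g_3 \in \setu$, then closure of $\setu$ under the group operation gives $(g_1 - g_2) + (g_2 - g_3) \in \setu$; again the group axioms yield $(g_1 - g_2) + (g_2 - g_3) = g_1 - g_3$, so $g_1 \sim g_3$.

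Finally, to identify the equivalence classes, I would unfold the definition. The class of $g$ is
\begin{align*}
[g] = \{h \in \setg \colon h \sim g\} = \{h \in \setg \colon h - g \in \setu\} = \{g + u \colon u \in \setu\} = g + \setu,
\end{align*}
where the third equality uses the bijection $u \mapsto g + u$ between $\setu$ and its translate. Since every $g \in \setg$ lies in $g + \setu$, the collection of equivalence classes is exactly $\{g + \setu \colon g \in \setg\}$, as claimed.

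There is no real obstacle in this proof; the only subtlety worth stating explicitly is the small algebraic identity $-(g_1 - g_2) = g_2 - g_1$, which is used in the symmetry step and which follows directly from the associativity and inverse axioms. If $\setg$ is not assumed commutative, one should be mindful that the ``cosets'' obtained this way are \emph{left} cosets in the usual group-theoretic terminology, but since the applications in the following sections are to the additive (Abelian) group of a vector space, this distinction does not affect the rest of the development.
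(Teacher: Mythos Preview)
Your proof is correct and follows essentially the same approach as the paper: both verify reflexivity from $0\in\setu$, transitivity from closure under addition, and symmetry from closure under inverses, using the same identity $-(g_1-g_2)=g_2-g_1$. You go slightly beyond the paper by explicitly verifying that the equivalence classes are the sets $g+\setu$ (the paper states this in the proposition but does not argue it in the proof), and your remark about left cosets in the non-Abelian case is a fair caveat that the paper leaves implicit.
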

\begin{proof}\ 

\emph{reflexive:} Let $g\in\setg$. Then $g-g=0\in\setu$, so $g\sim g$.

\emph{transitive:} For $g_1,g_2,g_3\in\setg$, suppose $g_1\sim g_2$ and $g_2\sim g_3$ then
\begin{align}
g_1-g_3&=(\underbrace{g_1-g_2}_{\in\setu}+\underbrace{g_2-g_3}_{\in\setu})\in\setu\\
\Rightarrow & g_1\sim g_3.
\end{align}

\emph{symmetric:} 
\begin{align}
g_1\sim g_2 &\Leftrightarrow g_1-g_2\in\setu\\
&\Rightarrow -(g_1-g_2)=g_2-g_1\in\setu\\
&\Rightarrow g_2\sim g_1.
\end{align}
\end{proof}
Since $``\sim$'' is an equivalence relation, we have for any $g_1,g_2\in\setg$
\begin{align}
\{g_1+\setu\} = \{g_2+\setu\}\quad\text{or}\quad \{g_1+\setu\} \cap \{g_2+\setu\}=\emptyset.
\end{align}
Furthermore, since $0\in\setu$, we have
\begin{align}
\bigcup_{g\in\setg}\{g+\setu\}=\setg.
\end{align}
The cardinality is 
\begin{align}
|\{g+\setu\}|=|\setu|
\end{align}
since $f(u):=g+u$ is an invertible mapping from $\setu$ to $\{g+\setu\}$. We conclude that the number of cosets is given by
\begin{align}
\frac{|\setg|}{|\setu|}.
\end{align}
In particular, this shows that if $\setu$ is a subgroup of $\setg$ then $|\setu|$ divides $|\setg|$.
\begin{example}
Let $\setc$ be an $(n,k)$ linear block code over $\myfield_q$. By definition, $\setc$ is a subspace of the vector space $\myfield_q^n$. In particular, $\setc$ together with the operation ``$+$'' is a subgroup of $\myfield_q^n$. Then for each $\vecv\in\myfield_q^n$, the coset $\{\vecv+\setc\}$ has cardinality $q^k$. The number of cosets is
\begin{align}
\frac{|\myfield_q^n|}{|\setc|}=\frac{q^n}{q^k}=q^{n-k}.
\end{align}
The $q^{n-k}$ cosets partition the vector space $\myfield_q^n$ into $q^{n-k}$ disjoint sets, each of which is of size $q^k$. One of the cosets is the code $\setc$.
\end{example}

\subsection{Syndrome Decoder}

Suppose some code word $\vecc$ from an $(n,k)$ linear code $\setc\subseteq\myfield_q^n$ is transmitted over a $q$-ary channel. Suppose the channel output $\vecy\in\myfield_q^n$ is observed by the decoder, i.e.,
\begin{align}
\vecy=\vecc+\vecz.
\end{align}
The decoder knows $\vecy$ and $\setc$. Since $\setc$ is a subspace of $\myfield_q^n$, it is in particular a subgroup of $\myfield_q^n$ with respect to addition. Thus, he knows that the error pattern $\vecz$ has to be in the coset
\begin{align}
\{\vecy+\setc\}.
\end{align}
Therefore, the ML decoder can be written as
\begin{align}
\dml(\vecy)=\vecy-\argmax_{\vecz\in\{\vecy+\setc\}}P_{Z^n}(\vecz).
\end{align}
For each of the $q^{n-k}$ cosets, the vector $\vecz$ that maximizes $P_Z(\vecz)$ can be calculated offline and stored in a lookup table. It remains to identify to which coset $\vecy$ belongs. To this end, we need the following property.
\begin{theorem}\label{theo:linear:syndromes}
Let $\setc$ be a linear code with check matrix $\mathh$ and let $\vecy_1$ and $\vecy_2$ be two vectors in $\myfield_q^n$. Then
\begin{align}
\{\vecy_1+\setc\}=\{\vecy_2+\setc\}\Leftrightarrow \vecy_1\mathh^T=\vecy_2\mathh^T.
\end{align}
\end{theorem}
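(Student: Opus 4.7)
The plan is to chain together two earlier results: the coset-equality criterion coming from the equivalence relation in Proposition~\ref{prop:equivalence relation}, and the check-matrix characterization of codewords given by Theorem~\ref{theo:linear:checkmatrix}. Both are already in hand, so the proof should amount to a short two-step equivalence.

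First, I would reduce the coset equality to a membership statement. By Proposition~\ref{prop:equivalence relation} applied to the additive group $\myfield_q^n$ with subgroup $\setc$, two cosets coincide precisely when the two representatives are equivalent, i.e.
\begin{align*}
\{\vecy_1+\setc\}=\{\vecy_2+\setc\}\;\Longleftrightarrow\;\vecy_1-\vecy_2\in\setc.
\end{align*}

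Second, I would translate membership in $\setc$ into a syndrome equation using the check matrix. By Theorem~\ref{theo:linear:checkmatrix}, a vector $\vecv\in\myfield_q^n$ lies in $\setc$ if and only if $\vecv\mathh^T=\veczero$. Applying this to $\vecv=\vecy_1-\vecy_2$ and using the linearity of the map $\vecx\mapsto\vecx\mathh^T$,
\begin{align*}
\vecy_1-\vecy_2\in\setc\;\Longleftrightarrow\;(\vecy_1-\vecy_2)\mathh^T=\veczero\;\Longleftrightarrow\;\vecy_1\mathh^T=\vecy_2\mathh^T.
\end{align*}
Chaining the two equivalences yields the claim.

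There is really no obstacle here: both ingredients are already proved in the excerpt, and the argument is just linearity of matrix multiplication together with the coset equivalence relation. The only thing worth spelling out for the reader is why Proposition~\ref{prop:equivalence relation} applies, namely that $(\setc,+)$ is a subgroup of $(\myfield_q^n,+)$ because $\setc$ is a subspace, which was already observed in the coset example immediately preceding the theorem.
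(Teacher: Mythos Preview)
Your proof is correct and matches the paper's own argument essentially step for step: the paper also chains the equivalence $\{\vecy_1+\setc\}=\{\vecy_2+\setc\}\Leftrightarrow\vecy_1-\vecy_2\in\setc$ (citing the definition of cosets) with Theorem~\ref{theo:linear:checkmatrix} to get $(\vecy_1-\vecy_2)\mathh^T=\veczero$, and then uses linearity to split the syndrome equation. There is nothing to add.
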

\begin{proof}
We have
\begin{align}
\{\vecy_1+\setc\}=\{\vecy_2+\setc\}&\oiff{a} \vecy_1-\vecy_2\in\setc\\
&\oiff{b} (\vecy_1-\vecy_2)\mathh^T=0\\
&\Leftrightarrow \vecy_1\mathh^T=\vecy_2\mathh^T.
\end{align}
where (a) follows by the definition of cosets and where (b) follows by Theorem~\ref{theo:linear:checkmatrix}.
\end{proof}
For a vector $\vecy\in\myfield_q^n$, the vector
\begin{align}
\vecs=\vecy\mathh^T
\end{align}
is called the \emph{syndrome}\index{syndrome} of $\vecy$. Theorem~\ref{theo:linear:syndromes} tells us that we can index the $q^{n-k}$ cosets by the syndromes $\vecy\mathh^T\in\myfield_q^{n-k}$. The \emph{syndrome decoder}\index{syndrome decoder} now works as follows.
\begin{enumerate}
\item Calculate the syndrome $\vecs=\vecy\mathh^T$.
\item Choose $\hat{\vecz}$ in the $\vecs$th coset that maximizes $P_{Z^n}(\hat{\vecz})$.
\item Estimate the transmitted code word as $\hat{\vecx}=\vecy-\hat{\vecz}$. 
\end{enumerate}
\clearpage
\section{Problems}

\myproblem Let $\myfield$ be a field. Show that $a\cdot 0=0$ for all $a\in \myfield$.

\myproblem Prove the following statement: The vectors $\vecv_1,\dotsc,\vecv_n$ are linearly independent if and only if every vector $\vecv\in \setv$ can be represented at most in one way as a linear combination
\begin{align}
\vecv = \sum_{i=1}^n a_i \vecv_i.
\end{align}

\myproblem\label{prob:linear:dim} 
\begin{enumerate}
\item Give a basis for the vector space $\fieldtwo^n$ (see Example~\ref{ex:linear:fn}). 
\item What is the dimension of $\fieldtwo^n$? 
\item How many vectors are in $\fieldtwo^n$?
\end{enumerate}

\myproblem\label{prob:linear:rate} What is the rate of an $(n,k)$ linear block code over a finite field $\field{F}$?

\myproblem Show the following implication:
\begin{align}
\text{$\setc$ is a linear code}\Rightarrow \veczero \in\setc.
\end{align}
That is, every linear code has the all-zero vector as a code word.

\myproblem \label{prob:linear:metric} A \emph{metric}\index{metric} on a set $\seta$ is a real-valued function $d$ defined on $\seta\times\seta$ with the following properties. For all $a,b,c$ in $\seta$:
\begin{enumerate}
\item $d(a,b)\geq 0$.
\item $d(a,b)= 0$ if and only if $a=b$.
\item $d(a,b)=d(b,a)$ (symmetry).
\item $d(a,c)\leq d(a,b)+d(b,c)$ (triangle inequality).
\end{enumerate}

Let $\setc$ be an $(n,k)$ linear block code over some finite field $\field{F}$. Show that the Hamming distance $\hd$ defines a metric on $\setc$.

\myproblem \label{prob:linear:coset} Let $(\setg,+)$ be a group and for some $a\in \setg$, define $a+\setg=\{a+b\colon b\in \setg\}$.  Show that
\begin{align}
a+\setg=\setg.
\end{align} 

\myproblem\label{prob:linear:minimumweight} Let $\setc$ be a linear block code. Define
\begin{align}
A_i:=&|\{\vecx\in\setc\colon\hw(\vecx)=i\}|\\
A_i(\vecx):=&|\{\vecy\in\setc\colon\hd(\vecx,\vecy)=i\}|.
\end{align}
\begin{enumerate}
\item Show that
\begin{align}
A_i(\vecx)=A_i,\text{ for all }\vecx\in\setc.\label{eq:short:prob:ai}
\end{align}
\item Show that \eqref{eq:short:prob:ai} implies \eqref{eq:linear:minimumweight}.
\end{enumerate}

\myproblem \label{prob:linear:bhattacharyya} Let $\beta$ be the Bhattacharyya parameter of some DMC. Show that
\begin{align}
0\oleq{a} \beta\oleq{b} 1.
\end{align}
When do we have equality in (a) and (b), respectively?

\myproblem \label{prob:linear:checkmatrixd} Prove Theorem~\ref{theo:linear:checkmatrixd}.

\myproblem
A binary code $\mathcal{C}$ of length $n=5$ is used. All codewords contain exactly three $1$'s.
\begin{enumerate}
\item What is the size $|\mathcal{C}|$ of the code, i.e., how many codewords are there?
\item Can this code be linear? Give two reasons!
\item List all codewords of this code.
\end{enumerate}

\myproblem
Consider the following non-linear code
\begin{align*}
 \mathcal{C}=\{0000,0110,0001,1111\}.
\end{align*}
\begin{enumerate}
  \item Why is $\mathcal{C}$ non-linear?
\item Determine its rate $R$.
\item Assume a codeword $\vecx\in\mathcal{C}$ is transmitted over a Binary Symmetric Channel (BSC) with crossover probability $\delta<0.5$ and $\vecy=1101$ was received.
Perform an ML decoding to obtain an estimate $\hat{\vecx}$ of $\vecx$.
\item Find a linear block code $\mathcal{C}'$ that contains all codewords from $\mathcal{C}$.
Determine the dimension $k$ and the rate $R$ of $\mathcal{C}'$.
\end{enumerate}

\myproblem\label{short:prob:hammingcode} The binary \emph{repetition code} $\mathcal{C}_\text{rep}$ is a $(1,n)$ linear block code over $\fieldtwo$. 
\begin{enumerate}
\item What is the dimension of $\mathcal{C}_\text{rep}$?
\item Provide a generator matrix of $\mathcal{C}_\text{rep}$.
\item Calculate the Hamming weights of all codewords.
\end{enumerate}
The $(7,4)$ \emph{Hamming code} is a code over $\fieldtwo$ with generator matrix
\begin{align}
\matg_\text{ham} = \bpm 1&0&0&0&0&1&1\\
0&1&0&0&1&0&1\\
0&0&1&0&1&1&0\\
0&0&0&1&1&1&1
\epm.
\end{align}

\begin{enumerate}
\setcounter{enumi}{3}
\item For the BSC with crossover probability $\delta<0.5$, write a Matlab function that implements the ML decoder in the form
\begin{align}
d_\text{ML}(\vecy)=\argmin_{\vecc\in\mathcal{C}}\hd(\vecy,\vecc)
\end{align}
\item Use the repetition code for $n=1,2,\dotsc,7$ and the Hamming code for a BSC with crossover probability $\delta=0.01,0.02,\dotsc,0.4$. Plot the crossover probability $\delta$ in horizontal direction and the error probability of an ML decoder in the vertical direction for each of the codes. Use a logarithmic scale for the error probability.
\end{enumerate}

\myproblem For the binary $(n,1)$ repetition code, determine a check matrix.

\myproblem The generator matrix $\matg$ is given by
\begin{align*}
\matg=\bbm 1&1&1&0&1&0\\
1&0&0&1&1&1\\
0&0&1&0&1&1\ebm
\end{align*}
\begin{enumerate}
\item Find a generator matrix in reduced row echelon form. 
\item Find a check matrix.
%\item Determine the minimum distance $\dmin$ first by listing all codewords and then by inspection of the columns of the check matrix. Which method is more complex?
\end{enumerate}

\myproblem Generator and check matrix of a binary code are given by
\begin{align*}
\matg=\bbm 1& 0& 1& 0& 1& 1\\
0& 1& 1& 1& 0& 1\\
0& 1& 1& 0& 1& 0\ebm,
\qquad
\mathh=\bbm
1&1&0&1&1&0\\
1&0&1&0&1&1\\
0&1&0&0&1&1
\ebm.
\end{align*}
Verify that $\mathh$ is a check matrix for the code generated by $\matg$.

\myproblem Let $\mathh$ be the check matrix for an $(n,k)$ linear code $\setc$. Let $\setc'$ be the extended code whose check matrix $\mathh'$ is formed by
\begin{align*}
\mathh'=\bbm
0&&\\
\vdots&\boldsymbol{H}\\
0&&\\
1&\dotsb&1
\ebm
\end{align*}
\begin{enumerate}
\item Show that every codeword in $\setc'$ has even weight.
\item Show that $C'$ can be obtained from $\setc$ by adding to each codeword an extra check bit called the \emph{overall parity bit}.
\item Let $\matg$ be a generator matrix of $\setc$. Specify a generator matrix of $\setc'$.
\end{enumerate}

\myproblem Let $\setc$ be a binary linear code with both even- and odd-weight codewords. Show that the number of even-weight codewords is equal to the number of odd-weight codewords.

\myproblem\label{short:prob:hammingcode2} Let's consider again the $(7,4)$ Hamming code from Problem~\ref{short:prob:hammingcode}. A BSC with crossover probability $\delta$ can be modeled as an additive channel over $\fieldtwo$ by
\begin{align}
Y=X+Z
\end{align}
with output $Y$, input $X$, and noise term $Z$. The noise distribution is $P_Z(1)=1-P_Z(0)=\delta$. The addition is in $\fieldtwo$. The transition probabilities describing the channel is $P_{Y|X}(b|a)=P_Z(b-a)$ for $a,b\in\fieldtwo$. The goal of this and the next two problems is the design and analysis of an efficient ML decoder for $\setc_\text{ham}$ when used on the BSC with crossover probability $0\leq \delta < 0.5$.

\begin{enumerate}
\item Construct a check matrix $\mathh$ for $\matg_\text{ham}$.
\item Suppose a code word from $\setc_\text{ham}$ was transmitted over the channel and $\vecy\in\fieldtwo^7$ is observed at the output. As we have shown in class, the ML decoder decides for the most probable error pattern $\vecz$ in the coset $\{\vecy+\setc_\text{ham}\}$. Furthermore, we have shown that 
\begin{align}
\vecz\in\{\vecy+\setc_\text{ham}\}\Leftrightarrow \vecz\mathh^T=\vecy\mathh^T.
\end{align}
For each syndrome $\vecs\in\fieldtwo^{n-k}$, find the most probable error pattern $\vecz$ with $\vecz\mathh^T=\vecs$. List all syndrome--error pattern pairs in a table.
\item An efficient ML decoder is
\begin{enumerate}
\item[i.] $\vecs=\vecy\mathh^T$.
\item[ii.] $\hat{\vecz}=f(\vecs)$.
\item[iii.] $\hat{\vecx}=\vecy-\hat{\vecz}$.
\end{enumerate}
The function $f$ performs a table lookup. Implement this decoder in Matlab.
\end{enumerate}

\myproblem{(Problem~\ref{short:prob:hammingcode2} continued)}\label{short:prob:hammingcode2:b}
\begin{enumerate}
\item Calculate all cosets for $\setc_\text{ham}$.
\item Show the following: For a BSC with crossover probability $\delta<0.5$,
\begin{align}
P_{Z^n}(\vecz_1)\geq P_{Z^n}(\vecz_2)\Leftrightarrow \hw(\vecz_1)\leq \hw(\vecz_2).\label{eq:weightproperty}
\end{align}
Generalize this to $q$-ary symmetric channels by showing that \eqref{eq:weightproperty} holds if the $q$-ary symmetric channel is not too noisy, i.e., if it fulfills (2.63) in the lecture notes.
\item List the error patterns that can be corrected by your ML decoder. Determine their weights.
\item Show the following: For a BSC with $\delta<0.5$, a code $\setc$ with minimum code word distance $\dmin$ can correct all error patterns $\vecz$ with weight 
\begin{align}
\hw(\vecz)\leq\frac{\dmin-1}{2}.\label{eq:guarantee}
\end{align}
\item Consider a binary code with code word length $n$ and minimum distance $\dmin$. Show that the error probability of an ML decoder is for a BSC with $\delta<0.5$ bounded by
\begin{align}
P_e \leq 1-\sum_{i=0}^{\lfloor\frac{\dmin-1}{2}\rfloor} \delta^i(1-\delta)^{n-i}.
\end{align} 
\end{enumerate}

\myproblem{(Problem~\ref{short:prob:hammingcode2} continued)}
\begin{enumerate}
\item An encoder for $\setc_\text{ham}$ is $\vecu\mapsto\vecx=\vecu\matg_\text{ham}$ where $\vecu\in\fieldtwo^4$. Use this encoder and your decoder from Problem~\ref{short:prob:hammingcode2}. How can you calculate an estimate $\hat{\vecu}$ from your code word estimate $\hat{\vecx}$?
\item Simulate data transmission over a BSC with $\delta=0.01,0.02,\dotsc,0.4$. Let the data bits $U^k$ be uniformly distributed on $\fieldtwo^4$. Use Monte Carlo simulation to estimate the probability of error. Plot estimates both for the code word error probability $\Pr(X^n\neq \hat{X}^n)$ and the information word error probability $\Pr(U^k\neq \hat{U}^k)$.
\item Add the bound from Problem~\ref{short:prob:hammingcode2:b}.5 to the plot.
\end{enumerate}

\myproblem
\begin{enumerate}
\item Show the following: If an $(n,k)$ binary linear block code contains the all one code word $\vecone$ then $A_i=A_{n-i}$, i.e., the number of code words of weight $i$ is equal to the number of code words of weight $n-i$ for all $i=0,1,\dotsc,n$.
\item Consider a code that has the all one vector as a code word. Suppose the code words are mapped to a signal. The duration of one binary symbol is $1$ second and $0\mapsto 1$ Volt, $1\mapsto -1$ Volt. Suppose further that the code words are used equally likely. The voltage is measured over a resistance of $1\,\Omega$. What is the average direct current (DC) through the resistance when many codewords are transmitted successively? 
\end{enumerate}

\myproblem
Your mission is to transmit 1 bit over a binary symmetric channel with crossover probability $\delta=1/4$. You use the code
\begin{align*}
\setc=\{110,001\}.
\end{align*}
\begin{enumerate}
\item Is your code linear?
\item How many errors can a minimum distance decoder correct?
\item Specify a linear code that has the same error correcting capability as $\setc$.
\item Specify a check matrix for your linear code.
\item Calculate the look up table of a syndrome decoder for your linear code.
\item Decode the observation $110$ using your syndrome decoder. 
\end{enumerate}

\myproblem
The generator matrix of a binary linear code $\setc$ is given by
\begin{align*}
\matg = \bbm
0&1&0&1\\
1&0&1&0
\ebm.
\end{align*}
\begin{enumerate}
\item Calculate all code words of $\setc$.
\item Show that the generator matrix $\matg$ is also a check matrix of $\setc$.
\end{enumerate}
The code is used on a BSC with crossover probability $\delta<\frac{1}{2}$. Each code word is used equally likely. 
\begin{enumerate}
\addtocounter{enumi}{2}
\item Calculate the rate in bits per channel use.
\item Calculate the syndrome for each error pattern of weight one. Which weight one error patterns can a syndrome decoder surely correct?
\item Add a column to the generator matrix such that the syndrome decoder can correct all weight one error patterns.
\end{enumerate}

\myproblem
The generator matrix of a binary linear code $\setc$ is given by
\begin{align*}
\matg = \bbm
1&1&0\\
1&1&1
\ebm.
\end{align*}
\begin{enumerate}
\item Calculate all code words of $\setc$.
\item Calculate the check matrix of $\setc$.
\item Show that the dual code of $\setc$ in $\fieldtwo^3$ is a subcode of $\setc$.
\item Do $(1,0,0)$ and $(0,1,0)$ belong to the same coset of $\setc$ in $\fieldtwo^3$? 
\end{enumerate}
The code is used on a BSC with crossover probability $\delta<\frac{1}{2}$. Each code word is used equally likely. 
\begin{enumerate}
\addtocounter{enumi}{4}
\item The transmitted code word is corrupted by the error pattern $(0,0,1)$. Does the syndrome decoder decode correctly?
\end{enumerate}

\myproblem
The generator matrix of a binary linear code $\setc$ is given by
\begin{align*}
\matg = \bbm
1&1&1&1\\
1&0&1&0\\
1&1&0&0
\ebm.
\end{align*}
\begin{enumerate}
\item What is the minimum distance of $\setc$?
\item Calculate a check matrix of $\setc$.
\end{enumerate}
Let $\vecc=(c_1,c_2,c_3,c_4)\in\setc$ be a codeword. The first entry $c_1$ is transmitted over $\text{BSC}_1$ with crossover probability $\delta_1=0.5$ and the bits $c_2,c_3,c_4$ are transmitted over $\text{BSC}_2$ with crossover probability $\delta_2=0.1$.
\begin{enumerate}
\addtocounter{enumi}{2}
\item Suppose the channel outputs are $\vecy=(1,1,1,0)$. Calculate its syndrome.
\item Calculate the coset of $\setc$ to which $\vecy$ belongs.
\item A syndrome decoder decodes $\vecy$. What is its codeword estimate? \emph{Hint: keep in mind that $\text{BSC}_1$ and $\text{BSC}_2$ have different crossover probabilities.} 
\end{enumerate}

% !TEX root = cc-v1.tex
\chapter{Cyclic Codes}
\label{chap:cyclic}

This chapter is about a subclass of linear codes, which is called \emph{cyclic codes}. The purpose of this chapter is threefold. First, we want to get familiar with polynomials, since these are essential for the next two chapters of this course. Second, we establish basic properties of cyclic codes, which again are going to be very useful in the upcoming chapters. Finally, we show how very efficient encoders can be built for cyclic codes.

\section{Basic Properties}

\subsection{Polynomials}
\begin{definition}\label{def:polynomials} A \emph{polynomial}\index{polynomial} $f(x)$ of degree $m$ over a field $\field{F}$ is an expression of the form
\begin{align}
f(x)=f_0+f_1x+f_2x^2+\dotsb+f_mx^m
\end{align}
where $f_i\in\field{F}, 0\leq i \leq m$, and $f_m\neq 0$. The \emph{null polynomial}\index{null polynomial} $f(x)=0$ has degree $-\infty$. The set of all polynomials over $\field{F}$ is denoted by $\field{F}[x]$.
\end{definition}
\begin{example}
Let's consider polynomials over $\fieldtwo$. According to Definition~\ref{def:polynomials}, the polynomial $1+x$ has degree $1$, the polynomial $1$ has degree $0$ and the polynomial $0$ has degree $-\infty$. The product $(1+x)(1+x)=1+x^2$ has degree $2$ and the product $(1+x)\cdot 0=0$ has degree $-\infty$.
\end{example}
\subsubsection{Modulo Arithmetic}
\index{modulo arithmetic}\index{mod@\mod \see{modulo arithmetic}}
In this chapter, we extensively need division by a polynomial. Given are two polynomials $p(x)$ (the ``dividend'') and $q(x)\neq 0$ (the ``divisor''). Then there exist unique polynomials $m(x)$ (the ``quotient'') and $r(x)$ (the ``remainder'') such that
\begin{align}
p(x)=q(x)m(x)+r(x),\quad\text{with }\deg r(x)<\deg q(x).\label{eq:cyclic:mod}
\end{align}
The expression ``$p(x)\mod q(x)$'' is defined as the remainder $r(x)$ in \eqref{eq:cyclic:mod}. The polynomials $m(x)$ and $r(x)$ can be calculated by \emph{polynomial long division}\index{polynomial long division}.
\begin{example}
Let $p(x)=1+x^3+x^4$ and $q(x)=1+x+x^3$ be two polynomials over $\fieldtwo$. Then
\begin{align*}
\begin{array}{rrrrrr}
&1&+x&&&\\
\cline{2-6}
1+x+x^3)&1&&&+x^3&+x^4\\
&&x&+x^2&&+x^4\\
\cline{2-6}
&1&+x&+x^2&+x^3&\\
&1&+x&&+x^3&\\
\cline{2-5}
&&&x^2&&
\end{array}
\end{align*}
Thus, $m(x)=1+x$ and $r(x)=x^2$, and we can write $p(x)$ as
\begin{align*}
p(x)&=1+x^3+x^4\\
&=(1+x)(1+x+x^3)+x^2\\
&=m(x)q(x)+r(x).
\end{align*}
In particular, we have shown
\begin{align}
(1+x^3+x^4)\mod (1+x+x^3)=x^2.
\end{align}
\end{example}
\subsection{Cyclic Codes}
\begin{definition}
Let $\vecv=(v_0,v_1,\dotsc,v_{n-1})$ be a vector. The vector $\vecw$ is a \emph{cyclic shift}\index{cyclic shift} of $\vecv$ if for some integer $k$
\begin{align}
\forall j=0,1,\dotsc,n-1\colon w_j=v_{(j-k)\mod n}.
\end{align}
\end{definition}
\begin{example}
The cyclic shifts of the vector $(a,b,c)$ are $(a,b,c)$, $(c,a,b)$, and $(b,c,a)$ where the entries of the original vector are shifted to the right by $k=0$, $1$, and $2$ entries, respectively.
\end{example}
Let $\setc$ be a linear code with block length $n$. Let $\vecc=(c_0,c_1,\dotsc,c_{n-1})$ be a code word. We represent it by its generating function
\begin{align}
c(x)=\sum_{i=0}^{n-1}c_i x^i.
\end{align}
We say $\setc$ is a \emph{cyclic code}\index{cyclic code} if all cyclic shifts of $\vecc$ are also code words. A code word in $\setc$ of least non-negative degree is called a \emph{generator polynomial}\index{generator polynomial}.

Let $g(x)$ be the generator polynomial of $\setc$. The following properties hold.
\begin{enumerate}
\item Let $c(x)$ be a code word and $c^{(i)}(x)$ the code word that results from a cyclic shift of the entries of $c(x)$ to the right by $i$ positions. Then
\begin{align}
c^{(i)}(x)=x^i c(x)\mod(x^n-1).
\end{align}
\item If $c(x)$ is a code word in $\setc$, then for any polynomial $p(x)$, $p(x)c(x)\mod(x^n-1)$ is also a code word.
\item If $g(x)$ is a generator polynomial, then $g_0\neq 0$.
\item If $p(x)$ is a polynomial such that $p(x)\mod (x^n-1)$ is a code word, then $g(x)$ divides $p(x)$.
\item A polynomial $g(x)$ with $g_0\neq 0$ is a generator polynomial of a cyclic code with code word length $n$ if and only if $g(x)$ divides $x^n-1$.
\item The dimension of $\setc$ is $n-\deg g(x)$.
\item Let $h(x)$ be a polynomial with
\begin{align}
h(x)g(x)=x^n-1.\label{eq:cyclic:hx}
\end{align}
Then
\begin{align}
c(x)\mod (x^n-1)\in\setc\Leftrightarrow h(x)c(x)\mod (x^n-1)=0.
\end{align}
The polynomial $h(x)$ is called a \emph{check polynomial}\index{check polynomial}.
\end{enumerate}

\subsection{Proofs}

\subsubsection{Property 1}
We show Property 1 for $i=1$. We have
\begin{align}
c^{(1)}(x)&=c_{n-1}+c_0 x+\dotsb+c_{n-2}x^{n-1}\\
xc(x)&=\phantom{c_{n-1}+} c_0 x+\dotsb+c_{n-2}x^{n-1}+c_{n-1}x^n.
\end{align}
Therefore,
\begin{align}
&c^{(1)}(x)= x c(x)-c_{n-1}(x^n-1)\\
\Rightarrow &x c(x)=c_{n-1}(x^n-1)+c^{(1)}(x)\\
\Rightarrow &c^{(1)}(x)=xc(x)\mod(x^n-1).
\end{align}
For $i>1$, the property follows by repeatedly applying the property for $i=1$. 
\subsubsection{Property 2}
We have
\begin{align}
p(x)c(x)\mod (x^n-1)&=\sum_{i=0}^{\deg p(x)} p_ix^i\cdot c(x)\mod(x^n-1)\\
&=\underbrace{\sum_{i=0}^{n-1} p_i\underbrace{c^{(i)}(x)}_{(\star)}}_{(\star\star)}.
\end{align}
$(\star)\in\setc$ follows by Property 1 and $(\star\star)\in\setc$ follows because $\setc$ is linear.
\subsubsection{Property 3}
Suppose $g_0=0$. Then $x^{-1}g(x)\in\setc$ and $\deg x^{-1}g(x)=\deg g(x)-1$. By definition, $g(x)$ is the code word of lowest degree. This is a contradiction, thus $g_0\neq 0$. 
\subsubsection{Property 4}
Suppose $g(x)\nmid p(x)$. Then $p(x)=g(x)m(x)+r(x)$  for some polynomials $m(x)$ and $r(x)$ with $\deg r(x)<\deg g(x)$. By assumption, $p(x)\mod(x^n-1)\in\setc$ and by Property 2, $g(x)m(x)\mod(x^n-1)\in\setc$. Since $\setc$ is linear, also 
\begin{align*}
r(x)=p(x)\mod(x^n-1)-m(x)g(x)\mod(x^n-1)\in\setc.
\end{align*}
 This contradicts that $g(x)$ is the code word of lowest degree. Thus 
\begin{align*}
p(x)\mod(x^n-1)\in\setc\Rightarrow g(x)\mid p(x).
\end{align*}
\subsubsection{Property 5}
``$\Rightarrow$'': Since $\setc$ is linear, $0=(x^n-1)\mod (x^n-1)\in\setc$. Thus, by Property 4, $g(x)\mid (x^n-1)$.

``$\Leftarrow$'': Suppose $g(x)\mid(x^n-1)$. We need to construct a cyclic code with $g(x)$ as the code word of least degree. Define
\begin{align}
\setc:=\Bigl\{c(x)\colon c(x)=m(x)g(x)\mod(x^n-1),\;m(x)\in\field{F}[x]\Bigr\}.
\end{align}
The set $\setc$ is linear since for two polynomials $p(x),q(x)\in\field{F}[x]$ also $p(x)+q(x)\in\field{F}[x]$. Furthermore, suppose $c(x)\in\setc$. Then
\begin{align}
c^{(i)}(x)&=x^i c(x)\mod(x^n-1)\\
&=x^i m(x)g(x)\mod(x^n-1)\\
&=m'(x)g(x)\mod(x^n-1)\in\setc.
\end{align}
Thus, $\setc$ is cyclic. Suppose $c(x)$ is a code word. We show that $c(x)$ is divisible by $g(x)$, i.e., $g(x)$ is indeed the code word of least weight and thereby the generator polynomial of $\setc$. We have
\begin{align}
c(x)\mod g(x)&=[m(x)g(x)\mod(x^n-1)]\mod g(x)\\
&=m(x)g(x)\mod g(x)\\
&=0
\end{align}
where we used $p(x)\mid q(x)\Rightarrow [r(x)\mod q(x)]\mod p(x)=r(x)\mod p(x)$ and $g(x)\mid(x^n-1)$.
\subsubsection{Property 6}
By Property 4, if $c(x)\in\setc$, then there exists a polynomial $m(x)$ with $\deg m(x)\leq n-\deg g(x)-1$ such that $c(x)=m(x)g(x)$. Thus
\begin{align}
c(x)&=\sum_{i=0}^{n-\deg g(x)-1}m_i x^i g(x)\\
&=\sum_{i=0}^{n-\deg g(x)-1}m_i g^{(i)}(x).
\end{align}
Thus $\setb=\{g^{(i)}(x),i=0,\dotsc,n-\deg g(x)-1\}$ spans $\setc$. Furthermore, the $g^{(i)}(x)$ are linearly independent. Thus, $\setb$ is a basis of $\setc$ and the dimension of $\setc$ is $|\setb|=n-\deg g(x)$.

\subsubsection{Property 7}

``$\Rightarrow$'': Suppose $c(x)\mod (x^n-1)\in\setc$. Then by Property 4, $c(x)=m(x)g(x)$ for some polynomial $m(x)$. Then
\begin{align}
c(x)h(x)\mod (x^n-1)&=m(x)g(x)h(x)\mod (x^n-1)\\
&=m(x)g(x)h(x)\mod [g(x)h(x)]\\
&=0.
\end{align}
``$\Leftarrow$'': We have
\begin{align}
&c(x)h(x)\mod (x^n-1)=0\\
\Rightarrow &c(x)h(x)\mod [g(x)h(x)]=0\\
\Rightarrow &c(x)\mod g(x)=0\\
\Rightarrow &c(x)\in\setc.
\end{align}

\section{Encoder}
\label{sec:cyclic:encoder}
\subsection{Encoder for Linear Codes}
\begin{definition}
Let $\setc$ be an $(n,k)$ linear code over $\myfield$. An \emph{encoder}\index{encoder} enc is a bijective mapping
\begin{align}
\text{enc}\colon\myfield^k\to\setc.
\end{align} 
\end{definition}
An encoder does nothing but indexing the code words in $\setc$ by vectors in $\myfield^k$ that represent data to be transmitted. Let $\matg$ be a generator matrix of $\setc$. Let $\vecu\in\myfield^k$. A natural definition of an encoder is the mapping
\begin{align}
\vecu\mapsto\vecu\matg.
\end{align}
If the generator matrix is of the form
\begin{align}
\matg=[\mati_k,\matp]
\end{align}
then
\begin{align}
\vecu\mapsto[\vecu,\vecu\matp]\label{eq:cyclic:systematic}
\end{align}
i.e., the data appears as cleartext in the code word. This is sometimes useful in practice. Encoders of the form \eqref{eq:cyclic:systematic} are called \emph{systematic}\index{systematic encoder}.
\subsection{Efficient Encoder for Cyclic Codes}
Since cyclic codes are linear, encoding can also be performed by multiplying a generator matrix with the data vector. More efficient are encoders that are based on multiplication of polynomials. Let $g(x)$ be a generator polynomial of an $(n,k)$ cyclic code. Any data vector $\vecu\in\myfield^k$ can be represented by a polynomial $u(x)\in\myfield[x]$ with $\deg u(x)\leq k-1$. The simplest encoder is the mapping
\begin{align}
u(x)\mapsto u(x)g(x).\label{eq:cyclic:encoder}
\end{align}
Suppose now that we want to have a systematic encoder. Since it simplifies the derivations, we place the data in the right part of the code word, i.e., we consider a mapping of the form
\begin{align}
u(x)\mapsto t(x)+x^{n-k}u(x)
\end{align}
where $\deg t(x)<n-k$. The polynomial $t(x)$ has to be chosen such that $t(x)+x^{n-k}u(x)$ is a code word, i.e., that it is a multiple of $g(x)$. Recall that $\deg g(x)=n-k$. We calculate
\begin{align}
&[t(x)+x^{n-k}u(x)]\mod g(x)=t(x)+x^{n-k}u(x)\mod g(x)\overset{!}{=}0\\
&\Rightarrow t(x)=-x^{n-k}u(x)\mod g(x).
\end{align}
The mapping thus becomes
\begin{align}
u(x)\mapsto -x^{n-k}u(x)\mod g(x)+x^{n-k}u(x).\label{eq:cyclic:encoderright}
\end{align}
To put the data in the left part of the code word, we recall that for a cyclic code, any shift of a code word is again a code word. Therefore, we cyclically shift the code word in \eqref{eq:cyclic:encoderright} to the right by $k$ positions. The resulting systematic encoder is
\begin{align}
u(x)\mapsto u(x)-x^k[x^{n-k}u(x)\mod g(x)].\label{eq:cyclic:encoderleft}
\end{align}
The polynomial multiplications in \eqref{eq:cyclic:encoder},\eqref{eq:cyclic:encoderright}, and \eqref{eq:cyclic:encoderleft} can be implemented very efficiently in hardware, which is one of the reasons why cyclic codes are widely used in practice.

\section{Syndromes}

Let's recall the definition of cosets from Section~\ref{subsec:cosets}. Let $\setc$ be a linear code in $\myfield^n$. The cosets of $\setc$ in $\myfield^n$ are the equivalence classes for the equivalence relation
\begin{align}
\vecv\sim\vecw\Leftrightarrow \vecv-\vecw\in\setc.\label{eq:cosettest}
\end{align}
The cosets are disjoint, each is of size $|\setc|$ and their union is $\myfield^n$. In particular, there are $|\myfield^n|/|\setc|$ different cosets. The syndrom indexes the cosets, i.e., if the syndrome of a vector $\vecv$ tells us to which coset $\vecv$ belongs. To paraphrase \eqref{eq:cosettest}, two vectors $\vecv$ and $\vecw$ belong to the same coset if their difference is a code word. Thus, to characterize syndromes for a specific class of codes, we should look for an appropriate test if a vector is a code word or not.

\subsection{Syndrome Polynomial}
\label{subsec:cyclic:syndrome}
Cyclic codes are defined by a generator polynomial $g(x)$ and the code word test is
\begin{align}
v(x)\mod g(x)\overset{?}{=}0.
\end{align}
This gives us imediately a test if two polynomials $v(x),w(x)$ belong to the same coset:
\begin{align}
&[v(x)-w(x)]\mod g(x)\overset{?}{=}0\\
\Leftrightarrow &v(x)\mod g(x)\overset{?}{=}w(x)\mod g(x).
\end{align}
Thus, for a cyclic code $\setc$ with generator polynomial $g(x)$, the polynomial $s(x)=v(x)\mod g(x)$ indexes the cosets of $\setc$ and $s(x)$ is therefore called the \emph{syndrome polynomial}\index{syndrome polynomial}.

\subsection{Check Matrix}

For general linear codes, syndromes are calculated by multiplication with a check matrix. We now show how check matrices for cyclic codes can be constructed. Let $\setc$ be an $(n,k)$ cyclic code over some field $\myfield$. Let $g(x)\in\myfield[x]$ be the generator polynomial of $\setc$. According to \eqref{eq:cyclic:hx}, the check polynomial $h(x)$ is defined by
\begin{align}
h(x)g(x)=x^n-1.
\end{align}
By Property~6 of cyclic codes, the degree of $g(x)$ is $n-k$. Therefore, the degree of $h(x)$ is $k$ and $h_0\neq 0$, i.e., $h(x)$ is of the form
\begin{align}
h(x)=h_0+h_1x+\dotsb+h_k x^k,\quad h_0,h_k\neq 0.\label{eq:hxform}
\end{align}
By Property 7 of cyclic codes, $h(x)$ defines a test for if a polynomial $c(x)$ is in the code $\setc$, i.e.,
\begin{align}
c(x)\mod (x^n-1)\in\setc\Leftrightarrow h(x)c(x)\mod (x^n-1)=0.\label{eq:polycheck}
\end{align}
We now want to use $h(x)$ to construct a check matrix for $\setc$. For clarity of exposure, we assume $\deg c(x)\leq n-1$, i.e., $c(x)\mod (x^n-1)=c(x)$. (If this is not the case, we can define $\tilde{c}(x):=c(x)\mod(x^n-1)$ and then use in the following derivation $\tilde{c}(x)$ instead of $c(x)$). We have
\begin{align}
 h(x)c(x)\mod (x^n-1)=0&\Leftrightarrow \sum_{i=0}^{n-1}c_ix^ih(x)\mod(x^n-1)=0\\
&\oiff{a} \sum_{i=0}^{n-1}c_ih^{(i)}(x)=0\\
&\Leftrightarrow \sum_{i=0}^{n-1}c_i \sum_{j=0}^{n-1}[h^{(i)}]_j x^j=0\\
&\Leftrightarrow \sum_{j=0}^{n-1} x^j  \sum_{i=0}^{n-1}c_i[h^{(i)}]_j=0\\
&\Leftrightarrow \forall j\in\{0,1,\dotsc,n-1\}\colon\sum_{i=0}^{n-1}c_i[h^{(i)}]_j=0\label{eq:sumcheck}
\end{align}
where (a) follows by Property~1 of cyclic codes. The scalar $[h^{(i)}]_j$ is the $j$th coefficient of the $i$th cyclic shift of $h(x)$, i.e.,  
\begin{align}
[h^{(i)}]_j=h_{(j-i)\mod n}.
\end{align}
We define
\begin{align}
\vech_j&:=\bigl([h^{(0)}]_j,[h^{(1)}]_j,\dotsc,[h^{(n-1)}]_j\bigr)\\
&=\bigl(h_{(j-0)\mod n},h_{(j-1)\mod n},\dotsc,h_{(j-(n-1))\mod n}\bigr).
\end{align}
The condition \eqref{eq:sumcheck} can now be written as
\begin{align}
c(x)\in\setc\Leftrightarrow \forall j=0,1,\dotsc,n-1\colon \vecc\vech_j^T=0.\label{eq:innercheck}
\end{align}
From the ``$\Rightarrow$'' direction of \eqref{eq:innercheck}, it follows that $\vech_j\in\setc^\perp$, for all $j=0,1,2,\dotsc,n-1$, see Definition~\ref{def:linear:dual}. If we can choose $n-k$ linearly independent vectors $\vech_j$, then by Proposition~\ref{prop:linear:dualdimension}, these vectors would form a basis of $\setc^\perp$. Using these vectors as rows of a matrix would form a generator matrix of $\setc^\perp$, which by Definition~\ref{def:linear:checkmatrix} is a check matrix of $\setc$.

It is convenient to choose $j=k,k+1,\dotsc,n-1$. The resulting matrix is then of the form
\begin{align}
\mathh=\bbm \vech_k\\\vech_{k+1}\\\vdots\\\vech_{n-1}\ebm
=\bbm
h_k&h_{k-1}&\dotsb&h_{0}&0&\dotsb&0\\
0&h_k&h_{k-1}&\dotsb&h_{0}&\ddots&\vdots\\
\vdots&\ddots&\ddots&\ddots&&\ddots&\\
0&\dotsb&0&h_k&h_{k-1}&\dotsb&h_{0}
\ebm.
\end{align}
By \eqref{eq:hxform}, $h_0\neq 0$ and $h_k\neq 0$, which implies that the rows of $\mathh$ are linearly independent. We conclude that $\mathh$ is a check matrix of $\setc$.

\clearpage

\section{Problems}

\myproblem Consider a $(15,11)$ binary cyclic code with generator polynomial $g(x) = 1 + x + x^4$:
\begin{enumerate}
\item Determine the check polynomial.
\item Let $u(x) = x + x^2 + x^3$. Encode $u(x)$ by each of the encoders \eqref{eq:cyclic:encoder},\eqref{eq:cyclic:encoderright}, and \eqref{eq:cyclic:encoderleft}.
\item For the code polynomial $v(x) = 1 + x + x^3 + x^4 + x^5 + x^9 + x^{10} + x^{11} + x^{13}$, determine
the data polynomial for each of the encoders \eqref{eq:cyclic:encoder},\eqref{eq:cyclic:encoderright}, and \eqref{eq:cyclic:encoderleft}.
\end{enumerate}

\myproblem
Let $g(x)=1+x^2$ be a generator polynomial of a block length $5$ cyclic code over $\myfield_2$.
\begin{enumerate}
\item What is the dimension of the code?
\item Is $v(x)=1+x+x^2+x^3+x^4$ a code word?
\item Systematically encode the bits $011$. 
\end{enumerate}

\myproblem
Let $\setc$ be a binary cyclic code with blocklength $n=4$ and dimension $k=2$.
\begin{enumerate}
\item Is $v(x)=x^2+x^3$ a code word polynomial?
\item Show that $x^2+1$ is the only generator polynomial that $\setc$ can have.
\item Is $w(x)=1+x^2+x^3$ a code word polynomial?
\item What is the minimum distance of the code?
\item A systematic encoder encodes $11\mapsto 11c_2c_3$. Calculate $c_2$ and $c_3$.
\end{enumerate}

\myproblem
Let $g(x)=1+x^3$ be the generator polynomial of a binary cyclic code $\setc$ of \text{block length $n=6$}.
\begin{enumerate}
\item What is the dimension of $\setc$?
\item What is the check polynomial of $\setc$?
\item Calculate a generator matrix of $\setc$.
\item Is $v(x)=x+x^2+x^3+x^4+x^5$ a code word?
\item List all cyclic subcodes of $\setc$ and calculate their dimension.
\end{enumerate}

\myproblem
\begin{enumerate}
\item Show that $g(x) = 1 + x + x^4 + x^5 + x^7 + x^8 + x^9$ generates a binary $(21,12)$ cyclic code.
\end{enumerate}

\myproblem For the $(15,11)$ binary Hamming code with generator polynomial $g(x) = 1 + x + x^4$:
\begin{enumerate}
\item Determine the check polynomial.
\item Determine the generator matrix $\boldsymbol{G}$ and the check matrix $\boldsymbol{H}$ for this code in non-systematic form.
\item Determine the generator matrix $\boldsymbol{G}$ and the check matrix $\boldsymbol{H}$ for this code in systematic form.
\end{enumerate}

\myproblem Let $g(x)$ be the generator polynomial of a binary cyclic code of length $n$.
\begin{enumerate}
\item Show that if $g(x)$ has $1+x$ as a factor, the code contains no codewords of odd weight.
\item Show that if $1+x$ is not a factor of $g(x)$, the code contains a codeword consisting of all ones.
\item Show that an $(n,k)$ binary cyclic code with $0<k<n$ has minimum weight at least three if $n$ is the smallest integer such that $g(x)$ divides $x^n-1$.
\end{enumerate}

\myproblem Let $v(x)$ be a code polynomial in a cyclic code of length $n$. Let $\ell$ be the smallest positive integer such that $v^{(\ell)}(x) = v(x)$. Show that $\ell$ is a factor of $n$.

\myproblem Let $\setc_1$ and $\setc_2$ be two cyclic codes of length $n$ that are generated by $g_1(x)$ and $g_2(x)$, respectively. Show that the codeword polynomials common to both $\setc_1$ and $\setc_2$ also form a cyclic code $\setc_3$. Determine the generator polynomial of $C_3$. If $d_1$ and $d_2$ are the minimum distances of $\setc_1$ and $\setc_2$, respectively, what can you say about the minimum distance of $\setc_3$?

\myproblem The polynomial $g(x) = 1 + x + x^4 + x^5 + x^7 + x^8 + x^9$ generates a binary $(21,12)$ cyclic code.
\begin{enumerate}
\item Let $r(x) = 1 + x^4 + x^{16}$ be a received polynomial. Compute the syndrome of $r(x)$.
\end{enumerate}

\myproblem Consider a blocklength $n=15$ binary Hamming code with generator polynomial $g(x) = 1 + x + x^4$. Codewords are transmitted over a BSC with crossover probability $0\leq \delta<0.5$. Your job is to implement an encoder and an ML-decoder.
\begin{enumerate}
\item What is the dimension $k$ of the code?
\item Write a Matlab function $\texttt{enc}$ that takes a binary string of length $k$ as argument and puts out a codeword $\vecc$ in vector form. \emph{Hint:} use the multiplication of polynomials in your implementation.
\item Implement a Matlab function $\texttt{bsc}$ that takes the codeword $\vecc$ and the crossover probability $\delta$ as argument and puts out a noisy version $\vecy$. \emph{Hint:} Implement your function by adding a random error pattern to $\vecc$ in $\fieldtwo$.
\item Form a lookup table with the most probable error pattern in each coset. Sort your table such that it can be indexed by the corresponding syndroms.
\item Implement the Matlab function $\texttt{mldec}$, see Problem 1.3, Exercise 5. Your function should return an estimate $\hat{\vecc}$ of the transmitted codeword.
\item Implement the Matlab function $\texttt{dec}$ that calculates from $\hat{\vecc}$ an estimate $\hat{m}$ of the original message $m$.
\item Estimate the end-to-end error probability $\Pr(m\neq\hat{m})$ of your code by Monte Carlo simulation for $\delta=0,0.1,0.2,0.3,0.4$.
\end{enumerate}

% !TEX root = cc-v1.tex

%%%%%%%%%%%%%%%%%%%%%%%%%%%%%%%%%%%%%%%%
%
%
%  REED SOLOMON
%
%
%%%%%%%%%%%%%%%%%%%%%%%%%%%%%%%%%%%%%%%%
\chapter{Reed--Solomon Codes}
\label{chap:rs}
In this and the next chapter, we develop the most important algebraic codes, namely Reed--Solomon (RS) codes and Bose--Chaudhuri--Hocquenghem (BCH) codes. 

\section{Minimum Distance Perspective}

So far, the three parameters of interest were $R$ (the rate), $n$ (block length, delay, complexity), and $P_e$ (probability of error). The three parameters depend both on the code and the channel. We observed a trade-off between these three parameters. We now slightly change our perspective. We consider the parameters $(n,k,d)$ of a linear block code with minimum distance $d$. If a code with parameters $(n,k,d)$ is used on a channel, these three parameters can be related to $(R,n,P_e)$. We discussed this in Section~\ref{sec:linear:bound} and around \eqref{eq:linear:mldistance}. However, the two perspectives are not equivalent, e.g., fixing $(n,k)$ and searching for a code that maximizes $d$ leads in general to a code that is different from the code that results from fixing $(R,n)$ and minimizing $P_e$.
\subsection{Correcting $t$ Errors}
The ML decoder for $q$-ary symmetric channels was stated in \eqref{eq:linear:mldistance} as
\begin{align*}
\dml(\vecy)=\argmin_{\vecc\in\setc}\hw(\vecy-\vecc).
\end{align*}
This is a \emph{minimum distance decoder}\index{minimum distance decoder}, since it chooses the codeword that is closest to the observed channel output vector in terms of Hamming distance. The very same decoder can be used on any $q$-ary channel. It decodes correctly as long as the error pattern $\vecz$ is such that the channel output $\vecy$ remains close enough to the codeword that was actually transmitted. 
\begin{theorem}\label{theo:rs:mdd}
Let $\vecc$ be the codeword of a linear code $\setc\subseteq\myfield_q^n$ with minimum distance $d$ and let $\vecy=\vecc+\vecz$ be the output of a $q$-ary channel where $\vecz\in\myfield_q^n$ is the error pattern. Define
\begin{align}
\hat{\vecc}=\vecy-\argmin_{\vecz\in\{\vecy+\setc\}}\hw(\vecz).
\end{align}
Then 
\begin{align}
\hat{\vecc}=\vecc\text{ if }\hw(\vecz)\leq t:=\left\lfloor\frac{d-1}{2}\right\rfloor
\end{align}
that is, the minimum distance decoder is guaranteed to decode correctly if the transmitted codeword gets corrupted in at most $t$ coordinates.
\end{theorem}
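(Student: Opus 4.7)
The plan is to reduce the claim to a statement about the coset $\vecy+\setc$: if $\vecy=\vecc+\vecz$ with $\vecc\in\setc$, then the coset equals $\vecz+\setc$, and the decoder returns the correct codeword $\vecc$ if and only if the true error $\vecz$ is (uniquely) the minimum-weight representative of that coset. So the whole theorem reduces to showing that, under the hypothesis $\hw(\vecz)\le t$, every other element $\vecz+\vecc'$ of the coset (with $\vecc'\in\setc\setminus\{\veczero\}$) has strictly larger Hamming weight than $\vecz$.

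First I would record the key inequality: for any $\veca,\vecb\in\myfield_q^n$, the Hamming weight satisfies the triangle inequality $\hw(\veca+\vecb)\ge \hw(\veca)-\hw(\vecb)$, since $(\veca+\vecb)_i\ne 0$ whenever $\veca_i\ne 0$ and $\vecb_i=0$. This is either cited from Problem~\ref{prob:linear:metric} (where $\hd$ is shown to be a metric, from which $\hw(\veca+\vecb)\le \hw(\veca)+\hw(\vecb)$ follows via $\hd(\veca,\vecb)=\hw(\veca-\vecb)$) or proved in one line.

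Next, fix a nonzero $\vecc'\in\setc$ and apply this inequality to $\vecz+\vecc'=\vecc'-(-\vecz)$. Using \eqref{eq:linear:minimumweight}, the nonzero codeword $\vecc'$ has $\hw(\vecc')\ge d$. Therefore
\begin{align*}
\hw(\vecz+\vecc')\ \ge\ \hw(\vecc')-\hw(\vecz)\ \ge\ d-t\ \ge\ d-\frac{d-1}{2}\ =\ \frac{d+1}{2}\ >\ t\ \ge\ \hw(\vecz).
\end{align*}
Hence every coset element other than $\vecz$ has strictly greater weight, so $\vecz$ is the unique minimizer of $\hw$ over $\vecy+\setc=\vecz+\setc$, the decoder picks $\hat{\vecz}=\vecz$, and $\hat{\vecc}=\vecy-\vecz=\vecc$.

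There is no real obstacle; the only subtlety is making sure that the $\argmin$ in the decoder's definition is well-defined in the sense needed, which is handled automatically because the chain above gives \emph{strict} inequality and therefore uniqueness. A minor bookkeeping point is the floor in $t=\lfloor(d-1)/2\rfloor$: one must check the bound $d-t\ge (d+1)/2$ for both parities of $d$, which is immediate from $t\le (d-1)/2$.
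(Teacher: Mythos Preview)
Your proof is correct. The paper does not actually supply a proof of this theorem: it defers to Problem~\ref{prob:rs:mdd}, which in turn simply asks the reader to prove the statement. So there is nothing to compare against, and your argument---reducing to the coset $\vecy+\setc=\vecz+\setc$ and using the reverse triangle inequality together with \eqref{eq:linear:minimumweight} to show that every other coset representative has weight strictly exceeding $t$---is exactly the standard one the exercise is designed to elicit.
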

\begin{proof}
See Problem~\ref{prob:rs:mdd}
\end{proof}
The larger the minimum distance $d$, the greater is the number $t$ of errors that we can guarantee to correct. We next relate $d$ to block length $n$ and code dimension $k$.

\subsection{Singleton Bound and MDS Codes}
\label{subsec:singleton}
Let $\setx$ be the input alphabet of some channel and let $\setc\subseteq \setx^n$ be a (not necessarily linear) block code. Let $d$ be the minimum distance of the code. This means that any two codewords differ in at least $d$ positions. After erasing the values in any $d-1$ positions, the two codewords still differ in at least one of the remaining $n-(d-1)$ positions. Therefore, there can be at most $|\setx|^{n-d+1}$ codewords in $\setc$. 
\begin{definition}
Let $\setc$ be a (not necessarily linear) code with alphabet $\setx$ and $|\setx|^k$ codewords. A set of $k$ coordinates where the codewords run through all $|\setx|^k$ possible $k$-tuples is called an \emph{information set}\index{information set}.
\end{definition}

\begin{theorem}[Singleton Bound]\label{rs:theo:singleton}
A (not necessarily linear) code $\setc$ with minimum Hamming distance $d$ over an alphabet $\setx$ can have at most 
\begin{align}
|\setc|\leq|\setx|^{n-d+1}
\end{align}
codewords. This bound is called the \emph{Singleton bound}\index{Singleton bound}. Equality holds if and only if any set of $k=n-d+1$ coordinates is an information set. A code that meets the Singleton bound with equality is called a \emph{maximum distance separable}\index{maximum distance separable code} (MDS)\index{MDS code \see{maximum distance separable code}} code.
\end{theorem}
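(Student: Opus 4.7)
The plan is to prove the bound by a projection argument and then deduce the equality condition. First I would fix any subset $\set{S}\subseteq\{1,2,\dotsc,n\}$ of $d-1$ coordinates and define the projection
\begin{align*}
\pi_{\set{S}}\colon \setc\to\setx^{n-d+1}
\end{align*}
that deletes the coordinates indexed by $\set{S}$ from each codeword. The central observation is that $\pi_{\set{S}}$ must be injective: if two distinct codewords $\vecc_1\neq \vecc_2$ in $\setc$ satisfied $\pi_{\set{S}}(\vecc_1)=\pi_{\set{S}}(\vecc_2)$, they would agree on all $n-d+1$ coordinates outside $\set{S}$ and therefore differ in at most $|\set{S}|=d-1$ positions, contradicting $\hd(\vecc_1,\vecc_2)\geq d$. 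Injectivity immediately gives $|\setc|=|\pi_{\set{S}}(\setc)|\leq|\setx|^{n-d+1}$, which is the Singleton bound.

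For the equality characterization, I would argue both directions. Suppose first that $|\setc|=|\setx|^{n-d+1}$. Then for any choice of $\set{S}$ as above, the injection $\pi_{\set{S}}$ has domain and codomain of equal (finite) size, so it is a bijection onto $\setx^{n-d+1}$. Equivalently, the $n-d+1$ coordinates outside $\set{S}$ run through all $|\setx|^{n-d+1}$ possible tuples as $\vecc$ varies over $\setc$, i.e., they form an information set. Since $\set{S}$ was arbitrary, every set of $n-d+1$ coordinates is an information set. Conversely, if some set of $k=n-d+1$ coordinates is already an information set, then by the definition of an information set $|\setc|=|\setx|^{n-d+1}$, so the bound is met with equality; the quantifier ``any'' in the theorem then follows from the previous direction.

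This proof is essentially a counting argument, so there is no hard technical step; the only subtlety is being careful that the projection argument is really bidirectional, so that one set of $n-d+1$ information coordinates forces all such sets to be information sets. I would make sure to state this symmetry explicitly rather than rely on the reader to re-run the projection argument for each $\set{S}$.
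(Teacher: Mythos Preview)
Your proof is correct and follows the same erasure/projection argument the paper gives in the paragraph preceding the theorem: delete any $d-1$ coordinates, observe that distinct codewords remain distinct, and count. You additionally supply a clean proof of the equality characterization, which the paper merely states; your handling of the ``any'' quantifier via the bijection argument is sound.
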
 
The only binary MDS codes are the trivial $(n,n,1)$ code, the $(n,n-1,2)$ single parity check code and the $(n,1,n)$ repetition code, see Problem~\ref{prob:rs:binarymds}. This motivates us to look at non-binary codes.

\section{Finite Fields}\label{sec:finite}

\begin{theorem}
Let $q$ be a positive integer. There exists a finite field with $q$ elements if and only if $q=p^m$ for some prime number $p$ and a positive integer $m$. All finite fields with $q$ elements are isomorphic to each other.
\end{theorem}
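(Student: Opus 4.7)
The plan is to split the biconditional into three parts: necessity of the prime-power form, existence for every prime power, and uniqueness up to isomorphism. Throughout, I would invoke standard results from abstract algebra as needed, in line with the notes' convention of stating such results without proof.

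For necessity, start with any finite field $\field{F}$ of order $q$. The sequence $1,\,1+1,\,1+1+1,\,\ldots$ cannot be injective in a finite set, so the smallest positive integer $p$ with $p \cdot 1 = 0$ exists; this is the characteristic. If $p$ factored as $p = ab$ with $1 < a, b < p$, then $(a \cdot 1)(b \cdot 1) = p \cdot 1 = 0$ while $a \cdot 1, b \cdot 1 \neq 0$, contradicting that the nonzero elements of a field form a group under multiplication; hence $p$ is prime. The subset $\{0,\,1,\,2\cdot 1,\,\ldots,\,(p-1)\cdot 1\}$ is then a subfield isomorphic to $\field{F}_p$, and $\field{F}$ becomes a finite-dimensional vector space over $\field{F}_p$. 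Counting vectors via a basis of some dimension $m$ yields $|\field{F}| = p^m$.

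For existence, given $q = p^m$ with $m \geq 1$, I would work inside an algebraic closure $\overline{\field{F}_p}$ and consider the roots of $f(x) = x^q - x$. Since $f'(x) = -1$ in characteristic $p$, the polynomial $f$ shares no common factor with its derivative, so it has exactly $q$ distinct roots in $\overline{\field{F}_p}$; call this root set $\field{F}_q$. Using the Frobenius identity $(a+b)^p = a^p + b^p$ in characteristic $p$, iterated $m$ times, together with $(ab)^q = a^q b^q$ and the fact that $\alpha^q = \alpha$ implies $(\alpha^{-1})^q = \alpha^{-1}$ for nonzero $\alpha$, I would verify that $\field{F}_q$ is closed under the field operations, so it is itself a field of order $q$. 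For uniqueness, let $\field{F}$ be any field of order $q$; its multiplicative group has order $q-1$, so by Lagrange's theorem $\alpha^{q-1} = 1$ for every nonzero $\alpha$, whence $\alpha^q = \alpha$ for every $\alpha \in \field{F}$. Thus $\field{F}$ consists precisely of the $q$ roots of $x^q - x$ inside an algebraic closure of its prime subfield, i.e., it is a splitting field of this polynomial; invoking the standard uniqueness of splitting fields up to isomorphism closes the argument.

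The hard part will be the existence direction. It relies on the existence of an algebraic closure of $\field{F}_p$ (or at least of a finite extension containing all roots of $x^q - x$), and on verifying that the Frobenius-fixed set is a subfield. A cleaner but equivalent route would be to construct $\field{F}_q$ directly as $\field{F}_p[x]/(f(x))$ for an irreducible $f$ of degree $m$, but this only shifts the difficulty to proving existence of such an irreducible polynomial. Either route requires genuine input beyond what is developed in the excerpt, which is presumably why the authors state this theorem without proof.
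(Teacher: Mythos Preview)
Your proof is correct and follows the standard route (characteristic argument for necessity, splitting field of $x^q-x$ for existence, uniqueness of splitting fields for the isomorphism statement). However, there is nothing to compare against: the paper does \emph{not} prove this theorem. It is one of the ``basic results from abstract algebra'' that the preface announces will be stated without proof, and indeed the theorem appears in Section~\ref{sec:finite} as a bare statement with no proof environment following it. You anticipated exactly this in your final paragraph. So your proposal goes strictly beyond what the paper offers; there is no gap on your side, only the expected reliance on algebraic closures or on the existence of irreducible polynomials of every degree, both of which lie outside the scope of the notes.
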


\subsection{Prime Fields $\field{F}_p$}

\begin{theorem}\label{theo:primefield}
For every prime number $p$, the integers $\setz$ with $\mod p$ addition and multiplication form a field $\field{F}_p$ with $p$ elements. Any field $\field{F}$ with $p$ elements is isomorphic to $\field{F}_p$ via the correspondence
\begin{align}\underbrace{1+1+\dotsb+1}_{i\text{ times}}\in\field{F}\leftrightarrow i\in\field{F}_p.
\end{align}
\end{theorem}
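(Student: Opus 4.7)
The plan is to prove the two parts of the theorem separately: first that $(\setz, +, \cdot)$ with mod-$p$ arithmetic forms a field, and then that any field with $p$ elements must be isomorphic to it via the stated correspondence.

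For the first part, I would verify the field axioms for $\field{F}_p$. The additive group axioms for integers mod $p$ are routine: associativity, commutativity, the identity $0$, and the inverse $p-a$ of $a$ all descend from the integers. For the multiplicative structure, associativity, commutativity, distributivity over addition, and the identity $1$ are also inherited by reducing mod $p$. The only nontrivial axiom is the existence of multiplicative inverses for all non-zero elements, and this is exactly where primality of $p$ is essential: for any $a \in \{1, 2, \dotsc, p-1\}$ we have $\gcd(a, p) = 1$, so by Bezout's identity there are integers $u, v$ with $ua + vp = 1$, whence $u \bmod p$ is a multiplicative inverse of $a$ in $\field{F}_p$.

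For the second part, given any field $\field{F}$ with $p$ elements, define $\phi \colon \field{F}_p \to \field{F}$ by $\phi(i) := \underbrace{1 + 1 + \dotsb + 1}_{i\text{ times}}$, where the $1$ on the right is the multiplicative identity of $\field{F}$. The central step is to show that $\phi$ is a bijection. To this end, consider the additive order $n$ of $1 \in \field{F}$, that is, the smallest positive integer with $\underbrace{1+\dotsb+1}_{n} = 0$. The subgroup of $(\field{F}, +)$ generated by $1$ has order $n$, and by Lagrange's theorem $n$ divides $|\field{F}| = p$. Since $1 \neq 0$ forces $n > 1$ and $p$ is prime, we conclude $n = p$. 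Therefore $\phi(0), \phi(1), \dotsc, \phi(p-1)$ are all distinct, giving $p$ elements of $\field{F}$, which must exhaust the whole field.

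It remains to check that $\phi$ preserves both operations. Preservation of addition, $\phi((i+j) \bmod p) = \phi(i) + \phi(j)$, follows from the definition of $\phi$ as repeated addition combined with the fact that $1$ has additive order exactly $p$ in $\field{F}$, which legitimizes reduction mod $p$ in the exponent of $1$. Preservation of multiplication, $\phi((i\cdot j) \bmod p) = \phi(i)\cdot \phi(j)$, follows from repeated application of the distributive law in $\field{F}$ to rewrite $\underbrace{1+\dotsb+1}_{i}\cdot\underbrace{1+\dotsb+1}_{j}$ as $\underbrace{1+\dotsb+1}_{ij}$, again reduced mod $p$ using $n = p$. The main obstacle is the clean invocation of Lagrange's theorem for the additive group to pin down the characteristic; once $n = p$ is in hand, the remaining verification is bookkeeping with the field axioms.
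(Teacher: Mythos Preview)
Your proof is correct and follows the standard approach. However, the paper does not actually prove this theorem: it is one of the ``basic results from abstract algebra, which are stated without proof'' that the preface explicitly flags, and indeed the text moves directly from the statement of Theorem~\ref{theo:primefield} to the next subsection without any argument. So there is no paper proof to compare against; your B\'ezout argument for inverses and the Lagrange argument pinning down the additive order of $1$ are exactly what a self-contained proof requires.
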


\subsection{Construction of Fields $\field{F}_{p^m}$}\label{sec:rs:fieldspm}

\begin{definition}
Let $f(x)$ be a polynomial of degree $m$ over the field $\field{F}$.
\begin{itemize}
\item $f(x)$ is \emph{monic}\index{monic}, if the coefficient of $x^m$ is equal to one, i.e., if $f_m=1$.
\item $f(x)$ is \emph{irreducible}\index{irreducible}, if it is not the product of two factors of positive degree in $\field{F}[x]$.
\item $f(x)$ is a \emph{prime polynomial}\index{prime polynomial}, if it is monic and irreducible.
\end{itemize}
\end{definition}
\begin{example}
Consider the polynomials over $\field{F}_3$. The polynomial $1+x^2$ is monic but the polynomial $1+2x^2$ is not monic. The polynomial $2+x^2$ is reducible since
\begin{align}
2+x^2=(x-1)(x-2).
\end{align}
The polynomial $1+x^2$ is irreducible, since if not, it would have a factor of degree $1$ and thus a root in $\field{F}_3$. However:
\begin{align}
1+x^2\big|_{x=0}&=1\neq 0\\
1+x^2\big|_{x=1}&=2\neq 0\\
1+x^2\big|_{x=2}&=2\neq 0.
\end{align}
\end{example}
There exist a number of methods to test if a polynomial is irreducible or not, see the literature on abstract algebra. The next theorem states how to construct finite fields of order $p^m$ in analogy to Theorem~\ref{theo:primefield}. The set of polynomials $\field{F}_p[x]$ takes the role of the integers $\setz$ and a prime polynomial of degree $m$ takes the role of the prime number $p$.
\begin{theorem}\label{theo:extensionfield}
Let $g(x)$ be a prime polynomial of degree $m$ over a prime field $\field{F}_p$. Then the polynomials
\begin{align}
\field{F}_p[x]\mod g(x)\label{def:field:polynomial}
\end{align}
form a field with $p^m$ elements. Any field $\field{F}$ with $p^m$ elements is isomorphic to \eqref{def:field:polynomial}. 
\end{theorem}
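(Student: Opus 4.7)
The plan is to prove the theorem in three stages: first establishing that the construction $\field{F}_p[x]\mod g(x)$ is a commutative ring with $p^m$ elements, then showing that every nonzero element has a multiplicative inverse, and finally addressing the uniqueness-up-to-isomorphism claim. I would handle the existence part in full detail and appeal to abstract algebra for the uniqueness part, since the notes explicitly allow citing basic results from abstract algebra.

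For the ring structure, I would argue that addition and multiplication modulo $g(x)$ inherit associativity, commutativity, and distributivity from $\field{F}_p[x]$. Every equivalence class modulo $g(x)$ has a unique representative, namely the remainder from polynomial long division, which is a polynomial of degree at most $m-1$. Since each of the $m$ coefficients can take any of the $p$ values in $\field{F}_p$, there are exactly $p^m$ such remainders, giving the cardinality. The additive group is a direct product of $m$ copies of the additive group of $\field{F}_p$ and hence Abelian; the multiplicative identity $1$ and additive identity $0$ are distinct because $\deg g(x)=m\geq 1$.

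The main obstacle in the first stage is the existence of multiplicative inverses, which is where irreducibility of $g(x)$ enters. Let $a(x)$ be a nonzero representative, so $\deg a(x)<m=\deg g(x)$, and in particular $g(x)\nmid a(x)$. Because $g(x)$ is irreducible, its only monic divisors in $\field{F}_p[x]$ are $1$ and $g(x)$ itself; hence $\gcd(a(x),g(x))=1$. The extended Euclidean algorithm applied in the Euclidean domain $\field{F}_p[x]$ then produces polynomials $u(x),v(x)\in\field{F}_p[x]$ with $a(x)u(x)+g(x)v(x)=1$. Reducing modulo $g(x)$ yields $a(x)u(x)\equiv 1\pmod{g(x)}$, so $u(x)\mod g(x)$ is the desired inverse. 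This completes the verification that the construction is a field.

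For the uniqueness part, I would outline the standard argument: if $\field{F}$ is any field with $p^m$ elements, then the additive order of $1$ in $\field{F}$ must be a prime dividing $p^m$, hence equal to $p$, and the subfield generated by $1$ is isomorphic to $\field{F}_p$ by Theorem~\ref{theo:primefield}. Then $\field{F}$ is an $m$-dimensional vector space over $\field{F}_p$. Invoking the (cited) fact that the multiplicative group $\field{F}^*$ is cyclic, pick a generator $\alpha$; the evaluation map $\varphi\colon\field{F}_p[x]\to\field{F}$, $f(x)\mapsto f(\alpha)$, is a ring homomorphism whose image contains $\alpha$ and $\field{F}_p$, hence is all of $\field{F}$. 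Its kernel is a nonzero principal ideal $(\tilde{g}(x))$ with $\tilde{g}(x)$ monic irreducible of degree exactly $m$ (since the quotient has $p^m$ elements). Thus $\field{F}\cong \field{F}_p[x]/(\tilde{g}(x))$. A separate lemma, which I would also cite from abstract algebra, shows that all fields $\field{F}_p[x]/(g(x))$ for prime $g$ of a fixed degree $m$ are mutually isomorphic (for instance, because each is a splitting field of $x^{p^m}-x$ over $\field{F}_p$, and splitting fields are unique up to isomorphism). The hard part here is really the cyclicity of $\field{F}^*$ and the uniqueness of splitting fields, both of which are genuinely nontrivial theorems that the notes reasonably take as given.
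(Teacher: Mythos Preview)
Your proof is correct and follows the standard route: quotient-ring structure, counting residues, inverses via the extended Euclidean algorithm, and uniqueness via the minimal polynomial of a cyclic generator together with splitting-field uniqueness. There is nothing to compare against, however: the paper states Theorem~\ref{theo:extensionfield} \emph{without} proof. This is deliberate --- the preface announces that ``some basic results from abstract algebra'' are ``stated without proof,'' and this theorem, along with Theorem~\ref{theo:primefield} and the cyclicity of the multiplicative group, falls into that category. So you have supplied what the notes intentionally omit; your writeup could serve as a self-contained appendix if one wanted to close that gap.
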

\begin{example} Construction of $\field{F}_{2^2}$. We first need an irreducible polynomial over $\field{F}_2$ of degree $2$. The polynomial $1+x^2$ is reducible, since it has $1$ as a root and thus $1+x$ as a factor. The polynomial $g(x)=1+x+x^2$ is irreducible, since if not, it would have a factor of degree 1 and thus a root in $\field{F}_2$. However $1+0+0^2=1\neq 0$ and $1+1+1^2=1\neq 0$. The addition table is
\begin{center}
\begin{tabular}{c|cccc}
$+$&$0$&$1$&$x$&$x+1$\\\hline
$0$&$0$&$1$&$x$&$x+1$\\
$1$&$1$&$0$&$x+1$&$x$\\
$x$&$x$&$x+1$&$0$&$1$\\
$x+1$&$x+1$&$x$&$1$&$0$
\end{tabular}
\end{center}
Note that $(\{0,1\},+)$ forms a subgroup of $(\field{F}_{2^2},+)$. The multiplication table can be obtained by performing $\mod g(x)$ multiplication of the field elements.
\begin{center}
\begin{tabular}{c|cccc}
$\cdot$&$0$&$1$&$x$&$x+1$\\\hline
$0$&$0$&$0$&$0$&$0$\\
$1$&$0$&$1$&$x$&$x+1$\\
$x$&$0$&$x$&$x+1$&$1$\\
$x+1$&$0$&$x+1$&$1$&$x$
\end{tabular}
\end{center}
Note that $\field{F}_2$ forms a subfield of $\field{F}_{2^2}$.
\end{example} 
\subsubsection{Primitive Element}
By Theorem~\ref{theo:extensionfield}, we can construct a finite fields with $p^m$ elements, given that we know an irreducible polynomial $g(x)$ in $\field{F}_p[x]$ of order $m$. To establish the multiplication table, we need to perform $\mod\,g(x)$ multiplication of polynomials. The following theorem makes the construction even more convenient.
\begin{theorem}
For any field $\field{F}_{p^m}$, the multiplicative group $\field{F}_{p^m}\setminus 0$ is \emph{cyclic}, i.e., there exists a \emph{primitive element}\index{primitive element} $\alpha\in\field{F}_{p^m}\setminus 0$ such that every element in $\field{F}_{p^m}\setminus 0$ can be written as a power of $\alpha$, i.e.,
\begin{align}
\field{F}_{p^m}=\{0,1,\alpha^1,\dotsc,\alpha^{p^m-2}\}\label{def:field:power}
\end{align}
and
\begin{align}
\alpha^i\cdot \alpha^j = \alpha^{(i+j)\mod (p^m-1)}.
\end{align}
\end{theorem}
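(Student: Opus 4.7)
The plan is to prove the more general fact that any finite subgroup of the multiplicative group of a field is cyclic, and then apply it to $G:=\field{F}_{p^m}\setminus\{0\}$, a group of order $n:=p^m-1$ under multiplication. Let $d$ denote the maximum order of any element of $G$; by Lagrange we know $d\le n$, so the whole task reduces to showing $d=n$, since an element realizing the maximum order is then a primitive element $\alpha$ with the desired properties.

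The first step is to prove the key lemma that in a finite abelian group, every element order divides the maximum order $d$. I would do this by first establishing the coprime case: if $u,v$ have orders $a,b$ with $\gcd(a,b)=1$, then $uv$ has order $ab$. From this, the general statement ``there is an element of order $\operatorname{lcm}(a,b)$ whenever orders $a$ and $b$ occur'' follows by decomposing $a$ and $b$ into prime powers and, for each prime, retaining whichever of $u$ or $v$ supplies the larger prime-power factor, then multiplying appropriate powers. This is the main obstacle of the proof, because naive products of elements of orders $a$ and $b$ need not have order $\operatorname{lcm}(a,b)$ when $\gcd(a,b)>1$, so one has to pass through the prime-power decomposition carefully. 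Applied to $G$ and the element achieving order $d$, the lemma gives that every $g\in G$ has order dividing $d$, i.e.\ $g^d=1$.

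The second step is the polynomial-counting argument. Every element of $G$ is now a root of $x^d-1\in\field{F}_{p^m}[x]$. A nonzero polynomial of degree $d$ over a field has at most $d$ roots; this is a standard consequence of polynomial long division, which we already used in Section~\ref{sec:finite} (if $\gamma$ is a root, then $x-\gamma$ divides the polynomial, and induction on the degree finishes the argument). Therefore $n=|G|\le d$, and together with $d\le n$ from Lagrange we conclude $d=n$. Writing $\alpha$ for an element of order $n=p^m-1$, the powers $1,\alpha,\alpha^2,\dotsc,\alpha^{p^m-2}$ are $p^m-1$ distinct elements of $G$, hence exhaust $G$, which gives the representation in \eqref{def:field:power}.

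Finally, the multiplication rule $\alpha^i\cdot\alpha^j=\alpha^{(i+j)\mod(p^m-1)}$ is a direct consequence: $\alpha^{p^m-1}=1$, so reducing the exponent modulo $p^m-1$ leaves the product unchanged. No further computation is needed beyond the two ingredients above — the abelian lcm lemma and the bound on the number of roots of a polynomial over a field.
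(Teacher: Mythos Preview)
Your argument is correct and follows the standard route: maximal order plus the root bound for $x^d-1$. Note, however, that the paper does not actually prove this theorem; it is one of the ``basic results from abstract algebra'' that the preface says are stated without proof, so there is nothing to compare against. One small quibble: you refer to the root-counting fact as something ``we already used in Section~\ref{sec:finite},'' but in the paper this is Theorem~\ref{theo:noname}, which appears \emph{after} the cyclic-group theorem. There is no circularity since the root bound is independent of the cyclic structure, but you should either cite Theorem~\ref{theo:noname} explicitly or keep your self-contained justification via division by $x-\gamma$.
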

By this theorem, the construction of a multiplication table is trivial. However, how does the corresponding addition table look like? The \emph{primitive polynomial} provides this connection.
\begin{definition}
Let $\beta$ be an element of $\field{F}_{p^m}$. The \emph{minimal polynomial}\index{minimal polynomial} of $\beta$ in $\field{F}_p[x]$ is the monic polynomial in $\field{F}_p[x]$ of lowest degree that has $\beta$ as a root. The minimal polynomial in $\field{F}_p[x]$ of a primitive element in $\field{F}_{p^m}$ is called a \emph{primitive polynomial}\index{primitive polynomial}.
\end{definition}
We will study minimal polynomials in more detail in Subsection~\ref{sec:bch:minimal}. By definition, a primitive polynomial is a prime polynomial.
\begin{theorem}
Let $g(x)$ be a primitive polynomial with the corresponding primitive element $\alpha\in\field{F}_{p^m}$. Then $g(x)$ has degree $m$ and in particular, $\field{F}_{p^m}$ is isomorphic to $\field{F}_p[x]\mod g(x)$. 
\end{theorem}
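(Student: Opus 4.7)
The plan is to define an evaluation map $\phi\colon \field{F}_p[x]\mod g(x) \to \field{F}_{p^m}$ by $\phi(f(x)) := f(\alpha)$ and show it is a field isomorphism; the equality $\deg g(x) = m$ then falls out by comparing cardinalities of the two sides.

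First I would check that $\phi$ is a well-defined ring homomorphism. Well-definedness follows from $g(\alpha) = 0$: if two polynomials are congruent modulo $g(x)$, their difference is a multiple of $g(x)$ and hence vanishes at $\alpha$. The homomorphism properties are inherited from polynomial addition and multiplication in $\field{F}_p[x]$. Because $g(x)$ is a prime polynomial, hence irreducible, the quotient $\field{F}_p[x]\mod g(x)$ is itself a field of order $p^{\deg g(x)}$ by the same construction that underlies Theorem~\ref{theo:extensionfield}. A nonzero homomorphism out of a field has trivial kernel, so $\phi$ is injective.

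Next I would establish surjectivity. The image of $\phi$ is a subring of $\field{F}_{p^m}$ containing $\phi(x) = \alpha$ and hence every power $\alpha^i$. Since $\alpha$ is primitive, every nonzero element of $\field{F}_{p^m}$ has the form $\alpha^i$, and $0 = \phi(0)$ is in the image trivially, so $\phi$ is onto. Combining with injectivity, $\phi$ is an isomorphism. This gives the second assertion of the theorem, and also forces $p^{\deg g(x)} = |\field{F}_p[x]\mod g(x)| = |\field{F}_{p^m}| = p^m$, so $\deg g(x) = m$, which is the first assertion.

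The main subtlety I anticipate is justifying that $\field{F}_p[x]\mod g(x)$ is a field of order $p^{\deg g(x)}$, which rests squarely on the irreducibility of $g(x)$; this is the only place where \emph{prime} in ``prime polynomial'' is used essentially. Everything else is a direct exploitation of $\alpha$ being primitive and $g$ being its minimal polynomial. An alternative route would be to first argue that $\deg g(x) \leq m$ using the observation that the $m+1$ elements $1,\alpha,\alpha^2,\dotsc,\alpha^m$ must be linearly dependent over $\field{F}_p$, but the cardinality argument obtained for free from the isomorphism $\phi$ seems cleaner and avoids invoking the $m$-dimensional vector space structure of $\field{F}_{p^m}$ over $\field{F}_p$ explicitly.
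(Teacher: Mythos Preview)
Your argument is correct. The evaluation map $\phi$ is well defined because $g(\alpha)=0$, it is injective because its domain is a field (here you use that $g$ is irreducible), and it is surjective because $\alpha$ is primitive; comparing cardinalities then gives $\deg g(x)=m$. Each step is sound.

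However, the paper does not actually prove this theorem. In Section~\ref{sec:finite} the structural results about finite fields---including this one, Theorem~\ref{theo:extensionfield}, and the existence of primitive elements---are stated without proof, in line with the preface's remark that ``some basic results from abstract algebra \ldots\ are stated without proof.'' So there is no paper proof to compare against; you have supplied one where the notes deliberately omit it.
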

The following theorem gives the correspondence between the polynomial representation and the cyclic representation of a finite field.
\begin{theorem}
Let $\alpha$ be a primitive element of $\field{F}_{p^m}$ with the primitive polynomial $g(x)\in\field{F}_p[x]$. Then
\begin{align}
0&\leftrightarrow f(x)=0\\
\alpha^i&\leftrightarrow x^i\mod g(x),\quad i=0,1,\dotsc,p^m-2.
\end{align}
defines an isomorphism between \eqref{def:field:polynomial} and \eqref{def:field:power}.
\end{theorem}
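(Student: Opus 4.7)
The plan is to construct the claimed isomorphism directly via evaluation at $\alpha$. Define $\varphi\colon \field{F}_p[x]\mod g(x) \to \field{F}_{p^m}$ by $\varphi\bigl(f(x)\mod g(x)\bigr):=f(\alpha)$. This is the natural candidate, since under $\varphi$ the residue class of $x^i$ goes to $\alpha^i$, which is exactly the correspondence the theorem describes.

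First I would verify that $\varphi$ is well-defined: if $f_1(x)\equiv f_2(x)\mod g(x)$, then $f_1(x)-f_2(x)$ is a multiple of $g(x)$, and since $g(\alpha)=0$ by the definition of a primitive polynomial, the evaluations $f_1(\alpha)$ and $f_2(\alpha)$ coincide. Preservation of addition and multiplication is immediate from the standard evaluation identities for polynomials, so $\varphi$ is a ring homomorphism between two fields. For injectivity, I would suppose $f(\alpha)=0$; since $g(x)$ is the minimal polynomial of $\alpha$ over $\field{F}_p$, it divides $f(x)$, and hence $f(x)\equiv 0\mod g(x)$. Because both domain and codomain have exactly $p^m$ elements, injectivity upgrades to bijectivity, and $\varphi$ is a field isomorphism.

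Finally, to read off the claimed correspondence, I would specialize: the zero class in $\field{F}_p[x]\mod g(x)$ maps to $\varphi(0)=0\in\field{F}_{p^m}$, and for $i=0,1,\dotsc,p^m-2$ the class of $x^i$ maps to $\varphi\bigl(x^i\mod g(x)\bigr)=\alpha^i$. Combined with the enumeration $\field{F}_{p^m}=\{0,1,\alpha,\dotsc,\alpha^{p^m-2}\}$ supplied by the cyclicity of the multiplicative group, this is exactly the bijection stated in the theorem.

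The main obstacle, and the only place where the hypothesis is really used, is the injectivity step: it relies on $g(x)$ being \emph{primitive} rather than merely some irreducible polynomial of degree $m$, so that $g(x)$ is the minimal polynomial of the specific primitive element $\alpha$ and therefore divides every polynomial in $\field{F}_p[x]$ that vanishes at $\alpha$. The remaining steps are routine consequences of polynomial evaluation and a cardinality count.
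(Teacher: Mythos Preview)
The paper does not actually prove this theorem: it is one of the ``basic results from abstract algebra'' that the preface says are stated without proof, and indeed no proof appears after the statement in Section~\ref{sec:finite}. So there is nothing to compare your argument against.

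On its own merits, your proof is correct and is the standard one. The evaluation map $\varphi(f\bmod g)=f(\alpha)$ is well-defined because $g(\alpha)=0$, it is a ring homomorphism, and it is injective because $g$ is the minimal polynomial of $\alpha$; the cardinality count (using that $\deg g=m$, which the paper records in the preceding theorem) then gives bijectivity.

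One small remark on your closing paragraph: injectivity only needs $g$ to be the minimal polynomial of $\alpha$, and any monic irreducible polynomial is the minimal polynomial of each of its roots. What the \emph{primitivity} of $\alpha$ actually buys you is the last step, namely that the list $\alpha^0,\alpha^1,\dotsc,\alpha^{p^m-2}$ exhausts all nonzero elements of $\field{F}_{p^m}$, so that the correspondence $x^i\bmod g\leftrightarrow\alpha^i$ really hits everything. If $g$ were irreducible but not primitive, $\varphi$ would still be a field isomorphism, but the powers of $\alpha$ would not enumerate $\field{F}_{p^m}\setminus\{0\}$, and the stated bijection between \eqref{def:field:polynomial} and \eqref{def:field:power} would fail for that reason.
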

\begin{example}
A primitive polynomial for $\field{F}_{2^3}$ is $g(x)=1+x+x^3$. The correspondence table is
\begin{align*}
0&\leftrightarrow 0\\
1&\leftrightarrow \alpha^0\\
x&\leftrightarrow \alpha^1\\
x^2&\leftrightarrow\alpha^2\\
1+x&\leftrightarrow\alpha^3\\
x+x^2&\leftrightarrow\alpha^4\\
1+x+x^2&\leftrightarrow\alpha^5\\
1+x^2&\leftrightarrow\alpha^6
\end{align*}
\end{example}
\begin{theorem}\label{theo:noname}
Over any field $\field{F}$, a monic polynomial $f(x)\in\field{F}[x]$ of degree $m$ can have no more than $m$ pairwise distinct roots in $\field{F}$. If it does have $m$ pairwise distinct roots $\beta_1,\dotsc,\beta_m$, then the unique factorization (up to permutations of the factors) of $f(x)$ is $f(x)=(x-\beta_1)\dotsb(x-\beta_m)$.
\end{theorem}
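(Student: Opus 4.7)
The plan is to prove both parts by induction on the degree $m$, using the factor theorem as the workhorse. First I would establish the factor theorem: for any $\beta\in\field{F}$ and any $f(x)\in\field{F}[x]$, polynomial long division (as already used in Chapter~\ref{chap:cyclic}) gives $f(x)=(x-\beta)q(x)+r(x)$ with $\deg r(x)<1$, so $r(x)=r\in\field{F}$. Evaluating at $x=\beta$ yields $r=f(\beta)$, so $\beta$ is a root of $f(x)$ if and only if $(x-\beta)\mid f(x)$, and in that case $q(x)$ is monic of degree $m-1$ since $f(x)$ is monic.

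For the bound on the number of roots, I would do strong induction on $m$. The base case $m=0$ is immediate, since the only monic degree-$0$ polynomial is $f(x)=1$, which has no roots. For the inductive step, if $f(x)$ has no roots in $\field{F}$ the claim is trivial; otherwise pick a root $\beta_1$ and write $f(x)=(x-\beta_1)q(x)$ with $q(x)$ monic of degree $m-1$. For any other root $\beta\neq\beta_1$ we have $0=f(\beta)=(\beta-\beta_1)q(\beta)$, and since $\field{F}$ is a field (hence has no zero divisors) and $\beta-\beta_1\neq 0$, we conclude $q(\beta)=0$. By the induction hypothesis $q(x)$ has at most $m-1$ distinct roots, so $f(x)$ has at most $m$. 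If in fact $f(x)$ has $m$ distinct roots $\beta_1,\dotsc,\beta_m$, iterating this factoring argument produces $f(x)=(x-\beta_1)(x-\beta_2)\dotsb(x-\beta_m)\cdot c$ for some $c\in\field{F}$, and comparing leading coefficients forces $c=1$.

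For uniqueness up to permutation, suppose $f(x)=(x-\gamma_1)\dotsb(x-\gamma_m)$ is any factorization into monic linear factors. Each $\gamma_i$ is a root of $f(x)$, so $\gamma_i\in\{\beta_1,\dotsc,\beta_m\}$; since the two multisets have the same size $m$ and the $\beta_j$ are pairwise distinct, a short counting argument using the no-zero-divisor property (applied as above to rule out repeated $\gamma_i$'s beyond those matching distinct $\beta_j$'s) shows the multisets are equal, i.e.\ the factors agree up to permutation. The only real subtlety, and the step I would flag as the crux, is the use of the integral-domain property of $\field{F}$ at every stage: without it, both the root bound and the unique factorization can fail, as seen in rings with zero divisors where even a degree-$2$ polynomial may admit more than two roots.
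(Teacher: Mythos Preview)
Your argument is correct and follows the standard textbook route via the factor theorem and induction on the degree, using the integral-domain property of $\field{F}$ at each step. The paper, however, does not prove this theorem at all: it is one of the ``basic results from abstract algebra'' that the preface explicitly says are stated without proof, and indeed no proof follows the statement of Theorem~\ref{theo:noname} in the text. So there is nothing to compare against on the paper's side; your write-up simply fills a gap the notes intentionally left open. The only place I would tighten your exposition is the uniqueness paragraph: rather than gesturing at ``a short counting argument,'' you can finish cleanly by observing that if some $\beta_j$ were absent from the multiset $\{\gamma_1,\dotsc,\gamma_m\}$, then $f(\beta_j)=\prod_i(\beta_j-\gamma_i)$ would be a product of nonzero field elements and hence nonzero, contradicting $f(\beta_j)=0$.
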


\section{Reed--Solomon Codes}

Consider a field $\field{F}_q=\{\beta_1,\beta_2,\dotsc,\beta_q\}$ with $q$ elements. A \emph{Reed--Solomon}\index{Reed--Solomon Code}\index{RS code \see{Reed--Solomon Code}} (RS) code $\setc_\text{RS}$ over $\field{F}_q$ with block length $n=q$ and dimension $k$ is defined as the image of a mapping $\enc \colon \field{F}_q^k\to\field{F}_q^q$. We represent the $k$-tuples $\vecu\in\field{F}_q^k$ by their generating function, i.e., 
\begin{align}
\vecu=(u_0,u_1,\dotsc,u_{k-1})\leftrightarrow u(x)=u_0+u_1x+\dotsb+u_{k-1}x^{k-1}.
\end{align}
The \emph{evaluation map}\index{evaluation map}\index{ev@\enc \see{evaluation map}} \enc is given by
\begin{align}
\enc \colon \field{F}_q^k&\to\field{F}_q^q\\
\vecu &\mapsto \enc(\vecu)=\bigl (u(\beta_1),u(\beta_2),\dotsc,u(\beta_q)\bigr).
\end{align}
The RS code is defined as the image of \enc, i.e.,
\begin{align}
\setc_\text{RS}=\enc (\field{F}_q^k).
\end{align}
\begin{theorem}\ \label{theo:rs:rs}
\begin{enumerate}
\item The RS code is linear.
\item The dimension of the RS code is equal to $k$.
\item The RS code is MDS.
\end{enumerate}
\end{theorem}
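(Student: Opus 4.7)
The plan is to handle the three parts in the order given, since each subsequent part leans on the previous one. The common theme throughout is that polynomial evaluation is a linear operation and, crucially, polynomials of bounded degree have bounded numbers of roots (Theorem~\ref{theo:noname}).

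For linearity, I would observe that the evaluation map $\enc$ is $\field{F}_q$-linear: for any $\vecu,\vecv\in\field{F}_q^k$ and $a,b\in\field{F}_q$, the polynomial associated with $a\vecu+b\vecv$ is $au(x)+bv(x)$, and evaluation at each $\beta_i$ satisfies $(au+bv)(\beta_i)=au(\beta_i)+bv(\beta_i)$. Hence $\enc(a\vecu+b\vecv)=a\enc(\vecu)+b\enc(\vecv)$, so $\setc_\text{RS}=\enc(\field{F}_q^k)$ is the image of a linear map and is therefore a subspace of $\field{F}_q^q$.

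For the dimension claim, it suffices to show that $\enc$ is injective; then $\dim\setc_\text{RS}=\dim\field{F}_q^k=k$. Suppose $\enc(\vecu)=\veczero$. Then $u(\beta_i)=0$ for all $i=1,\dotsc,q$, so $u(x)\in\field{F}_q[x]$ has $q$ pairwise distinct roots. However $\deg u(x)\leq k-1<q$ (we implicitly need $k\leq q$, which is the interesting regime). By Theorem~\ref{theo:noname}, this forces $u(x)=0$, hence $\vecu=\veczero$. So $\ker\enc=\{\veczero\}$ and $\enc$ is a bijection onto $\setc_\text{RS}$.

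For the MDS property, by the Singleton bound (Theorem~\ref{rs:theo:singleton}) we already have $d\leq n-k+1$. The reverse inequality is the main step. Since $\setc_\text{RS}$ is linear, by \eqref{eq:linear:minimumweight} it is enough to show that every nonzero codeword has Hamming weight at least $n-k+1$, equivalently at most $k-1$ zero coordinates. A nonzero codeword $\enc(\vecu)$ corresponds to a nonzero polynomial $u(x)$ of degree at most $k-1$; a zero coordinate $u(\beta_i)=0$ means $\beta_i$ is a root of $u(x)$. By Theorem~\ref{theo:noname}, $u(x)$ has at most $\deg u(x)\leq k-1$ distinct roots in $\field{F}_q$, so at most $k-1$ of the $\beta_i$'s can be zeros of $u(x)$. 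Hence $\hw(\enc(\vecu))\geq n-(k-1)=n-k+1$, and equality with the Singleton bound follows.

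The main obstacle is really a single recurring idea rather than a hard computation: at each step I need to invoke ``a polynomial of degree $<q$ can have fewer than $q$ roots'' to turn the algebraic definition into a combinatorial count. Once the injectivity argument is in place, the MDS argument is essentially the same observation sharpened to degree $\leq k-1$. No technical difficulty is anticipated beyond correctly quoting Theorem~\ref{theo:noname}.
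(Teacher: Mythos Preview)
Your proposal is correct and follows essentially the same approach as the paper: linearity of the evaluation map, trivial kernel via the root bound of Theorem~\ref{theo:noname}, and the minimum-weight argument combined with the Singleton bound. The only cosmetic difference is that the paper already extracts the weight lower bound $\hw(\enc(\vecu))\ge n-(k-1)$ inside the dimension argument and then reuses it for MDS, whereas you prove injectivity with the cruder ``$q$ roots'' count and redo the sharper $k-1$ root bound in part~3; both orderings are fine.
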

\begin{proof}
1. We need to show that \enc is linear. The map \enc is linear if and only if
\begin{align}
\enc (\vecu)+\beta\cdot\enc(\vecv)=\enc(\vecu+\beta\vecv),\quad \forall\vecu,\vecv\in\field{F}_q^k\text{ and }\forall \beta\in\field{F}_q.
\end{align}
The condition holds if it holds for each coordinate. For the $i$th coordinate, we have
\begin{align}
[\enc (\vecu)+\beta\cdot\enc(\vecv)]_i&=\enc (\vecu)_i+\beta\cdot\enc(\vecv)_i\\
&=u(\beta_i)+\beta\cdot v(\beta_i)\\
&=\sum_{j=0}^{k-1}u_j\beta_i^j+\beta\cdot\sum_{j=0}^{k-1}v_j \beta_i^j\\
&=\sum_{j=0}^{k-1} (u_j+\beta v_j)\beta_i^j\\
&=\enc(\vecu+\beta \vecv)_i
\end{align}
and we conclude that $\setc_\text{RS}$ is indeed linear.

2. We show that the image of the mapping $\enc$ is $k$ dimensional. $\enc$ is defined on $\field{F}_q^k$. Thus
\begin{align}
\dim(\ker(\enc))+\dim(\im(\enc))=\dim(\field{F}_q^k)=k
\end{align}
where $\ker(\enc)$ is the \emph{kernel}\index{kernel}\index{ker \see{kernel}} of $\enc$ and $\im(\enc)$\index{image}\index{im \see{image}} the \emph{image} of $\enc$. Let $\vecu\leftrightarrow u(x)$ be a non-zero $k$-tuple. Then $u(x)$ is a polynomial of degree at most $k-1$. By Theorem~\ref{theo:noname}, $u(x)$ can have at most $k-1$ distinct roots in $\field{F}_q^n$. Therefore, at most $k-1$ entries of the corresponding codeword
\begin{align}
\enc(\vecu)=\bigl(u(\beta_1),u(\beta_2),\dotsc,u(\beta_n)\bigr)
\end{align}
can be equal to zero and we conclude that the weight of each non-zero codeword is at least $n-(k-1)$. Thus, the dimension of the kernel of $\enc$ is equal to zero and the image has dimension
\begin{align}
\dim(\im(\enc))=\dim(\field{F}_q^k)-\dim(\ker(\enc))=k-0=k.
\end{align}

3. Since the RS code is linear, the minimum distance of the code is equal to the minimum weight of all non-zero codewords. As we have shown in 2., the weight of each non-zero codeword is at least $n-(k-1)$. Consequently, the minimum distance $d$ is bounded from below by
\begin{align}
d\geq n-k+1.
\end{align}
By the Singleton bound, the minimum distance of any linear code is bounded from \emph{above} by $d\leq n-k+1$. Therefore, $d=n-k+1$ must be true, which shows that the RS code is MDS.
\end{proof}

\subsection{Puncturing RS Codes}

We have defined RS codes over $\myfield_q$ for block length $n=q$. By \emph{puncturing}\index{puncturing} the set $\myfield_q$ $\ell$ times, i.e., by removing $\ell$ elements from $\myfield_q$, and then defining an evaluation map based on this new set, we get a punctured RS code. For example, if we remove $\beta_1$ and $\beta_2$ from $\myfield_q$, the evaluation map becomes
\begin{align}
&\enc_{\beta_1\beta2}\colon\myfield_q^k\to\myfield_q^{q-2}\\
&\vecu\mapsto\enc_{\beta_1\beta2}(\vecu)=\bigl(u(\beta_3),u(\beta_4),\dotsc,u(\beta_q)\bigr).
\end{align}
This defines an $(n=q-2,k)$ code. By the same arguments as in the proof of Theorem~\ref{theo:rs:rs}, punctured RS codes are MDS, i.e., the minimum distance is $d=n-k+1$.

\subsection{RS Codes via Fourier Transform}

The transform \enc provides more structure when the code is punctured in $\beta=0$ and the block length $n$ is equal to $q-1$. We denote the corresponding evaluation map by $\enc_0$. We first study the multiplicative group of finite fields in more detail and then define a Fourier transform for finite fields based on our findings. We then interpret the evaluation map $\enc_0$ as a Fourier transform in $\field{F}_q^n$.

\subsubsection{The Fourier Transform in $\field{F}_q$}

\begin{definition}
Let $\omega$ be an element in $\field{F}_q$. The \emph{order}\index{order} of $\omega$ is defined as
\begin{align}
|\omega|:=\min\{i>0\colon \omega^i=1\}.
\end{align}
\end{definition}
Note that since $0=(q-1)\mod(q-1)$, $\omega^{q-1}=1$ and $|\omega|\leq q-1$ for all elements $\omega$ in $\field{F}_q\setminus 0$.
\begin{lemma}\label{lem:nthroot}
Let $\omega\neq 1$ be an element in $\field{F}$ with $\omega^n=1$. Then
\begin{align}
\sum_{i=0}^{n-1}\omega^i=0.
\end{align}
\end{lemma}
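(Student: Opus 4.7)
The plan is to use the classical telescoping trick for geometric series, which carries over to any field essentially unchanged. Let $S := \sum_{i=0}^{n-1} \omega^i$. I would form the product $(1-\omega)S$ and expand:
\begin{align*}
(1-\omega)S = \sum_{i=0}^{n-1}\omega^i - \sum_{i=0}^{n-1}\omega^{i+1} = 1 - \omega^n.
\end{align*}
The shifted-index terms telescope, and all intermediate powers $\omega^1,\dotsc,\omega^{n-1}$ cancel. These are routine manipulations that only use the distributive law and the commutativity of addition, both of which hold in $\field{F}$.

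By the hypothesis $\omega^n = 1$, the right-hand side equals $1 - 1 = 0$, so $(1-\omega)S = 0$. Now the key step: because $\omega \neq 1$, the element $1 - \omega$ is a nonzero element of $\field{F}$. Since $\field{F}$ is a field, $1-\omega$ has a multiplicative inverse $(1-\omega)^{-1}$. Multiplying both sides of $(1-\omega)S = 0$ by this inverse yields $S = 0$, which is exactly the claim.

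There is no real obstacle here beyond making sure to invoke the field axioms at the right place. The only subtlety worth flagging is that the argument depends critically on $\field{F}$ being a field (or at least having no zero divisors): if we merely had a ring, then $(1-\omega)S = 0$ would not force $S = 0$. The assumption $\omega \neq 1$ is what makes the divisor $1-\omega$ nonzero and hence invertible, so the argument is essentially a two-line calculation plus one invocation of the existence of multiplicative inverses in $\field{F}^* = \field{F}\setminus\{0\}$.
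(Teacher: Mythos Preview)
Your proof is correct and is essentially the same argument as the paper's: the paper multiplies the sum by $\omega$ and observes $\omega S = S$, hence $(\omega-1)S=0$, while you multiply by $1-\omega$ and telescope to get $(1-\omega)S=0$; both then conclude $S=0$ from $\omega\neq 1$ and the field property.
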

\begin{proof}
Since by assumption $\omega^n=1=\omega^0$,
\begin{align}
\omega\sum_{i=0}^{n-1}\omega^i=\sum_{i=1}^n\omega^i=\sum_{i=0}^{n-1}\omega^i.
\end{align}
Since by assumption $\omega\neq 1$, this can only be true if $\sum_{i=0}^{n-1}\omega^i=0$.
\end{proof}
\begin{lemma}\label{lem:fourier2}
Let $\omega$ be an element of order $n$ in $\field{F}_q$ and let $j$ be an integer. Then
\begin{align}
\sum_{i=0}^{n-1}(\omega^j)^i=\begin{cases}
n,&\text{if }j\mod n=0\\
0,&\text{otherwise}
\end{cases}
\end{align}
where $n\in\field{F}_q:=\underbrace{1+1+\dotsb+1}_{n\text{ times}}$.
\end{lemma}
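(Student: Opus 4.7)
The plan is to split on whether $j \mod n = 0$ or not, and in the nonzero case to reduce to Lemma~\ref{lem:nthroot} by replacing the base $\omega^j$ with $\omega^r$ for $r = j \mod n$.

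First I would handle the easy case $j \mod n = 0$. Writing $j = mn$ for some integer $m$, we get $\omega^j = (\omega^n)^m = 1^m = 1$, so each summand $(\omega^j)^i$ equals $1$ and the total sum is $n$ (interpreted as $\underbrace{1+\dotsb+1}_{n\text{ times}}$ in $\field{F}_q$, as the statement explicitly specifies).

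For the case $j \mod n \neq 0$, I would write $j = mn + r$ with $0 < r < n$ and argue that $\omega^j = \omega^r$. The key point is that $\omega^{mn} = (\omega^n)^m = 1$, so $\omega^j = \omega^{mn}\omega^r = \omega^r$. Now I want to apply Lemma~\ref{lem:nthroot} to the element $\omega^r$. To do so, I need to verify its two hypotheses: that $\omega^r \neq 1$ and that $(\omega^r)^n = 1$. The second follows immediately from $(\omega^r)^n = (\omega^n)^r = 1$. The first follows from the definition of order: since $0 < r < n$ and $n$ is the smallest positive integer with $\omega^n = 1$, we must have $\omega^r \neq 1$. Then Lemma~\ref{lem:nthroot} (applied with $\omega^r$ in the role of $\omega$) gives $\sum_{i=0}^{n-1}(\omega^r)^i = 0$, i.e., $\sum_{i=0}^{n-1}(\omega^j)^i = 0$.

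There is no real obstacle here; the proof is essentially a two-line case analysis once one notices that $\omega^j$ depends only on $j \mod n$. The only subtlety worth stating cleanly is the reduction step $\omega^j = \omega^{j\mod n}$, which uses both $\omega^n = 1$ and the fact that exponents of $\omega$ can be taken modulo its order. The rest is a direct invocation of Lemma~\ref{lem:nthroot}.
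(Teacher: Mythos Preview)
Your proof is correct and matches the paper's own argument essentially step for step: both split on whether $j\bmod n=0$, reduce $\omega^j$ to $\omega^{j\bmod n}$ via $\omega^n=1$, and in the nonzero case invoke Lemma~\ref{lem:nthroot} after checking $\omega^r\neq 1$ from the definition of order. The only difference is notational (your $m,r$ versus the paper's $k,\ell$).
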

\begin{proof}
$j\mod n=0$: For some integer $k$, we have $j=nk$. We calculate
\begin{align} 
\omega^j=\omega^{kn}=(\omega^n)^k\oeq{a}1^k=1
\end{align}
where (a) follows because the order of $\omega$ is $n$.

$j\mod n\neq 0$: For some integers $k$ and $0<\ell<n$, we have $j=kn+\ell$. We calculate
\begin{align}
\omega^j=\omega^{kn+\ell}=(\omega^n)^k\omega^\ell=1^k\omega^\ell=\omega^\ell\overset{\text{(a)}}{\neq} 1\label{eq:proof:neqzero}
\end{align}
where (a) follows because the order of $\omega$ is $n>\ell$. Combining \eqref{eq:proof:neqzero} and Lemma~\ref{lem:nthroot} yields the statement $\sum_{i=0}^{n-1}(\omega^j)^i=0$.
\end{proof}

\begin{definition}
Let $\omega$ be an element in $\field{F}_q$. Let 
\begin{align}
\vecv=(v_o,v_1,\dotsc,v_{|\omega|-1})\leftrightarrow v(x)
\end{align}
be a vector in $\field{F}_q^{|\omega|}$. The \emph{Fourier transform}\index{Fourier transform} of $\vecv$ is defined as
\begin{align}
\four_{\omega}\colon \field{F}_q^{|\omega|}&\to \field{F}_q^{|\omega|}\nonumber\\
\vecv&\mapsto\four_\omega(\vecv)=\left(v(\omega^0),v(\omega^1),\dotsc,v(\omega^{|\omega|-1})\right)=:\vecc\label{eq:fourier}\\
\four_{\omega}^{-1}\colon \field{F}_q^{|\omega|}&\to \field{F}_q^{|\omega|}\nonumber\\
\vecc&\mapsto\four_\omega^{-1}(\vecc)=\frac{1}{|\omega|}\left(c(\omega^0),c(\omega^{-1}),\dotsc,c(\omega^{-(|\omega|-1)})\right)\nonumber\\
&\hspace{2cm}=\frac{1}{|\omega|}\left(c(\omega^0),c(\omega^{|w|-1}),\dotsc,c(\omega^{1})\right).\label{eq:fourierinverse}
\end{align}
\end{definition}
We verify that indeed $\four_\omega^{-1}(\four_\omega(\vecv))=\vecv$. Define $n:=|\omega|$. Consider some coordinate $i$, $0\leq i\leq n-1$.
\begin{align}
\four_\omega^{-1}(\four_\omega(\vecv))_i=n^{-1}c(\omega^{-i})&=n^{-1}\sum_{j=0}^{n-1}v(\omega^j)\omega^{-ij}\nonumber\\
&=n^{-1}\sum_{j=0}^{n-1}\sum_{k=0}^{n-1}v_k\omega^{jk}\omega^{-ij}\nonumber\\
&=n^{-1}\sum_{k=0}^{n-1}v_k\sum_{j=0}^{n-1}(\omega^{k-i})^j.\label{eq:fouriercheck1}
\end{align}
For $0\leq k,i\leq n-1$, we have
\begin{align}
\begin{cases}
k-i\mod n=0,&\text{if }k=i\\
k-i\mod n\neq 0,&\text{if }k\neq i.
\end{cases}
\end{align}
Therefore, by Lemma~\ref{lem:fourier2}, 
\begin{align}
\sum_{j=0}^{n-1}(\omega^{k-i})^j=\begin{cases}
n,&\text{if } k=i\\
0,&\text{otherwise}.
\end{cases}
\label{eq:fouriercheck2}
\end{align}
Using \eqref{eq:fouriercheck2} in \eqref{eq:fouriercheck1} finally yields
\begin{align}
n^{-1}c(\omega^{-i})=n^{-1}v_in=v_i.
\end{align}
This is true for each coordinate $i$, $0\leq i\leq n-1$, which shows that \eqref{eq:fourierinverse} indeed defines the inverse of the Fourier transform \eqref{eq:fourier}.
\subsubsection{RS Codes via Fourier Transform}
Consider an $(n=q-1,k)$ RS code. Let $\alpha$ be a primitive element in $\field{F}_q$. The order of a primitive element is $|\alpha|=q-1=n$. The transform $\enc_0$ can be interpreted as a Fourier transform $\four_\alpha$ in $\field{F}_q^n$ by
\begin{align}
\vecu&\leftrightarrow (u_0,u_1,\dotsc,u_{k-1},\underbrace{0,\dotsc,0}_{n-k\text{ times}})\\
&\leftrightarrow u(x)\mapsto (u(\alpha^0),u(\alpha^1),\dotsc,u(\alpha^{n-1}))=:\vecc.
\end{align}
Applying the inverse Fourier transform to $\vecc\leftrightarrow c(x)$, we have
\begin{align}
c(\alpha^{-k})=c(\alpha^{-(k+1)})=\dotsb=c(\alpha^{-(n-1)})=0.
\end{align}
Since $-k\mod n=n-k$, we equivalently have
\begin{align}
c(\alpha^{1})=c(\alpha^{2})=\dotsb=c(\alpha^{n-k})=0.\label{eq:rs:preroots}
\end{align}
Conversely, if a vector $\vecc\in\field{F}_q^n$ has the property \eqref{eq:rs:preroots}, then it is a codeword. Thus, we have the following alternative characterization of RS codes.
\begin{theorem}\label{theo:rs:roots}
Let $\setc_\text{RS}$ be an $(n=q-1,k)$ RS code. Let $\vecc$ be a vector in $\field{F}_q^n$. Then
\begin{align}
\vecc\leftrightarrow c(x)\in\setc_\text{RS}\Leftrightarrow c(\alpha^{1})=c(\alpha^{2})=\dotsb=c(\alpha^{n-k})=0.
\end{align}
\end{theorem}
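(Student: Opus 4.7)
The plan is to formalize the argument already sketched in the discussion leading up to the theorem, using the identification of $\enc_0$ with the Fourier transform $\four_\alpha$ on $\field{F}_q^n$, together with the fact (verified just before the theorem statement) that $\four_\alpha^{-1}\circ\four_\alpha$ is the identity. Both directions then reduce to translating between a coefficient-support condition on the time-domain vector and a vanishing condition on the frequency-domain polynomial.

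For ``$\Rightarrow$'', I would assume $\vecc\in\setc_\text{RS}$, so by definition there is some $\vecu\in\field{F}_q^k$ with $\vecc=\enc_0(\vecu)$. Padding $\vecu$ with $n-k$ trailing zeros gives a vector $\vecv\in\field{F}_q^n$ whose polynomial satisfies $v(x)=u(x)$, and the construction preceding the theorem shows $\vecc=\four_\alpha(\vecv)$. Applying the inverse Fourier transform yields
\begin{align}
v_i=\frac{1}{n}c(\alpha^{-i}),\qquad i=0,1,\dotsc,n-1.
\end{align}
Since $v_k=v_{k+1}=\dotsb=v_{n-1}=0$ by construction, I get $c(\alpha^{-i})=0$ for $i=k,\dotsc,n-1$. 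Substituting $j:=n-i$ and using $\alpha^{-i}=\alpha^{n-i}$ (because $|\alpha|=n$), the index $j$ ranges over $1,2,\dotsc,n-k$, giving the required $c(\alpha^{1})=\dotsb=c(\alpha^{n-k})=0$.

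For ``$\Leftarrow$'', I would reverse this reasoning. Given $c(\alpha^{1})=\dotsb=c(\alpha^{n-k})=0$, the same substitution $i=n-j$ gives $c(\alpha^{-i})=0$ for $i=k,\dotsc,n-1$. Defining $\vecv:=\four_\alpha^{-1}(\vecc)$, the inverse transform formula yields $v_i=n^{-1}c(\alpha^{-i})=0$ for $i\geq k$. Setting $\vecu:=(v_0,v_1,\dotsc,v_{k-1})\in\field{F}_q^k$, the polynomial $u(x)$ coincides with $v(x)$, so $\enc_0(\vecu)=\four_\alpha(\vecv)=\vecc$, proving $\vecc\in\setc_\text{RS}$.

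No substantial obstacle arises: invertibility of $\four_\alpha$ has already been established, and the only bookkeeping step is the reindexing $\alpha^{-i}=\alpha^{n-i}$, which converts the ``high-frequency zeros'' condition on $\vecv$ into the stated consecutive-roots condition on $c(x)$.
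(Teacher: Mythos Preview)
Your proposal is correct and follows essentially the same approach as the paper: the paper identifies $\enc_0$ with $\four_\alpha$, applies the inverse Fourier transform to read off that the padded coefficients $v_k,\dotsc,v_{n-1}$ vanish, and then performs the reindexing $-i\mod n=n-i$ to obtain the consecutive-roots condition, with the converse stated (but not written out) in one line. You have simply made both directions explicit, which is fine.
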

Based on this theorem, we will in the remaining sections of this chapter further exploit the structure of RS codes.

\subsection{Syndromes}

Recall from Section~\ref{subsec:cosets} that two vectors $\vecv,\vecw$ belong to the same coset of a code $\setc$ if $\vecv-\vecw\in\setc$. The coset to which a vector $\vecv$ belongs is identified by the syndrome of $\vecv$. We now want to calculate the syndrome of a vector $\vecv$ in the case when an RS code is considered. RS codes are cyclic codes, so we could calculate syndromes as in Section~\ref{subsec:cyclic:syndrome}. However, we have an alternative code word test at hand. Consider an $(n=q-1,k)$ RS code $\setc_\text{RS}$ over $\myfield_q$ with primitive element $\alpha$. A polynomial $v(x)$ is a code word if
\begin{align}
&v(\alpha)=v(\alpha^2)=\dotsb=v(\alpha^{n-k})=0
\end{align}
Thus, a code word test for RS codes is
\begin{align}
[v(\alpha),v(\alpha^2),\dotsc,v(\alpha^{n-k})]\overset{?}{=}\veczero.
\end{align}
Consequently, two polynomials $v(x)$ and $w(x)$ belong to the same coset of $\setc_\text{RS}$ if the polynomial $v(x)-w(x)$ passes the test, i.e., if
\begin{align}
[v(\alpha),v(\alpha^2),\dotsc,v(\alpha^{n-k})]=[w(\alpha),w(\alpha^2),\dotsc,w(\alpha^{n-k})].
\end{align}
Thus
\begin{align}
\vecs=[v(\alpha),v(\alpha^2),\dotsc,v(\alpha^{n-k})]
\end{align}
is the syndrome of $\vecv\leftrightarrow v(x)$.

\subsection{Check Matrix for RS Codes}

Let again $\setc_\text{RS}$ be an $(n=q-1,k)$ RS code. Let $\vecc$ be a codeword. By Theorem~\ref{theo:rs:roots},
\begin{align}
c(\alpha^i) = c_0\alpha^{0i}+c_1\alpha^{1i}+\dotsb+c_{n-1}\alpha^{(n-1)i}=0,\quad i=1,2,\dotsb,n-k.\label{eq:rs:dual}
\end{align}
We define the vector
\begin{align}
\vecv_i = (\alpha^{0i},\alpha^{1i},\dotsc,\alpha^{(n-1)i}).
\end{align}
By \eqref{eq:rs:dual}, $\vecc\vecv_i^T=0$, so the vector $\vecv_i$ is in the dual code $\setc_\text{RS}^\perp$. We define the matrix $\mathh$ as
\begin{align}
\mathh=\bbm 
\vecv_1\\
\vecv_2\\
\vdots\\
\vecv_{n-k}
\ebm=\bbm
\alpha^{0}&\alpha^{1}&\dotsb&\alpha^{(n-1)}\\
\alpha^{0}&\alpha^{2}&\dotsb&\alpha^{2(n-1)}\\
\vdots&\vdots&\dotsb&\vdots\\
\alpha^{0}&\alpha^{n-k}&\dotsb&\alpha^{(n-k)(n-1)}
\ebm.\label{eq:rs:checkmatrix}
\end{align}
The equalities \eqref{eq:rs:dual} become in vector notation
\begin{align}
\vecc\mathh^T=(\vecc\vecv_1^T,
\vecc\vecv_2^T,\dotsc,
\vecc\vecv_{n-k}^T)
=(0,0,\dotsc,0).
\end{align}
The matrix $\mathh$ defines a linear mapping from $\field{F}_q^n$ to $\field{F}_q^{n-k}$. By Theorem~\ref{theo:rs:roots}, the kernel of $\mathh^T$ is the RS code $\setc_\text{RS}$, which is of dimension $k$. Therefore, the dimension of the image of $\mathh$ is
\begin{align}
\dim(\im(\mathh))=\dim(\field{F}_q^n)-\dim(\ker(\mathh))=n-k.
\end{align}
Therefore, the rows of the matrix $\mathh$ are $n-k$ linearly independent vectors in $\setc_\text{RS}^\perp$. Since by Proposition~\ref{prop:linear:dualdimension}, the dual code $\setc_\text{RS}^\perp$ is of dimension $n-k$, $\mathh$ is a generator matrix of $\setc_\text{RS}^\perp$ and therefore a check matrix of $\setc_\text{RS}$. We summarize our findings.
\begin{theorem}\label{theo:rs:syndrome}
Let $\alpha$ be a primitive element of $\field{F}_q$. Then \eqref{eq:rs:checkmatrix} is a check matrix of an $(n=q-1,k)$ RS code. Let $\vecy$ be a vector in $\field{F}_q^n$. The syndrome $\vecs$ of $\vecy\leftrightarrow y(x)$ can be calculated by
\begin{align}
\vecs=\vecy\mathh^T=[y(\alpha),y(\alpha^2),\dotsc,y(\alpha^{n-k})].
\end{align}
\end{theorem}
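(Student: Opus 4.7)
The plan is to verify the two claims in the theorem separately: first that $\mathh$ is a check matrix of $\setc_\text{RS}$, and then that the syndrome formula $\vecs=\vecy\mathh^T=[y(\alpha),\dotsc,y(\alpha^{n-k})]$ follows immediately from the structure of $\mathh$. The key tool throughout is Theorem~\ref{theo:rs:roots}, which characterizes codewords of the $(n=q-1,k)$ RS code as those polynomials with $\alpha,\alpha^2,\dotsc,\alpha^{n-k}$ as roots.

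First I would observe that for any vector $\vecc\leftrightarrow c(x)\in\myfield_q^n$, the definition of $\mathh$ gives
\begin{align*}
(\vecc\mathh^T)_i=\sum_{j=0}^{n-1}c_j\alpha^{ji}=c(\alpha^i),\qquad i=1,2,\dotsc,n-k.
\end{align*}
This identity serves double duty. On the one hand, it immediately yields the syndrome formula in the second part of the theorem, since by definition the syndrome of $\vecy$ with respect to the check matrix $\mathh$ is $\vecy\mathh^T$. On the other hand, combined with Theorem~\ref{theo:rs:roots}, it identifies the kernel of the linear map $\vecc\mapsto\vecc\mathh^T$ precisely with $\setc_\text{RS}$.

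To conclude that $\mathh$ is a check matrix, I would follow Definition~\ref{def:linear:checkmatrix} and show that $\mathh$ is a generator matrix of the dual code $\setc_\text{RS}^\perp$. Each row $\vecv_i$ of $\mathh$ lies in $\setc_\text{RS}^\perp$ because for any $\vecc\in\setc_\text{RS}$ we have $\vecc\vecv_i^T=c(\alpha^i)=0$ by Theorem~\ref{theo:rs:roots}. It remains to see that the $n-k$ rows of $\mathh$ are linearly independent. Here I would invoke rank-nullity on the map $\vecc\mapsto\vecc\mathh^T$: its kernel is $\setc_\text{RS}$, of dimension $k$, so its image has dimension $n-k$, which equals the row rank of $\mathh$. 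Since $\mathh$ has exactly $n-k$ rows, they must be linearly independent. Combined with Proposition~\ref{prop:linear:dualdimension}, which gives $\dim\setc_\text{RS}^\perp=n-k$, these rows form a basis of $\setc_\text{RS}^\perp$, so $\mathh$ is indeed a check matrix of $\setc_\text{RS}$.

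I do not expect a real obstacle: the whole argument is a direct consequence of Theorem~\ref{theo:rs:roots} together with rank-nullity. The only subtle point is being careful about the direction of the linear independence argument. One might be tempted to argue Vandermonde-style (any $n-k$ columns of $\mathh$ form an invertible Vandermonde matrix, since $\alpha,\alpha^2,\dotsc,\alpha^{n-k}$ are distinct powers of a primitive element), which would give column rank $n-k$ and hence row rank $n-k$ as well; but this is unnecessarily heavy, and the rank-nullity route keeps the proof within the language already set up in the previous section.
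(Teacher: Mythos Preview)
Your proposal is correct and follows essentially the same route as the paper: identify $(\vecc\mathh^T)_i=c(\alpha^i)$, use Theorem~\ref{theo:rs:roots} to see that the kernel of $\vecc\mapsto\vecc\mathh^T$ is exactly $\setc_\text{RS}$, apply rank--nullity to get row rank $n-k$, and then invoke Proposition~\ref{prop:linear:dualdimension} to conclude the rows form a basis of $\setc_\text{RS}^\perp$. The paper's derivation in the paragraph preceding the theorem is structured the same way; your write-up is slightly more explicit about the syndrome formula, but there is no substantive difference.
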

\subsection{RS Codes as Cyclic Codes}

By Theorem~\ref{theo:rs:roots}, a codeword $c(x)$ has the polynomial
\begin{align}
g(x)=\prod_{j=1}^{n-k}(x-\alpha^j)
\end{align}
as a factor. Since $\alpha^j$ is a root of $x^n-1$ for any integer $j$, $g(x)$ is also a factor of $x^n-1$, i.e, $g(x)$ divides $x^n-1$. By Properties 4 \& 5 of cyclic codes, the RS code is a cyclic code with generator polynomial $g(x)$.

\clearpage

\section{Problems}

\myproblem\label{prob:rs:mdd} Prove Theorem~\ref{theo:rs:mdd}.

\myproblem\label{prob:rs:binarymds} Show that the binary $(n,n-1)$ single parity check code, the binary $(n,1)$ repetition code, and the binary $(n,n)$ code are MDS. Are there any other binary MDS codes?

\myproblem
Let $\setc_{\text{RS}}$ be an $(n=2^2-1,1)$ RS code. $p(x)=1+x+x^2$ is a primitive polynomial. Let $\alpha$ be a root of $p(x)$, i.e., a primitive element of $\myfield_4$. The correspondence table is
\begin{align*}
0&\leftrightarrow 0\\
z^0&\leftrightarrow \alpha^0\\
z^1&\leftrightarrow \alpha^1\\
1+z&\leftrightarrow \alpha^2.
\end{align*}
\begin{enumerate}
\item Encode the information $\alpha^2$ by using the Fourier transform.
\item Calculate the generator polynomial $g(x)$ of $\setc_\text{RS}$.
\item Encode $\alpha^2$ by using $g(x)$.
\item The code is used on a binary channel. Calculate the rate in bits per channel use.
\item The binary representation of a codeword of $\setc_\text{RS}$ is transmitted over the binary channel. Consider the two binary error patterns $011000$ and $001100$. Which of these two error patterns can a minimum distance decoder surely correct?
\end{enumerate}

\myproblem
Let $\setc_\text{RS}$ be an $(n=5-1,2)$ RS code over $\myfield_5$. \emph{Note:} $\myfield_5$ is isomorphic to the integers modulo $5$.
\begin{enumerate}
\item Show that $2$ is a primitive element of $\myfield_5$.
\item Encode $\vecu=(1,2)$ by using the Fourier transform $\mathcal{F}_2$.
\item Calculate the generator polynomial of the code.
\end{enumerate}
The code is used on a $5$-ary channel.
\begin{enumerate}
\addtocounter{enumi}{3}
\item What is the rate of the code in bits per channel use?
\item Calculate the syndrome of $\vecy=(1,2,3,4)$. Is $\vecy$ a code word?
\end{enumerate}

\myproblem
Let $\setc_\text{RS}$ be an $(n=2^2-1,2)$ RS code over $\myfield_{2^2}$ with primitive element $\alpha$. 
The correspondence table is

\begin{align*}
0&\leftrightarrow 0\\
z^0&\leftrightarrow \alpha^0\\
z^1&\leftrightarrow \alpha^1\\
1+z&\leftrightarrow \alpha^2.
\end{align*}

\begin{enumerate}
\item What is the primitive polynomial $p(z)$ of $\alpha$?
\item Encode $(1,\alpha^2)$ using the evaluation map.
\item Compute a generator matrix $\matg$ for $\setc_\text{RS}$.
\item Does a systematic generator matrix for Reed-Solomon codes always exist? 
\end{enumerate}
Let the data symbols $\vecu=(u_1,u_2)$ be encoded using the evaluation map.
The code is used on an erasure channel. 
Suppose we receive $\vecy=(1,e,\alpha)$ where $e$ denotes the erasure symbol.
\begin{enumerate}
\addtocounter{enumi}{4}
\item Calculate the data symbols $u_1$ and $u_2$.
\end{enumerate}

\myproblem Calculate the Reed--Solomon Codes over $\myfield_2$ for $n=2$ and $k=1$ and $k=2$. Verify that the code is MDS, i.e., that the minimum distance $d$ is given by $d=n-k+1$.

\myproblem Calculate the Reed--Solomon Code over $\myfield_7$ for $n=7$ and $k=4$. 
\begin{enumerate}
\item Use Matlab to list all codewords.
\item What is the minimum distance of the code?
\item Is the code MDS?
\item How many errors can we guarantee to correct?
\item Suppose the code is used on an erasure channel. How many erasures can we guarantee to correct?
\end{enumerate}

\myproblem Consider the finite extension field $\myfield_{2^3}$ with primitive polynomial $p(x)=1+x+x^3\in\myfield_2[x]$.
\begin{enumerate}
 \item Verify that the polynomial $p(x)=1+x+x^3$ is irreducible over $\myfield_2$.
 \end{enumerate}
 Let $\alpha$ be a corresponding primitive element, i.e., $\alpha\colon p(\alpha)=0$.
\begin{enumerate}
 \addtocounter{enumi}{1}
 \item List all elements of $\myfield_{2^3}$ constructed by $p(x)$ and represent each element in terms of the primitive polynomial and the primitive element.
 \item Setup the addition and multiplication table for $\myfield_{2^3}$.
 \item Compute $\alpha^4+1$ and $\alpha^4\cdot\alpha^2$. Which representation is better for addition and which for multiplication?
 \item Write a Matlab script which outputs a list of all elements of $\myfield_{2^3}$ in terms of $p(x)$ and $\alpha$. Output the binary representation for each field elements.  
\end{enumerate}

\myproblem Consider Reed--Solomon code $\setc$ over $\myfield_{2^3}$. Use $p(z)=1+z^2+z^3$ as primitive polynomial for $\myfield_{2^3}$.
\begin{enumerate}
 \item What is the maximum possible length $n$ of a Reed--Solomon code over this field $\myfield_{2^3}$? Use this maximum length in the following.
 \item For $k=3$, determine the minimum distance $\dmin$.
 \item Construct a generator matrix $\matg$ for this code.
 \item Encode the binary information vectors $\vecu=(u_0,u_1,u_2)$, with $u_0=(010),u_1=(001),u_2=(011)$. These are binary representations of the elements of $\myfield_{2^3}$, e.g., 
\begin{align*}
u_1=010\leftrightarrow 0+1\cdot z+0\cdot z^2.
\end{align*}
 \item Write a Matlab function which takes a binary vector of length $m\cdot k=3\cdot 3$ as input and returns the corresponding Reed--Solomon codeword of length $n$ over $\myfield_{2^3}$.
 \item Implement a function in Matlab which maps a vector of length $n$ over $\myfield_{2^3}$ to a binary vector of length $m\cdot n=3\cdot n$. Use for the binary representation the correspondence defined by the primitive polynomial.  
\end{enumerate}

% !TEX root = cc-v1.tex
\chapter{BCH Codes}
\label{chap:bch}

By construction, every field $\field{F}_{p^m}$ has the prime field $\field{F}_p$ as a subfield, i.e., $\field{F}_p\subset\field{F}_{p^m}$ is closed under addition and multiplication. Consequently, every code that is a subspace of $\field{F}_{p^m}^n$ contains a subspace of $\field{F}_p^n$ as a subcode.
\begin{definition}\label{def:bch:bch}
The binary subcode of an RS code over $\field{F}_{2^m}$ is called a binary \emph{Bose--Chaudhuri--Hocquenghem} (BCH) \emph{code}\index{Bose-Chaudhuri-Hocquenghem code}\index{BCH code \see{Bose-Chaudhuri-\\Hocquenghem code}}. 
\end{definition}
In this course, ``BCH code'' is short for ``binary BCH code''. 
\section{Basic Properties}
Consider the binary BCH subcode of an $(n=2^m-1,k)$ RS code. Since the RS mothercode is cyclic, all codewords are multiples of the generator polynomial
\begin{align}
g(x)=\prod_{i=1}^{n-k}(x-\alpha^i)
\end{align}
and so are the binary codewords. The BCH code is also cyclic, see Problem~\ref{prob:bch:cyclic}. By definition, the generator polynomial of a cyclic code over $\field{F}_2$ is the codeword whose generating function in $\field{F}_2[x]$ is of least degree. We thus have the following.
\begin{theorem}\label{theo:bch:firstcharacerization}
The BCH subcode of an $(n=2^m-1,k)$ RS code over $\field{F}_{2^m}$ is given by
\begin{align}
\{\tilde{g}(x)p(x)\colon p(x)\in\field{F}_2[x], \deg p(x)<\tilde{k}\}
\end{align}
where $\tilde{g}(x)$ is the polynomial of least degree in $\field{F}_2[x]$ that has $\{\alpha,\dotsc,\alpha^{n-k}\}$ as roots and $\tilde{k}=n-\deg \tilde{g}(x)$.
\end{theorem}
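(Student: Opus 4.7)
The plan is to show the two inclusions via a standard division-with-remainder argument, after first translating the definition of the BCH code into a root condition on polynomials in $\field{F}_2[x]$.

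First I would unpack the definition. By Definition~\ref{def:bch:bch}, the BCH code is the intersection of the RS mothercode with $\field{F}_2^n$. Combining this with the characterization of RS codes in Theorem~\ref{theo:rs:roots}, a vector $\vecc\leftrightarrow c(x)\in\field{F}_2[x]$ (with $\deg c(x)<n$) lies in the BCH code if and only if $c(\alpha^i)=0$ for all $i=1,\dotsc,n-k$. So my task reduces to showing that, within $\field{F}_2[x]$, the set of polynomials vanishing at $\alpha,\alpha^2,\dotsc,\alpha^{n-k}$ is precisely the set of $\field{F}_2[x]$-multiples of $\tilde{g}(x)$.

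Next I would argue the two containments. The direction ``$\tilde{g}(x)p(x)$ has the required roots'' is immediate, since each $\alpha^i$ is a root of $\tilde{g}(x)$ by definition and roots are preserved under multiplication by $p(x)$. For the converse, suppose $c(x)\in\field{F}_2[x]$ has $\alpha,\dotsc,\alpha^{n-k}$ as roots. Perform polynomial long division in $\field{F}_2[x]$ to write $c(x)=q(x)\tilde{g}(x)+r(x)$ with $\deg r(x)<\deg\tilde{g}(x)$. Evaluating at $\alpha^i$ gives $r(\alpha^i)=c(\alpha^i)-q(\alpha^i)\tilde{g}(\alpha^i)=0$ for all $i=1,\dotsc,n-k$. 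Thus $r(x)\in\field{F}_2[x]$ is a polynomial strictly lower in degree than $\tilde{g}(x)$ which has the prescribed roots. By the minimality in the definition of $\tilde{g}(x)$, the only such polynomial is $r(x)=0$, so $\tilde{g}(x)\mid c(x)$, i.e., $c(x)=\tilde{g}(x)p(x)$ for some $p(x)\in\field{F}_2[x]$.

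Finally I would read off the degree constraint: since BCH codewords have block length $n$, we need $\deg c(x)<n$, which forces $\deg p(x)<n-\deg\tilde{g}(x)=\tilde{k}$. This matches the parameter range in the statement exactly. The main (and essentially only) obstacle is the uniqueness step in the division argument, which rests critically on the definition of $\tilde{g}(x)$ as the polynomial of \emph{least degree} in $\field{F}_2[x]$ with the prescribed roots; note that we must work in $\field{F}_2[x]$ throughout, since the analogous generator $g(x)=\prod_{i=1}^{n-k}(x-\alpha^i)$ of the RS code lives in $\field{F}_{2^m}[x]$ and is in general not in $\field{F}_2[x]$, so $\tilde{g}(x)$ is typically a strictly larger-degree multiple of $g(x)$ (to be made explicit in the next section via minimal polynomials).
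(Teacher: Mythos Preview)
Your argument is correct, but it differs from the paper's route. The paper does not give a formal proof here; it instead relies on the cyclic-code machinery from Chapter~\ref{chap:cyclic}. Concretely, the paragraph preceding the theorem argues: (i) the BCH code is cyclic (deferred to Problem~\ref{prob:bch:cyclic}); (ii) by the general theory of cyclic codes, every codeword is a multiple of the generator polynomial, which is the nonzero codeword of least degree (Properties~4 and~6); (iii) since BCH codewords lie in the RS mothercode, they have $\alpha,\dotsc,\alpha^{n-k}$ as roots, so the generator polynomial is the polynomial of least degree in $\field{F}_2[x]$ with those roots. Your approach bypasses the cyclic structure entirely and works directly from the root characterization of RS codes (Theorem~\ref{theo:rs:roots}) via division with remainder in $\field{F}_2[x]$. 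This is more self-contained and arguably cleaner for this particular statement; the paper's route, on the other hand, reuses already-established structure and makes the cyclicity of BCH codes explicit along the way, which is used later (e.g., in the discussion of generator polynomials and encoders).
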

The theorem does not tell us how to find $\tilde{g}(x)$ and we don't even know the dimension $\tilde{k}$ of the BCH code. We first study the construction of minimal polynomials, which we defined in Section~\ref{sec:rs:fieldspm}. This will then help us in constructing the generator polynomial $\tilde{g}(x)$.

\subsection{Construction of Minimal Polynomials}\label{sec:bch:minimal}

\begin{lemma}\label{lem:bch:linearpower}
Let $\beta,\omega$ be two elements in $\field{F}_{2^m}$. Let $i\geq 1$ be a positive integer. Then
\begin{align}
(\beta+\omega)^{2^i}=\beta^{2^i}+\omega^{2^i}.
\end{align}
More generally, let $\beta_1,\beta_2,\dotsc,\beta_\ell$ be elements in $\field{F}_{2^m}$. Then
\begin{align}
(\beta_1+\dotsb+\beta_\ell)^{2^i}=\beta_1^{2^i}+\dotsb+\beta_\ell^{2^i}.
\end{align}
\end{lemma}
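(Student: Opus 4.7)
The plan is to exploit that $\field{F}_{2^m}$ has characteristic $2$, meaning $x+x=0$ for every $x\in\field{F}_{2^m}$, since $\field{F}_2$ is a subfield and $1+1=0$ in $\field{F}_2$. I will first establish the two-element case for $i=1$ by the binomial theorem, then bootstrap to arbitrary $i\ge 1$ by induction on $i$, and finally extend from two summands to $\ell$ summands by induction on $\ell$.

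For the base case, I would expand
\begin{align*}
(\beta+\omega)^2=\beta^2+\beta\omega+\omega\beta+\omega^2=\beta^2+2\beta\omega+\omega^2.
\end{align*}
Since $2\beta\omega=\beta\omega+\beta\omega=0$ in $\field{F}_{2^m}$ (by characteristic $2$), this reduces to $\beta^2+\omega^2$. Next, for the induction on $i$, assume $(\beta+\omega)^{2^{i-1}}=\beta^{2^{i-1}}+\omega^{2^{i-1}}$. Then applying the $i=1$ identity to the pair $\beta^{2^{i-1}},\omega^{2^{i-1}}$,
\begin{align*}
(\beta+\omega)^{2^i}=\bigl((\beta+\omega)^{2^{i-1}}\bigr)^2=\bigl(\beta^{2^{i-1}}+\omega^{2^{i-1}}\bigr)^2=\beta^{2^i}+\omega^{2^i}.
\end{align*}

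For the general statement with $\ell$ summands, I would induct on $\ell$. The base $\ell=2$ is exactly what we just proved. For the inductive step, write
\begin{align*}
(\beta_1+\dotsb+\beta_\ell)^{2^i}=\bigl((\beta_1+\dotsb+\beta_{\ell-1})+\beta_\ell\bigr)^{2^i}=(\beta_1+\dotsb+\beta_{\ell-1})^{2^i}+\beta_\ell^{2^i}
\end{align*}
by the two-element case, and then apply the induction hypothesis to the first term to obtain $\beta_1^{2^i}+\dotsb+\beta_\ell^{2^i}$.

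There is no real obstacle here; the only subtlety is to make the characteristic-$2$ argument explicit: in $\field{F}_{2^m}$ every element is its own additive inverse, so the cross terms $\binom{2^i}{j}\beta^{2^i-j}\omega^j$ appearing in a direct binomial expansion need not be individually computed, and it is cleaner to iterate the $i=1$ identity as above rather than to invoke Lucas's theorem on binomial coefficients modulo $2$.
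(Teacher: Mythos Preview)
Your proof is correct and follows essentially the same approach as the paper: establish the $i=1$ case via the binomial expansion and characteristic $2$, iterate squaring to reach $2^i$, and then peel off one summand at a time for the $\ell$-term version. The only difference is that you phrase the iteration and the $\ell$-summand extension as explicit inductions, whereas the paper states them more informally.
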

\begin{proof}
Taking the $2^i$th power is equivalent to taking the $2$nd power $i$ times, i.e.,
\begin{align}
(\beta+\omega)^{2^i}=\left(\dotsb\left((\beta+\omega)^2\right)^2\dotsb\right)^2.
\end{align}
Therefore, if the statement is true for $i=1$, it is also true for $i>1$. For $i=1$, we calculate
\begin{align}
(\beta+\omega)^2&=\beta^2+2\beta\omega+\omega^2\\
&=\beta^2+\omega^2.
\end{align}
The generalization follows by repeatedly applying the just shown identity, i.e.,
\begin{align}
(\beta_1+\dotsb+\beta_\ell)^{2^i}&=\left(\Bigl(\bigl(\dotsb(\beta_1+\beta_2)+\dotsb\bigr)+\beta_{\ell-1}\Bigl)+\beta_\ell\right)^{2^i}\nonumber\\
&=\Bigl(\bigl(\dotsb(\beta_1+\beta_2)+\dotsb\bigr)+\beta_{\ell-1}\Bigl)^{2^i}+\beta_\ell^{2^i}\nonumber\\
&\;\;\vdots\nonumber\\
&=\beta_1^{2^i}+\dotsb+\beta_\ell^{2^i}.
\end{align}
\end{proof}
\begin{lemma}\label{lem:bch:binarypolynomialcheck}
Let $f(x)$ be a polynomial in $\field{F}_{2^m}[x]$. Then
\begin{align}
f(x)\in\field{F}_2[x]\Leftrightarrow f^2(x)=f(x^2).
\end{align}
\end{lemma}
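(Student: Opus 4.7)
The plan is to write $f(x)=\sum_{i=0}^d f_i x^i$ with $f_i\in\field{F}_{2^m}$ and reduce the claim to a coefficient-wise identity. By Lemma~\ref{lem:bch:linearpower} applied to the sum $\sum_i f_i x^i$ (viewing each $f_i x^i$ as a single element of the polynomial ring $\field{F}_{2^m}[x]$, which also has characteristic $2$), squaring distributes over the sum, so
\begin{align*}
f^2(x)=\Bigl(\sum_{i=0}^d f_i x^i\Bigr)^{2}=\sum_{i=0}^d f_i^2 x^{2i},
\end{align*}
while by direct substitution
\begin{align*}
f(x^2)=\sum_{i=0}^d f_i x^{2i}.
\end{align*}

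Comparing coefficients of $x^{2i}$, the identity $f^2(x)=f(x^2)$ holds if and only if $f_i^2=f_i$ for every $i$. So the task reduces to showing that an element $\beta\in\field{F}_{2^m}$ satisfies $\beta^2=\beta$ if and only if $\beta\in\field{F}_2=\{0,1\}$.

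For this last step I would rewrite $\beta^2=\beta$ as $\beta(\beta-1)=0$; since $\field{F}_{2^m}$ is a field and therefore has no zero divisors, this forces $\beta=0$ or $\beta=1$, i.e., $\beta\in\field{F}_2$. Conversely, $0^2=0$ and $1^2=1$, so every element of $\field{F}_2$ satisfies the equation. Combining this with the coefficient comparison, $f^2(x)=f(x^2)$ iff $f_i\in\field{F}_2$ for all $i$, iff $f(x)\in\field{F}_2[x]$.

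No step here looks hard: the only subtlety is making sure that Lemma~\ref{lem:bch:linearpower}, which is stated for elements of $\field{F}_{2^m}$, applies equally well when the summands $f_i x^i$ are regarded as elements of the polynomial ring $\field{F}_{2^m}[x]$. This is fine because the lemma's proof only used the characteristic-$2$ cancellation $2\beta\omega=0$, which holds verbatim in any commutative ring of characteristic $2$, and $\field{F}_{2^m}[x]$ inherits characteristic $2$ from $\field{F}_{2^m}$. Alternatively, one can re-prove the binomial identity $(a+b)^2=a^2+b^2$ directly in $\field{F}_{2^m}[x]$ and then iterate.
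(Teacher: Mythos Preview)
Your proof is correct and follows essentially the same route as the paper: expand $f$, use Lemma~\ref{lem:bch:linearpower} to compute $f^2(x)=\sum_i f_i^2 x^{2i}$, compare coefficients with $f(x^2)$, and reduce to $f_i^2=f_i$. The only cosmetic difference is the final step: the paper invokes Theorem~\ref{theo:noname} (a degree-$2$ polynomial has at most two roots) to conclude that $\beta^2=\beta$ has no solutions beyond $\{0,1\}$, whereas you factor $\beta(\beta-1)=0$ and use that a field has no zero divisors---both are standard and equivalent here. Your added remark justifying the use of Lemma~\ref{lem:bch:linearpower} in $\field{F}_{2^m}[x]$ is a nice bit of care that the paper leaves implicit.
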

\begin{proof}
Let $j$ be the degree of $f$, i.e.,
\begin{align}
f(x)=f_0+f_1x+\dotsb+f_jx^j.
\end{align}
By Lemma~\ref{lem:bch:linearpower},
\begin{align}
f^2(x)=f_0^2+f_1^2x^{2}+\dotsb+f_j^2x^{2j}.
\end{align}
By comparing the coefficients, this polynomial is equal to
\begin{align}
f(x^2)=f_0+f_1x^{2}+\dotsb+f_jx^{2j}
\end{align}
if and only if $f_i^2=f_i$, $i=0,1,\dotsc,j$. This equation is fulfilled for $f_i\in\{0,1\}=\field{F}_2$. By Theorem~\ref{theo:noname}, the equation cannot have more than two distinct solutions.
\end{proof}
\begin{definition}
In $\field{F}_{p^m}$ the mapping $\omega\mapsto\omega^p$ is called \emph{conjugation}\index{conjugation}. If $\beta=\omega^{p^i}$ for some positive integer $i$, then $\beta$ is called a \emph{conjugate}\index{conjugate} of $\omega$.
\end{definition}
\begin{lemma}\label{lem:bch:conjugacyclasses}
The relation
\begin{align}
\beta\sim\omega\Leftrightarrow \beta=\omega^{p^i}\text{ for some non-negative integer }i
\end{align}
defines an equivalence relation in $\field{F}_{p^m}$. The resulting equivalence classes are called \emph{conjugacy classes}\index{conjugacy class}.
\end{lemma}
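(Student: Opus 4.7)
The plan is to verify the three defining properties of an equivalence relation: reflexivity, transitivity, and symmetry. The first two are routine, and the essential work is concentrated in the symmetry step, which requires exploiting the finiteness of the field.

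First I would handle reflexivity by taking $i=0$, so that $\beta = \beta^{p^0} = \beta$, giving $\beta \sim \beta$. For transitivity, if $\beta = \omega^{p^i}$ and $\omega = \gamma^{p^j}$, then by the rules of exponents
\begin{align}
\beta = (\gamma^{p^j})^{p^i} = \gamma^{p^j\cdot p^i} = \gamma^{p^{i+j}},
\end{align}
so $\beta \sim \gamma$ with witness $i+j \geq 0$. Both of these steps only use basic exponent manipulation.

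The main obstacle is symmetry, because the definition is not visibly symmetric: from $\beta = \omega^{p^i}$ we need to produce a non-negative integer $j$ with $\omega = \beta^{p^j}$. The key observation is that, since $\field{F}_{p^m}$ is a finite field of order $p^m$, every element satisfies $\omega^{p^m} = \omega$. (For $\omega \neq 0$ this is Lagrange's theorem applied to the multiplicative group of order $p^m - 1$, giving $\omega^{p^m-1}=1$ and hence $\omega^{p^m}=\omega$; for $\omega=0$ it holds trivially.) Given $\beta = \omega^{p^i}$, I would reduce $i$ modulo $m$ (without loss of generality $0 \le i < m$) and then choose $j = m - i$, so that
\begin{align}
\beta^{p^j} = \omega^{p^i \cdot p^{m-i}} = \omega^{p^m} = \omega,
\end{align}
which gives $\omega \sim \beta$.

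This completes the verification that $\sim$ is reflexive, symmetric, and transitive, so it is an equivalence relation on $\field{F}_{p^m}$, and the equivalence classes are by definition the conjugacy classes. I expect no hidden subtlety beyond the symmetry argument, since the $\omega=0$ case degenerates to the singleton class $\{0\}$ and all non-zero elements lie in the cyclic group on which the Frobenius map $\omega \mapsto \omega^p$ acts as a permutation of finite order dividing $m$.
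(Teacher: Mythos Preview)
Your proof is correct and follows essentially the same approach as the paper: reflexivity via $i=0$, transitivity by adding exponents, and symmetry via the Frobenius identity $\omega^{p^m}=\omega$ with witness $j=m-i$. Your version is in fact slightly more careful than the paper's, since you explicitly reduce $i$ modulo $m$ before choosing $j$ and you justify $\omega^{p^m}=\omega$ (including the case $\omega=0$), whereas the paper leaves these points implicit.
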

\begin{proof}

\emph{reflexive:} $\beta=\beta^1\Rightarrow\beta\sim\beta$.

\emph{transitive:} For $\beta_1,\beta_2,\beta_3\in\field{F}_{p^m}$, suppose $\beta_1\sim\beta_2$ and $\beta_2\sim\beta_3$, i.e., $\beta_2=\beta_1^{p^i}$ and $\beta_3=\beta_2^{p^j}$ for some positive integers $i,j<p^m-1$. Then
\begin{align}
\beta_3=\beta_2^{p^j}=(\beta_1^{p^i})^{p^j}=\beta_1^{p^{i+j}}\Rightarrow \beta_1\sim \beta_3.
\end{align}

\emph{symmetric:} Suppose $\beta_1\sim\beta_2$, i.e., $\beta_2=\beta_1^{p^i}$ for some $i<p^m-1$. Then
\begin{align}
\beta_2^{p^{m-i}}=\beta_1^{p^ip^{m-i}}=\beta_1^{p^m}=\beta_1\Rightarrow \beta_2\sim\beta_1.
\end{align}
\end{proof}
\begin{lemma}\label{lem:bch:roots}
Let $f(x)$ be a polynomial in $\field{F}_2[x]$ with root $\beta$. Then each element in the conjugacy class $\setb$ of $\beta$ is also a root of $f(x)$. 
\end{lemma}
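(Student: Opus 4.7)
The plan is to leverage Lemma~\ref{lem:bch:binarypolynomialcheck}, which characterizes polynomials in $\field{F}_2[x]$ by the identity $f^2(x) = f(x^2)$, and then bootstrap this into the statement for all conjugates of $\beta$.

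First I would observe that since $f(x) \in \field{F}_2[x]$, Lemma~\ref{lem:bch:binarypolynomialcheck} gives $f(x)^2 = f(x^2)$. Evaluating both sides at $\beta$ yields
\begin{align}
f(\beta)^2 = f(\beta^2).
\end{align}
Under the assumption $f(\beta)=0$, the left-hand side is zero, so $f(\beta^2)=0$, i.e., the conjugate $\beta^2$ is also a root of $f$.

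Next I would iterate the argument: applying the same reasoning with $\beta$ replaced by $\beta^2$, we get $f(\beta^{4})=0$; by induction, $f(\beta^{2^i})=0$ for every non-negative integer $i$. By the definition of the conjugacy class in $\field{F}_{2^m}$ (the case $p=2$ of the equivalence relation in Lemma~\ref{lem:bch:conjugacyclasses}), the class $\setb$ of $\beta$ is precisely $\{\beta^{2^i} : i\geq 0\}$, so every element of $\setb$ is a root of $f(x)$.

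I do not expect a real obstacle here; the only subtlety is making sure the induction step is rigorous, which amounts to noting that if $\gamma = \beta^{2^i}$ satisfies $f(\gamma)=0$, then the identity $f(x)^2=f(x^2)$ evaluated at $\gamma$ gives $f(\gamma^2) = f(\beta^{2^{i+1}}) = 0$. No additional machinery beyond the previously established lemmas is required.
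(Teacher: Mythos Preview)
Your proof is correct and follows essentially the same approach as the paper: both rely on Lemma~\ref{lem:bch:binarypolynomialcheck} to obtain $f(x)^2=f(x^2)$ and then iterate to conclude $f(\beta^{2^i})=0$ for all $i$. The only cosmetic difference is that the paper compresses the induction into the single chain $0=0^{2^i}=f^{2^i}(\beta)=f(\beta^{2^i})$, whereas you spell out the base case and the inductive step explicitly.
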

\begin{proof}
By Lemma~\ref{lem:bch:binarypolynomialcheck}, $f^2(x)=f(x^2)$. In particular
\begin{align}
0=0^{2^i}=f^{2^i}(\beta)=f(\beta^{2^i})
\end{align}
which shows that all elements in $\setb$ are roots of the polynomial $f(x)$.
\end{proof}
We can now state and prove the following lemma, which characterizes minimal polynomials.
\begin{lemma}\label{lem:bch:minimalpolynomial}
Let $\beta$ be an element in $\field{F}_{2^m}$. Let $\setb$ be the conjugacy class of $\beta$. Then
\begin{align}
h(x)=\prod_{\omega\in\setb}(x-\omega)
\end{align}
is the minimal polynomial of $\beta$ in $\field{F}_2[x]$.
\end{lemma}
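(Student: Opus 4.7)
My plan is to verify that $h(x)$ has the four defining properties of the minimal polynomial of $\beta$ in $\field{F}_2[x]$: (i) it is monic, (ii) it lies in $\field{F}_2[x]$, (iii) it has $\beta$ as a root, and (iv) it has the smallest degree among polynomials in $\field{F}_2[x]$ satisfying (iii). Monicity is immediate since $h(x)$ is a product of monic linear factors, and $\beta \in \setb$ gives (iii) directly.

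The main obstacle is (ii), showing $h(x) \in \field{F}_2[x]$. I plan to apply Lemma~\ref{lem:bch:binarypolynomialcheck} and verify $h^2(x) = h(x^2)$. Using Lemma~\ref{lem:bch:linearpower} in characteristic two,
\begin{align}
h^2(x) = \prod_{\omega \in \setb}(x-\omega)^2 = \prod_{\omega \in \setb}(x^2+\omega^2) = \prod_{\omega \in \setb}(x^2 - \omega^2)
\end{align}
while $h(x^2) = \prod_{\omega \in \setb}(x^2 - \omega)$. These products agree provided the map $\omega \mapsto \omega^2$ is a bijection of $\setb$ onto itself. This holds because $\setb = \{\beta^{2^i} : i \geq 0\}$ is finite and closed under $\omega \mapsto \omega^2$ by construction, and squaring is injective in any field (if $\omega_1^2 = \omega_2^2$ then $(\omega_1 - \omega_2)^2 = 0$, so $\omega_1 = \omega_2$). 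Thus the two products differ only by a permutation of factors and agree, giving $h(x) \in \field{F}_2[x]$.

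For (iv), suppose $f(x) \in \field{F}_2[x]$ satisfies $f(\beta) = 0$. By Lemma~\ref{lem:bch:roots}, every element of $\setb$ is also a root of $f(x)$. Since the $|\setb|$ elements $\omega \in \setb$ are pairwise distinct, the linear factors $(x-\omega)$ are pairwise coprime, so by Theorem~\ref{theo:noname} their product $h(x)$ divides $f(x)$ in $\field{F}_{2^m}[x]$. In particular $\deg f(x) \geq \deg h(x)$, which establishes minimality. Combining (i)--(iv) shows $h(x)$ is the minimal polynomial of $\beta$ in $\field{F}_2[x]$.
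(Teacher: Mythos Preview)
Your proof is correct and follows essentially the same route as the paper's: show $h(x)\in\field{F}_2[x]$ via Lemma~\ref{lem:bch:binarypolynomialcheck} and the identity $h^2(x)=h(x^2)$, then use Lemma~\ref{lem:bch:roots} to conclude that any $f\in\field{F}_2[x]$ with $f(\beta)=0$ must be divisible by $h(x)$. The only cosmetic difference is that where the paper cites Lemma~\ref{lem:bch:conjugacyclasses} for the step $\prod_{\omega\in\setb}(x^2-\omega^2)=\prod_{\omega\in\setb}(x^2-\omega)$, you argue directly that squaring is an injective self-map of the finite set $\setb$; and your invocation of Theorem~\ref{theo:noname} for the divisibility $h\mid f$ is slightly loose (that theorem is stated for the case $\deg f=|\setb|$), though the underlying fact that distinct linear factors of $f$ over a field multiply to a divisor of $f$ is standard and is exactly what the paper uses implicitly.
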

\begin{proof}
Let $\mu(x)$ be the minimal polynomial of $\beta$. By definition, $\mu(\beta)=0$ and $\mu(x)\in\field{F}_2[x]$. By Lemma~\ref{lem:bch:roots}, all elements in $\setb$ are also roots of $\mu(x)$. This shows that $h(x)\mid \mu(x)$ in $\field{F}_{2^m}[x]$. It remains to show that $h(x)$ is in $\field{F}_2[x]$. To this end, we check the condition in Lemma~\ref{lem:bch:binarypolynomialcheck}
\begin{align}
h^2(x)&=\prod_{\omega\in\setb}(x-\omega)^2\\
&\oeq{a}\prod_{\omega\in\setb}(x^2-\omega^2)\\
&\oeq{b}\prod_{\omega\in\setb}(x^2-\omega)\\
&=h(x^2)
\end{align}
where (a) follows by Lemma~\ref{lem:bch:linearpower} and where (b) follows by Lemma~\ref{lem:bch:conjugacyclasses}. By Lemma~\ref{lem:bch:binarypolynomialcheck}, $h(x)\in\field{F}_2[x]$. This shows that $h(x)=\mu(x)$, i.e., $h(x)$ is the minimal polynomial of $\beta$, as claimed in the lemma.
\end{proof}
\subsection{Generator Polynomial of BCH Codes}
Using our results on minimal polynomials, we can now state and prove the construction of generating polynomials of BCH codes.
\begin{theorem}\label{theo:bch:generatorpolynomial}
Let $\setc_\text{RS}$ be an $(n=2^m-1,k)$ RS code over $\field{F}_{2^m}$ with primitive element $\alpha$. Let $\setr$ be a set containing one representative of each conjugacy class of the elements in $\seta=\{\alpha,\alpha^2,\dotsc,\alpha^{n-k}\}$. Let $\setb$ be the set of all conjugates of elements in $\seta$. The generator polynomial of the binary BCH subcode of $\setc_\text{RS}$ is given by
\begin{align}
\tilde{g}(x)=\prod_{\beta\in\setb}(x-\beta)=\prod_{\omega\in\setr}\Phi_\omega(x)\label{eq:bch:bchgen}
\end{align}
where $\Phi_\omega(x)$ is the minimal polynomial in $\field{F}_2[x]$ of $\omega$.
\end{theorem}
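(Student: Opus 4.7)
The plan is to invoke Theorem~\ref{theo:bch:firstcharacerization}, which identifies $\tilde{g}(x)$ as the polynomial of least degree in $\field{F}_2[x]$ having $\seta=\{\alpha,\alpha^2,\dotsc,\alpha^{n-k}\}$ as roots. I will verify that the candidate $\prod_{\omega\in\setr}\Phi_\omega(x)$ in \eqref{eq:bch:bchgen} has the three required properties: it lies in $\field{F}_2[x]$, every element of $\seta$ is a root, and no polynomial of smaller degree in $\field{F}_2[x]$ can share these roots.

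First I would dispose of the second equality in \eqref{eq:bch:bchgen}. By Lemma~\ref{lem:bch:minimalpolynomial}, each factor $\Phi_\omega(x)$ equals $\prod_{\beta\in[\omega]}(x-\beta)$, where $[\omega]$ is the conjugacy class of $\omega$. Because $\setr$ contains one representative per conjugacy class of the elements in $\seta$, and $\setb$ is by definition the union of these conjugacy classes, the classes $\{[\omega]\colon \omega\in\setr\}$ partition $\setb$. Multiplying the factorizations of the $\Phi_\omega$ therefore yields $\prod_{\beta\in\setb}(x-\beta)$.

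Next I would verify the three required properties. Membership in $\field{F}_2[x]$ is immediate since each $\Phi_\omega(x)\in\field{F}_2[x]$ by definition, and $\field{F}_2[x]$ is closed under multiplication. To see that every $\alpha^i\in\seta$ is a root, note that $\alpha^i$ lies in the conjugacy class of some $\omega\in\setr$, so by Lemma~\ref{lem:bch:roots} (applied to the minimal polynomial $\Phi_\omega$) we have $\Phi_\omega(\alpha^i)=0$, and hence $\tilde{g}(\alpha^i)=0$. For minimality, let $f(x)\in\field{F}_2[x]$ be any polynomial vanishing on all of $\seta$. By Lemma~\ref{lem:bch:roots}, $f$ also vanishes on every conjugate of every element of $\seta$, i.e., on the whole set $\setb$. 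By Theorem~\ref{theo:noname} applied in $\field{F}_{2^m}[x]$, the polynomial $\prod_{\beta\in\setb}(x-\beta)$ must divide $f(x)$, so $\deg f(x)\geq |\setb|=\deg \tilde{g}(x)$.

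Combining these three facts with Theorem~\ref{theo:bch:firstcharacerization} identifies the polynomial in \eqref{eq:bch:bchgen} as the generator polynomial of the binary BCH subcode. The only subtle step is the minimality argument, where one has to be careful that the divisibility $\prod_{\beta\in\setb}(x-\beta)\mid f(x)$ is asserted in $\field{F}_{2^m}[x]$ rather than $\field{F}_2[x]$; this is fine because we only need the resulting degree inequality, and the product in question happens to lie in $\field{F}_2[x]$ by the computation done in the proof of Lemma~\ref{lem:bch:minimalpolynomial}. I expect no other obstacles.
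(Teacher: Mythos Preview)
Your proposal is correct and follows essentially the same approach as the paper: both invoke Theorem~\ref{theo:bch:firstcharacerization}, use Lemma~\ref{lem:bch:minimalpolynomial} to establish that $\tilde{g}(x)\in\field{F}_2[x]$, use Lemma~\ref{lem:bch:roots} to pass from roots in $\seta$ to roots in $\setb$, and appeal to Theorem~\ref{theo:noname} for the minimality of the degree. Your version is in fact slightly more explicit than the paper's, in that you separately justify the equality $\prod_{\beta\in\setb}(x-\beta)=\prod_{\omega\in\setr}\Phi_\omega(x)$ and flag the $\field{F}_{2^m}[x]$ versus $\field{F}_2[x]$ issue in the divisibility step.
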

\begin{proof}
By Theorem~\ref{theo:bch:firstcharacerization} and Lemma~\ref{lem:bch:roots}, the generator polynomial of the BCH code must be the polynomial in $\field{F}_2[x]$ of least degree that has all the roots of the $\tilde{g}(x)$ defined in \eqref{eq:bch:bchgen}. The polynomial $\tilde{g}(x)$ is by Lemma~\ref{lem:bch:minimalpolynomial} the product of minimal polynomials in $\field{F}_2[x]$ and therefore $\tilde{g}(x)\in\field{F}_2[x]$. All the roots of $\tilde{g}(x)$ are distinct. By Theorem~\ref{theo:noname}, a polynomial of degree $\ell$ can have at most $\ell$ distinct roots. Therefore, there can be no polynomial with degree less than $\tilde{g}(x)$ that has all the roots of $\tilde{g}(x)$. The polynomial $\tilde{g}(x)$ must therefore be the generator polynomial of the BCH code.
\end{proof}
\begin{example}\label{ex:bch:generatorpolynomial}
For the $(n=16-1,12)$ RS code over the field $\field{F}_{16}$, we construct the generator polynomial $\tilde{g}(x)$ of the binary BCH subcode using Theorem~\ref{theo:bch:generatorpolynomial}. We have $n-k=15-12=3$. Let $\alpha$ be a primitive element of $\field{F}_{16}$. The elements $\seta=\{\alpha,\alpha^2,\alpha^3\}$ need to be roots of $\tilde{g}(x)$. The conjugacy classes are
\begin{align}
\alpha:&\{\alpha,\alpha^2,\alpha^4,\alpha^8\}\\
\alpha^2:&\surd\\
\alpha^3:&\{\alpha^3,\alpha^6,\alpha^{12},\alpha^{9}\}
\end{align}
By $\surd$ we indicate that the element is contained in an already calculated conjugacy class. The set $\setb$ of all conjugates of elements in the set $\seta$ is
\begin{align}
\setb = \{\alpha,\alpha^2,\alpha^4,\alpha^8,\alpha^3,\alpha^6,\alpha^{12},\alpha^{9}\}.
\end{align}
The set of representatives of the conjugacy classes is not unique. We choose
\begin{align}
\setr=\{\alpha,\alpha^3\}.
\end{align}
We could also have chosen $\setr'=\{\alpha^2,\alpha^{12}\}$. The degree of the generator polynomial $\tilde{g}(x)$ is $\deg\tilde{g}(x)=|\setb|=8=n-\tilde{k}$. Thus, the dimension of the BCH code is $n-\deg\tilde{g}(x)=15-8=7$. The generator polynomial is given by
\begin{align}
\tilde{g}(x)=\prod_{\beta\in\setb}(x-\beta).
\end{align}
From the expression on the right-hand side, it is not obvious that $\tilde{g}(x)$ is in $\field{F}_2[x]$, i.e., that it is a polynomial with coefficients in $\field{F}_2$ (although Theorem~\ref{theo:bch:generatorpolynomial} guarantees this). We could expand the right-hand side; instead, we look up the minimal polynomials of the representatives in $\setr$ for example in \cite[Appendix~B]{lin2004error}. We find
\begin{align}
&\Phi_{\alpha}(x) = 1+x+x^4\\
&\Phi_{\alpha^3}(x) = 1+x+x^2+x^3+x^4.
\end{align}
Thus,
\begin{align}
\tilde{g}(x)=\prod_{\beta\in\setr}\Phi_\beta(x)=(1+x+x^4)(1+x+x^2+x^3+x^4)
\end{align}
and indeed, $\tilde{g}(x)$ is in $\field{F}_2$ and $\deg\tilde{g}(x)=8$. Finally, we can expand the right-hand side and get
\begin{align}
\tilde{g}(x)=1+x+x^2+x^3&+x^4\nonumber\\
+x+x^2+x^3&+x^4+x^5\nonumber\\
&+x^4+x^5+x^6+x^7+x^8\nonumber\\
=1+x^4+x^6+x^7&+x^8.
\end{align}
\end{example}
\section{Design of BCH Codes Correcting $t$ Errors}

Recall that a BCH code is the binary subcode of an RS mothercode. RS codes are MDS, i.e., the minimum distance is given by $d=n-k+1$. The actual minimum distance $\tilde{d}$ of the BCH code is at least as large as the minimum distance $d$ of the RS mothercode, however, BCH codes are in general \emph{not} MDS, this is because their dimension $\tilde{k}$ is in most cases smaller than $k$. The true minimum distance $\tilde{d}$ can only be determined by searching over all codewords. For large block lengths, this is infeasible. We therefore use the (known) minimum distance $d$ of the RS mothercode as the \emph{design distance}\index{design distance} of the BCH subcode. Summarizing,
\begin{align}
\tilde{d}\geq d=n-k+1.
\end{align}
This inequality is very useful, since it allows the design of BCH codes for a given block length $n$ and a required number $t$ of correctable errors. We illustrate this by an example.
\begin{example}\label{ex:bch:designdistance}
Suppose we are asked to design a BCH code with block length at most $20$ that is guaranteed to correct up to $t=2$ errors. The block length has to be of the form $n=2^m-1$ for some positive integer $m$. The largest $m$ such that $n=2^m-1\leq 20$ is $m=4$, so we choose $n=15$. By Theorem~\ref{theo:rs:mdd}, we need a minimum distance of $2t+1=5$. The RS code is MDS, i.e, its dimension has to be $k=n-d+1=15-5+1=11$ and in particular $n-k=2t=4$. We construct the generator polynomial of the BCH code. According to Theorem~\ref{theo:bch:generatorpolynomial}, its roots are $\seta=\{\alpha,\alpha^2,\alpha^3,\alpha^4\}$ and their conjugates. Since $\alpha^4$ is already in the conjugacy class of  $\alpha$, the generator polynomial is the generator polynomial $\tilde{g}(x)$ that we calculated in Example~\ref{ex:bch:generatorpolynomial}. The $(15,12)$ and the $(15,11)$ RS codes have the same binary BCH subcode! The dimension of the BCH code is $\tilde{k}=n-8=7$. By the Singleton bound, the actual minimum distance $\tilde{d}$ of the BCH code is bounded as
\begin{align}
\tilde{d}\leq n-\tilde{k}+1=9.
\end{align}
The design minimum distance $d=5$ is thus $4$ coordinates away from the Singleton bound. 
\end{example}

\section{Erasure Decoding}

Let $\vecc$ be the code word of some linear $(n,k)$ code $\setc$ over $\field{F}_q$. Suppose at positions $\delta_1,\dotsc,\delta_{n-\check{k}}$, the code word entries get erased, while at the other positions $\rho_1,\rho_2,\dotsc,\rho_{\check{k}}$, the entries arrive at the receiver unaltered. In this section, we will investigate how for codes with a guaranteed minimum distance, the receiver can explore its knowledge of the correct entries for decoding.

\subsection{Erasure Decoding of MDS Codes}

Let $\setc$ be an MDS code with minimum distance $d=n-k+1$. Suppose $\check{k}=k$ entries are received correctly. Let $\matg=(\vecg_1,\vecg_2,\dotsc,\vecg_n)$ be a $k\times n$ generator matrix of $\setc$ and consider the matrix
\begin{align}
\tilde{\matg}=(\vecg_{\rho_1},\vecg_{\rho_2},\dotsc,\vecg_{\rho_k}).
\end{align}
For the information vector $\vecu=(u_1,\dotsc,u_k)$ that was encoded to $\vecc$, we have
\begin{align}
\vecu\tilde{\matg}=(c_{\rho_1},c_{\rho_2},\dotsc,c_{\rho_k}).
\end{align}
Since $\setc$ is MDS (i.e., every set of $k$ coordinates of the code forms an information set, see Theorem~\ref{rs:theo:singleton}), the $k\times k$ matrix $\tilde{\matg}$ is full rank and invertible. We can thus recover $\vecu$ from the correctly received entries of $\vecc$ by
\begin{align}
\vecu=(c_{\rho_1},c_{\rho_2},\dotsc,c_{\rho_k})\tilde{\matg}^{-1}.
\end{align}

\subsection{Erasure Decoding of BCH Codes}

Let now $\setc$ be a $(n,\tilde{k})$ binary BCH code with design minimum distance $d$. This means that after erasing $d-1$ entries of the codeword, it still differs from each other code word in at least one entry, see Subsection~\ref{subsec:singleton} where we used the same argument. Suppose now that in $\check{k}\geq n-d+1$, the code word is received correctly. By the Singleton bound, $\tilde{k}\leq n-d+1\leq \check{k}$. Consider the $\tilde{k}\times\check{k}$ matrix
\begin{align}
\check{\matg}=(\vecg_{\rho_1},\vecg_{\rho_2},\dotsc,\vecg_{\rho_{\check{k}}}).
\end{align}
Because the minimum distance of the code is $d\geq n-\check{k}+1$, this matrix maps each information vector $\vecu$ to a different length $\check{k}$ vector. Therefore, the row rank of $\check{\matg}$ is $\tilde{k}$ and in particular, $\check{\matg}$ has $\tilde{k}$ linearly independent columns  
\begin{align}
\{\vecg_{\tilde{\rho}_1},\vecg_{\tilde{\rho}_2},\dotsc,\vecg_{\tilde{\rho}_{\tilde{k}}}\}\subseteq\{\vecg_{\rho_1},\vecg_{\rho_2},\dotsc,\vecg_{\rho_{\check{k}}}	\}\label{eq:bch:full rank}
\end{align}
and the $\tilde{k}\times\tilde{k}$ matrix
\begin{align}
\tilde{\matg}:=(\vecg_{\tilde{\rho}_1},\vecg_{\tilde{\rho}_2},\dotsc,\vecg_{\tilde{\rho}_{\tilde{k}}})
\end{align}
has full rank $\tilde{k}$ and is invertible. We can now recover the encoded information vector by
\begin{align}
\vecu=(u_1,u_2,\dotsc,u_{\tilde{k}})=(c_{\tilde{\rho}_1},c_{\tilde{\rho}_2},\dotsc,c_{\tilde{\rho}_{\tilde{k}}})\tilde{\matg}^{-1}.
\end{align}
In summary, for $(n,k)$ MDS codes, we can use any $k$ correctly received code word entries to recover the transmitted information. In contrast, for $(n,\tilde{k})$ BCH codes, we need to find a set of $\tilde{k}$ correctly received entries with the property that the corresponding columns of the generator matrix are linearly independent. We are guaranteed to find such a set if we receive $\tilde{k}\geq n-(d-1)$ entries correctly, where $d$ is the design minimum distance given by $d=n-k+1$ with $k$ being the dimension of the RS mothercode.  

\section{Decoding of BCH Codes}

In this section, we derive how to decode BCH codes efficiently. We start with an example and then develop the general case. 
\subsection{Example}\label{subsec:bch:example}
Consider the $(15,7)$ BCH code we designed in Example~\ref{ex:bch:designdistance}. Suppose codeword $\vecc\leftrightarrow c(x)$ was transmitted over a binary channel and the received vector is
\begin{align}
y(x)=c(x)+e(x)
\end{align}
where $e(x)$ is the polynomial representation of the error pattern $\vece$. The codeword $c(x)$ has roots at 
\begin{align}
\{\alpha,\alpha^2,\alpha^3,\alpha^4,\alpha^6,\alpha^8,\alpha^{9},\alpha^{12}\}.
\end{align}
Thus,
\begin{align}
y(\alpha^i)=c(\alpha^i)+e(\alpha^i)=e(\alpha^i),\quad i=1,2,3,4,6,8,9,12
\end{align}
that is, we know $e(\alpha^i)$ at the roots $\alpha^i$. Recall that we designed the code to correct $2$ errors and that the minimum distance needed for this was $5$. To decode up to $2$ errors, we can use a minimum distance decoder for the RS mothercode. Since the generator polynomial for the RS mothercode has the roots $\{\alpha,\alpha^2,\alpha^3,\alpha^4\}$, by Theorem~\ref{theo:rs:syndrome}, the syndrome is
\begin{align}
\vecs=[y(\alpha),y(\alpha^2),y(\alpha^3),y(\alpha^4)]=[e(\alpha),e(\alpha^2),e(\alpha^3),e(\alpha^4)].
\end{align}
From this syndrome, we can correct all error patterns with weight smaller or equal to $2$. Note that this is the syndrome of the RS mother code. Suppose two errors occurred at the positions $i_1$ and $i_2$, i.e., the error polynomial is
\begin{align}
e(x)=x^{i_1}+x^{i_2}.
\end{align}
Define $\gamma:=\alpha^{i_1}$ and $\rho:=\alpha^{i_2}$. We have the following information about $e(x)$.
\begin{align}
&s_0=e(\alpha)=\alpha^{i_1}+\alpha^{i_2}=\gamma+\rho\\
&s_1=e(\alpha^2)=\gamma^2+\rho^2\\
&s_2=e(\alpha^3)=\gamma^3+\rho^3\\
&s_3=e(\alpha^4)=\gamma^4+\rho^4.
\end{align}
By Lemma~\ref{lem:bch:linearpower}, the second and the fourth equations are linearly dependent of the first equation. We therefore discard equation two and four and try to find $\gamma$ and $\rho$ using the system of equations
\begin{align}
&s_0=e(\alpha)=\gamma+\rho\\
&s_2=e(\alpha^3)=\gamma^3+\rho^3.
\end{align}
We know $\gamma+\rho=s_0$. We need another equation with $\gamma$ and $\rho$. We do a trick.
\begin{align}
s_0^3&=(\gamma+\rho)^3=(\gamma+\rho)(\gamma+\rho)^2\\
&=(\gamma+\rho)(\gamma^2+\rho^2)\\
&=\gamma^3+\rho^3+\gamma\rho(\gamma+\rho)\\
&=s_2+\gamma\rho s_0
\end{align}
We solve for $\gamma\rho$ and get
\begin{align}
\gamma\rho=(s_0^3-s_2)\cdot s_0^{-1}=s_0^2-s_2\cdot s_0^{-1}.
\end{align}
We can now solve for $\gamma$ and $\rho$. We can write this step as the problem of factoring a polynomial since
\begin{align}
x^2-(\gamma+\rho)x+\gamma\rho=(x-\gamma)(x-\rho).
\end{align}
The roots $\gamma$ and $\rho$ of this polynomial ``locate'' the errors. In the general case, this polynomial will be called the \emph{error locator polynomial}.

\subsection{Linear Recurrence Relations}

\begin{definition}
A right-infinite sequence $a_0,a_1,\dotsc$ is a \emph{linear recurrence sequence}\index{linear recurrence sequence} of order $k$, if
\begin{align}
a_n = c_1a_{n-1}+c_2a_{n-2}+\dotsb+c_ka_{n-k},\quad \forall n\geq k
\end{align}
where $c_k\neq 0$.
\end{definition}
Let $a(x)$ be the generating function of the sequence $a_0,a_1,\dotsc$. We define the polynomial
\begin{align}
b(x)=1-c_1x-c_2x^2-\dotsb-c_kx^k.
\end{align}
Multiplying $a(x)$ with $b(x)$, the $n$th coefficient is
\begin{align}
d_n=a_n-c_1a_{n-1}-c_2a_{n-2}-\dotsb-c_ka_{n-k}.
\end{align}
By the definition of $a(x)$, $d_n=0$ for $n\geq k$. We define
\begin{align}
d(x)=d_0+d_1x+\dotsb+d_{k-1}x^{k-1}=a(x)b(x)
\end{align}
and write $a(x)$ as
\begin{align}
a(x)=\frac{d(x)}{b(x)}.
\end{align}
The coefficients of $d(x)$ are the initial values of $a(x)$. The polynomial $b(x)$ is the \emph{recurrence}\index{recurrence}. The degree of $b(x)$ is $\deg b(x)=k$ since $c_k\neq 0$. The degree of $d(x)$ is smaller or equal to $k-1$, depending on the initial values. Thus, we have the following result.
\begin{lemma}\label{lem:bch:recurrencepolynomial}
A power series $a(x)$ is a linear recurrence sequence if there exist two polynomials $d(x)$ and $b(x)$ with $\deg d(x)<\deg b(x)<\infty$ such that
\begin{align}
a(x)=\frac{d(x)}{b(x)}.\label{eq:bch:recurrence}
\end{align}
\end{lemma}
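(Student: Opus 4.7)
The plan is to essentially reverse the construction that precedes the lemma: given a rational expression $a(x) = d(x)/b(x)$, I will read off a linear recurrence from the equation $a(x) b(x) = d(x)$ by comparing coefficients at sufficiently large degree.

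Concretely, I would first argue that, since $a(x)$ is required to be a (formal) power series, we may assume $b_0 \neq 0$: otherwise $b(x) = x^j \tilde b(x)$ with $\tilde b(0) \neq 0$ and $j \geq 1$, and $d(x)/b(x)$ would fail to be a power series (it would have a pole at $0$), contradicting the assumption. After dividing numerator and denominator by the nonzero scalar $b_0$, I can therefore write
\begin{align}
b(x) = 1 - c_1 x - c_2 x^2 - \dotsb - c_k x^k,
\end{align}
with $k = \deg b(x)$ and $c_k \neq 0$ (since $b_k \neq 0$).

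Next, from $a(x) = d(x)/b(x)$ I get $a(x) b(x) = d(x)$. The $n$th coefficient of $a(x) b(x)$ is
\begin{align}
a_n - c_1 a_{n-1} - c_2 a_{n-2} - \dotsb - c_k a_{n-k}
\end{align}
for every $n \geq k$. On the other hand, $\deg d(x) < \deg b(x) = k$, so the $n$th coefficient of $d(x)$ is zero for every $n \geq k$. Equating coefficients yields
\begin{align}
a_n = c_1 a_{n-1} + c_2 a_{n-2} + \dotsb + c_k a_{n-k}, \quad \forall n \geq k,
\end{align}
with $c_k \neq 0$, which is exactly the definition of a linear recurrence sequence of order $k$.

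The only mildly subtle point—and I expect it to be the main obstacle—is the tacit assumption $b_0 \neq 0$, which is needed both to make the quotient a power series in the first place and to be allowed to normalize $b_0 = 1$. Everything else is just bookkeeping with the Cauchy product and matching coefficients. Once $b_0 \neq 0$ is established, the derivation is just the inverse of the computation that motivated the lemma in the paragraph above its statement.
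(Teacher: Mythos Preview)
Your proposal is correct and is essentially the same as the paper's argument: the paper treats the lemma as a summary of the computation in the paragraph just above it, and you are simply running that computation in reverse, exactly as you say. The one small addition you make---observing that $b_0\neq 0$ is forced by $a(x)$ being a power series, so that one may normalize to $b_0=1$---is a point the paper glosses over (there $b_0=1$ by construction), so your write-up is, if anything, slightly more careful.
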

\begin{example}
The sequence $a_0=1,a_1=2,a_2=4,\dotsc$ fulfills for $c_1=2$ the equation
\begin{align}
a_n=c_1\cdot a_{n-1},\quad n\geq 1
\end{align}
so it is a linear recurrence sequence of order $1$. Multiplying $a(x)$ by $1-2x$, we get
\begin{align}
a(x)(1-2x)=(1+2x+4x^2+\dotsb)(1-2x)=1
\end{align}
so 
\begin{align}
a(x)=1+2x+4x^2+\dotsb=\sum_{i=0}^\infty (2x)^i=\frac{1}{1-2x}
\end{align}
which is the well-known geometric series formula.
\end{example}
Note that in \eqref{eq:bch:recurrence}, we can multiply both numerator and denominator by another polynomial $h(x)$ to get
\begin{align}
a(x)=\frac{d(x)h(x)}{b(x)h(x)}.
\end{align}
Then, $b(x)h(x)$ is another recurrence relation of $a(x)$. The recurrence of $a(x)$ in $\myfield[x]$ of least degree is called the \emph{minimal recurrence}\index{minimal recurrence} of $a(x)$ in \myfield[x]. The following theorem is the reason why the minimal polynomial is unique up to a scalar multiplication. We state it without a proof.
\begin{theorem}\label{theo:rs:ufd}
Let $p(x)$ be a polynomial in $\myfield[x]$. Then $p(x)$ can be written as the product of prime polynomials in $\myfield[x]$, scaled by a field element. This factorization is unique up to permutation of the prime polynomials.
\end{theorem}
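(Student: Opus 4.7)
The plan is to establish the two claims separately: existence of a factorization into prime polynomials (times a field element), by strong induction on degree, and uniqueness of such a factorization, via a polynomial analogue of Euclid's lemma derived from the long-division identity \eqref{eq:cyclic:mod}.

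For existence, I would proceed by strong induction on $\deg p(x)$. If $\deg p(x)\leq 0$, then $p(x)\in\myfield$ is already the empty product scaled by a field element. Otherwise, write $p(x)=c\cdot \hat p(x)$ where $c$ is the leading coefficient and $\hat p(x)$ is monic. If $\hat p(x)$ is irreducible then it is prime by definition and we are done. Otherwise $\hat p(x)=r(x)s(x)$ with $0<\deg r(x),\deg s(x)<\deg \hat p(x)$; dividing $r$ and $s$ by their leading coefficients to make them monic (absorbing the product of those coefficients into $c$, which must equal $1$ since $\hat p$ was monic) and applying the induction hypothesis to both yields a prime factorization of $\hat p(x)$, hence of $p(x)$.

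For uniqueness, the main obstacle is Euclid's lemma for $\myfield[x]$: if a prime polynomial $\pi(x)$ divides $a(x)b(x)$, then $\pi(x)\mid a(x)$ or $\pi(x)\mid b(x)$. I would first prove a Bezout-type identity using the Euclidean algorithm, which runs because \eqref{eq:cyclic:mod} guarantees division with strictly decreasing degree remainder. Concretely, for any two non-zero polynomials there exists a monic $\gcd$ expressible as $u(x)a(x)+v(x)b(x)$. Now, if $\pi(x)\nmid a(x)$, then since the only monic divisors of the prime polynomial $\pi(x)$ are $1$ and $\pi(x)$ itself, $\gcd(\pi(x),a(x))=1$, so $1=u(x)\pi(x)+v(x)a(x)$ for some $u,v\in\myfield[x]$. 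Multiplying by $b(x)$ yields $b(x)=u(x)\pi(x)b(x)+v(x)a(x)b(x)$, and since $\pi(x)$ divides both terms on the right, $\pi(x)\mid b(x)$.

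Given Euclid's lemma, uniqueness follows by induction on the number $m$ of prime factors. Suppose
\begin{align*}
c\cdot p_1(x)\cdots p_m(x)=c'\cdot q_1(x)\cdots q_n(x).
\end{align*}
Comparing leading coefficients (all $p_i,q_j$ are monic) gives $c=c'$. Then $p_1(x)$ divides $q_1(x)\cdots q_n(x)$, so by iterated application of Euclid's lemma $p_1(x)\mid q_j(x)$ for some $j$. Since $q_j(x)$ is irreducible and $p_1(x)$ is monic and non-constant, the only possibility is $p_1(x)=q_j(x)$. Cancelling this common factor (permitted since $\myfield[x]$ has no zero divisors, as $\myfield$ is a field) reduces to the case of $m-1$ factors, and the base case $m=0$ is immediate. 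The induction shows that the lists $(p_1,\dotsc,p_m)$ and $(q_1,\dotsc,q_n)$ agree up to permutation, completing the proof.
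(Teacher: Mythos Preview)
Your argument is correct and is the standard route to unique factorization in $\myfield[x]$: existence by strong induction on degree, and uniqueness via Euclid's lemma, itself obtained from B\'ezout's identity and the Euclidean algorithm that the division-with-remainder identity \eqref{eq:cyclic:mod} supplies. The only minor wording to tighten is the base case $m=0$ of the uniqueness induction: you should note explicitly that the right-hand side then equals a nonzero constant, forcing $n=0$ as well since each $q_j$ has positive degree.

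For comparison with the paper: there is nothing to compare. The paper states Theorem~\ref{theo:rs:ufd} explicitly without proof (``We state it without a proof''), treating it as a background fact from abstract algebra. Your proposal thus goes strictly beyond what the paper provides, supplying exactly the classical proof one would expect.
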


\begin{lemma}
Suppose $a(x)$ is a linear recurrence sequence with recurrence $\mu(x)$ that is minimal in $\myfield[x]$. 
\begin{enumerate}
\item Any other recurrence is a multiple of $\mu(x)$. 
\item If $b(x)\in\myfield[x]$ is a recurrence and $a(x)=d(x)/b(x)$ with $\deg d(x)<\deg b(x)$ then $b(x)$ is minimal in $\myfield[x]$ if and only if $d(x)$ and $b(x)$ are co-prime in $\myfield[x]$, i.e., have no common factors in $\myfield[x]$. 
\end{enumerate}
\end{lemma}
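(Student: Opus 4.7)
The plan is to first observe that the minimal recurrence $\mu(x)$ must be coprime to the corresponding numerator, and then use this single fact to drive both parts of the lemma.

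Write $a(x) = p(x)/\mu(x)$ with $\deg p(x) < \deg\mu(x)$. I claim $\gcd(p(x),\mu(x)) = 1$. Indeed, if $g(x) := \gcd(p,\mu)$ had positive degree, then setting $p'(x) := p(x)/g(x)$ and $\mu'(x) := \mu(x)/g(x)$ gives $a(x) = p'(x)/\mu'(x)$ with $\deg p' < \deg\mu'$ (since the strict inequality is preserved when a common factor is removed), and $\deg\mu' < \deg\mu$, contradicting minimality of $\mu(x)$. For this step I use unique factorization in $\myfield[x]$ (Theorem~\ref{theo:rs:ufd}) so that $g(x)$ is well defined and divides both polynomials.

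For part 1, let $b(x)$ be any recurrence, so $a(x) = d(x)/b(x)$ with $\deg d < \deg b$. Cross-multiplying $p(x)/\mu(x) = d(x)/b(x)$ gives $p(x)b(x) = d(x)\mu(x)$, so $\mu(x) \mid p(x)b(x)$. Since $\gcd(p,\mu)=1$, Euclid's lemma (a consequence of Theorem~\ref{theo:rs:ufd}) yields $\mu(x) \mid b(x)$, as desired.

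For part 2, the ``$\Rightarrow$'' direction is just the first paragraph applied to $b(x)$ itself: a minimal recurrence is coprime to its numerator. For ``$\Leftarrow$'', suppose $\gcd(d,b) = 1$ but $b(x)$ is not minimal. Then by part 1, $\mu(x) \mid b(x)$, so write $b(x) = h(x)\mu(x)$. Substituting into $p(x)b(x) = d(x)\mu(x)$ gives $p(x)h(x) = d(x)$, so $h(x) \mid d(x)$; combined with $h(x) \mid b(x)$ this forces $h(x) \mid \gcd(d,b) = 1$, so $\deg h = 0$. Hence $b(x)$ is a nonzero scalar multiple of $\mu(x)$, which has the same (minimal) degree, so $b(x)$ is itself minimal.

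The only subtle point, and the one I would take care to spell out, is the very first claim that coprimality of numerator and denominator is \emph{forced} by minimality; everything else reduces to Euclid's lemma applied twice. The bookkeeping on degrees when cancelling a common factor is the one place where a slip would sink the argument, so I would verify the strict inequality $\deg p' < \deg \mu'$ explicitly.
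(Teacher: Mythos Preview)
Your proof is correct and follows essentially the same route as the paper: first argue that minimality forces coprimality of numerator and denominator (by cancelling a hypothetical common factor), then use unique factorization/Euclid's lemma on the cross-multiplied identity to deduce both parts. Your write-up is in fact more explicit than the paper's (you spell out Euclid's lemma and the degree bookkeeping, and you handle both directions of part 2 separately), and the phrase ``but $b(x)$ is not minimal'' in the $\Leftarrow$ argument is superfluous since part~1 already gives $\mu(x)\mid b(x)$ for \emph{any} recurrence $b(x)$---but this does not affect correctness.
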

\begin{proof}
To prove the lemma, we think of each polynomial written in its prime factorization in $\myfield[x]$, which is unique by Theorem~\ref{theo:rs:ufd}.
 
1. Let $a(x)=\nu(x)/\mu(x)$. Since $\mu(x)$ is minimal, $\nu(x)$ and $\mu(x)$ are co-prime, because otherwise we could cancel out the common factor, which would contradict that $\mu(x)$ is minimal. Let $a(x)=d(x)/b(x)$ for some other recurrence $b(x)$. Then from
\begin{align}
&\frac{\nu(x)}{\mu(x)}=\frac{d(x)}{b(x)}\label{eq:bch:minimal}
\end{align}
we see that $b(x)$ is a multiple of $\mu(x)$, since $\nu(x)$ and $\mu(x)$ share no common factors. 

2. Suppose now $d(x)$ and $b(x)$ are co-prime and $a(x)=d(x)/b(x)$. Then
\begin{align}
\frac{d(x)}{b(x)}=\frac{\nu(x)}{\mu(x)}
\end{align}
which shows that $b(x)$ must be a scalar multiple of $\mu(x)$ and consequently a minimal recurrence.
\end{proof}

\subsection{Syndrome Polynomial as Recurrence}

Suppose we have a BCH code that guarantees by its design distance the correction of up to $t$ errors. The number $t$ relates to the parameters of the RS mothercode by
\begin{align}
t=\left\lfloor\frac{d-1}{2}\right\rfloor=\left\lfloor\frac{n-k+1-1}{2}\right\rfloor=\left\lfloor\frac{n-k}{2}\right\rfloor.
\end{align}
We therefore set
\begin{align}
2t=n-k
\end{align}
and write in the following $2t$ instead of $n-k$. The approach that we took in our introductory example was to find the coefficients of the expansion of the polynomial 
\begin{align}
f(x)=(x-\gamma_1)\dotsb(x-\gamma_t)
\end{align}
and then to search for the roots $\gamma_j$. Following the literature, we equivalently use in the following the polynomial
\begin{align}
\ell(x):=\prod_{j=1}^t(1-\gamma_j x).
\end{align}
The polynomial $\ell(x)$ is called the \emph{error locator polynomial}\index{error locator polynomial}. Note that $\ell(x)=x^tf(1/x)$. We explicitly allow $\gamma_j=0$, i.e, $\ell(x)$ can represent any number of $0$ up to $t$ errors.

Suppose we have a $t$-error correcting BCH code, i.e., the generator polynomial of the RS mothercode has roots $\alpha,\alpha^2,\dotsc,\alpha^{2t}$. Let $i_1,\dotsc,i_t$ be the $t$ (unknown) positions. Define $\gamma_j=\alpha^{i_j}$. Then we have the relations
\begin{align}
\gamma_1+\dotsb+\gamma_t &= s_0\\
\gamma_1^2+\dotsb+\gamma_t^2 &= s_1\\
\gamma_1^3+\dotsb+\gamma_t^3 &= s_2\\
&\vdots\nonumber\\
\end{align}
These relations define an infinite sequence $s_0,s_1,\dotsc$, of which we know the first $2t$ numbers, since $s_0,s_1,\dotsc,s_{2t-1}$ is the syndrome of the observed channel output. We denote the polynomial of the infinite sequence by $\sigma(x)$. The polynomial $\sigma(x)$ and the syndrome polynomial $s(x)$ are identical in the first $2t$ coefficients.
\begin{theorem}\ 
The polynomial $\sigma(x)$ and the error locator polynomial $\ell(x)$ relate as
\begin{align}
\sigma(x)=\frac{-\ell'(x)}{\ell(x)}.
\end{align}
where $\ell'(x)$ denotes the \emph{formal derivative} of $\ell(x)$. Furthermore, $\ell(x)$ is the minimal recurrence of $\sigma(x)$.
\end{theorem}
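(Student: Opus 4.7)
My plan is to prove the identity first by a direct logarithmic-derivative computation, then derive minimality from the lemma that characterizes minimal recurrences as those with coprime numerator and denominator.

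First I would compute the formal derivative of $\ell(x)=\prod_{j=1}^t(1-\gamma_j x)$ using the product rule, obtaining
\begin{align*}
-\ell'(x) = \sum_{j=1}^t \gamma_j \prod_{k\neq j}(1-\gamma_k x).
\end{align*}
Dividing by $\ell(x)$ collapses each term to a single factor, giving the partial-fraction expansion
\begin{align*}
\frac{-\ell'(x)}{\ell(x)} = \sum_{j=1}^t \frac{\gamma_j}{1-\gamma_j x}.
\end{align*}
Now I would expand each summand as a formal geometric series, $\gamma_j/(1-\gamma_j x)=\sum_{i\geq 0}\gamma_j^{i+1}x^i$, and interchange the sums to read off the coefficient of $x^i$ as $\sum_{j=1}^t \gamma_j^{i+1}$. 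By the defining relations of the sequence $s_0,s_1,\ldots$ (the power sums of the error locators), this coefficient is exactly $s_i$, so $-\ell'(x)/\ell(x)=\sigma(x)$ as formal power series. This step is purely computational and should pose no real difficulty.

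For the minimality claim, I would appeal directly to the lemma immediately preceding the theorem: the polynomial $\ell(x)$ is a minimal recurrence of $\sigma(x)$ iff $-\ell'(x)$ and $\ell(x)$ are coprime in $\myfield[x]$. The nonzero error locators $\gamma_j$ are distinct (they correspond to distinct error positions $\alpha^{i_j}$), so the nonzero roots $1/\gamma_j$ of $\ell(x)$ are pairwise distinct, and $\ell(0)=1\neq 0$; hence $\ell(x)$ is square-free. A standard argument then shows coprimality: if $\beta$ were a common root of $\ell$ and $\ell'$, writing $\ell(x)=(1-\beta^{-1}x)q(x)$ with $q(\beta^{-1})\neq 0$ and differentiating gives $\ell'(\beta^{-1})=-\beta^{-1}q(\beta^{-1})\neq 0$, a contradiction.

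The only subtlety I anticipate is being careful that the formal-derivative argument goes through in characteristic two, where derivatives can vanish unexpectedly. This is handled by the square-free observation above, which is valid in any characteristic precisely because the factors $(1-\gamma_j x)$ are pairwise distinct linear polynomials. Applying part 2 of the lemma on minimal recurrences then finishes the proof.
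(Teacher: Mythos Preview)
Your proposal is correct and follows essentially the same route as the paper: compute $\ell'(x)$ via the product rule, expand each $\gamma_j/(1-\gamma_j x)$ as a geometric series to identify the coefficients with the power sums $s_i$, and then invoke the preceding lemma together with the square-freeness of $\ell(x)$ to obtain coprimality and hence minimality. Your added remark about characteristic two is a nice touch but is not needed beyond the square-free observation you already make, which is exactly what the paper uses (``since all roots of $\ell(x)$ are distinct'').
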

\begin{proof}
We first show the identity.
\begin{align}
\ell'(x)&=-\sum_{j=1}^t\gamma_j\frac{\ell(x)}{1-\gamma_jx}\\
&=-\ell(x)\sum_{j=1}^t\gamma_j\sum_{i=0}^\infty(\gamma_jx)^i\\
&=-\ell(x)\sum_{i=0}^\infty \Bigl(\underbrace{\sum_{j=1}^t\gamma_j^{i+1}}_{=s_i}\Bigr)x^i\\
&=-\ell(x)\sigma(x).
\end{align}
For the degrees, we have $\deg\ell(x)=\deg \ell'(x)+1=t$, so the conditions of Lemma~\ref{lem:bch:recurrencepolynomial} are fulfilled. Since all roots of $\ell(x)$ are distinct, $\ell'(x)$ does not have a root of $\ell(x)$ as a factor, i.e., $\ell(x)$ and $\ell'(x)$ are co-prime. This shows that $\ell(x)$ is a minimal recurrence of $\sigma(x)$.
\end{proof}

\subsection{Berlekamp-Massey Algorithm}

If the number of errors is smaller or equal to $t$, the \emph{Berlekamp-Massey Algorithm}\index{Berlekamp-Massey Algorithm} \cite{massey1969shift},\cite[Figure~8.4]{massey1997applied} applied to $s_0,s_1,\dotsc,s_{2t-1}$ finds the minimal recurrence $\ell(x)$ of $\sigma(x)$. Decoding by minimal recurrence is an instance of a minimum distance decoder, since the degree of the minimal recurrence is exactly the weight of the error pattern of minimum weight that explains the observed channel output.

\clearpage

\section{Problems}

\myproblem\label{prob:bch:cyclic} Conclude from Definition~\ref{def:bch:bch} that BCH codes are cyclic.

\myproblem Suppose the BCH code from Example~\ref{ex:bch:designdistance} is used on a BSC with crossover probability $\delta=0.11$. Show that the probability of error $P_e$ of an ML decoder is bounded as
\begin{align}
P_e\leq 1-\sum_{\ell=0}^2 {15\choose\ell}(1-\delta)^{15-\ell}\delta^\ell.
\end{align}
Place the operating point of the BCH code in Figure~\ref{fig:optimal_codes_bsc_pb}. Use the bound as an estimate for $P_e$.

\myproblem Consider the $(n=16-1,12)$ RS code over the field $\field{F}_{16}$. Construct the generator polynomial $\tilde{g}(x)$ of the binary BCH subcode. 

\myproblem For the example in Subsection~\ref{subsec:bch:example}, suppose the channel output is
\begin{align}
\vecy=000010111000010.
\end{align}
Decode by applying the procedure suggested in Subsection~\ref{subsec:bch:example}.

\myproblem For the example in Subsection~\ref{subsec:bch:example}, suppose again the channel output is
\begin{align}
\vecy=000010111000010.
\end{align}
Decode by applying the Berlekamp-Massey Algorithm.

\myproblem
Consider the BCH subcode of an $(n=2^3-1,5)$ RS code. A primitive polynomial for $\field{F}_{2^3}$ is $p(z)=1+z+z^3$ with root $\alpha$. The correspondence table is
\begin{align*}
0&\leftrightarrow 0\\
1&\leftrightarrow \alpha^0\\
z&\leftrightarrow \alpha^1\\
z^2&\leftrightarrow\alpha^2\\
1+z&\leftrightarrow\alpha^3\\
z+z^2&\leftrightarrow\alpha^4\\
1+z+z^2&\leftrightarrow\alpha^5\\
1+z^2&\leftrightarrow\alpha^6
\end{align*}
\begin{enumerate}
\item What is the design minimum distance of the code?
\item Calculate the generator polynomial of the BCH code.
\item What is the dimension of the BCH code?
\end{enumerate}
 A codeword is transmitted over a binary channel. One bit is corrupted. The observed vector at the output of the channel is
\begin{align*}
\vecy=1001111\leftrightarrow 1+x^3+x^4+x^5+x^6.
\end{align*}
\begin{enumerate}
\addtocounter{enumi}{3}
\item Calculate the syndrome of $\vecy$.
\item Which codeword was transmitted?
\end{enumerate}
\myproblem\label{bch:prob:GF16}
The BCH subcode of an $(n=2^4-1,9)$ RS code is used. A primitive polynomial for $\myfield_{2^4}$ is $p(z)=1+z+z^4$ with root $\alpha$. The correspondence table is
\begin{center}
\begin{tabular}{rcl|rcl}
$0$&$\leftrightarrow$&$0$&$1+z+z^3$&$\leftrightarrow$&$\alpha^7$\\
$1$&$\leftrightarrow$&$1$&$1+z^2$&$\leftrightarrow$&$\alpha^8$\\
$z$&$\leftrightarrow$&$\alpha$&$z+z^3$&$\leftrightarrow$&$\alpha^9$\\
$z^2$&$\leftrightarrow$&$\alpha^2$&$1+z+z^2$&$\leftrightarrow$&$\alpha^{10}$\\
$z^3$&$\leftrightarrow$&$\alpha^3$&$z+z^2+z^3$&$\leftrightarrow$&$\alpha^{11}$\\
$1+z$&$\leftrightarrow$&$\alpha^4$&$1+z+z^2+z^3$&$\leftrightarrow$&$\alpha^{12}$\\
$z+z^2$&$\leftrightarrow$&$\alpha^5$&$1+z^2+z^3$&$\leftrightarrow$&$\alpha^{13}$\\
$z^2+z^3$&$\leftrightarrow$&$\alpha^6$&$1+z^3$&$\leftrightarrow$&$\alpha^{14}$
\end{tabular}
\end{center}
\begin{enumerate}
\item Calculate the generator polynomial $\tilde{g}(x)$ of the BCH code.
\item How many code words are in the BCH code?
\end{enumerate}
The BCH code is used on a BEC. The codeword $\vecc$ is transmitted over the BEC. The channel output is
\begin{align*}
\vecy = 11101100e0e000
\end{align*}
\begin{enumerate}
\addtocounter{enumi}{2}
\item Modify $\vecy$ by replacing the erasures by $0$s and calculate for the modified output $\hat{\vecy}$ the first 3 entries of the RS syndrome $\vecs=(s_0,s_1,s_2,s_3,s_4,s_5)$.
\item The modified channel output can be written as $\hat{\vecy}=\vecc+\vece$. Set up a system of linear equations for the unknown coefficients of $\vece$.
\item Solve your system of linear equations and determine $\vecc$. 
\end{enumerate}

\myproblem
The BCH subcode of an $(n=2^4-1,13)$ RS code is used. 
A primitive polynomial for $\myfield_{2^4}$ is $p(z)=1+z+z^4$ with root $\alpha$. The correspondence table is provided in Problem~\ref{bch:prob:GF16}
\begin{enumerate}
\item What is the design minimum distance of the code?
\item Can the code correct $2$ erasures?
\item Calculate the generator polynomial $\tilde{g}(x)$ of the BCH code.
\item What is the dimension of the BCH code?
\item What is the relation between $\tilde{g}(x)$ and $p(x)=1+x+x^4$?
\end{enumerate}

\myproblem Some company provides the following specification of an error correcting code for a BSC:
\begin{center}
\begin{tabular}{rl}
block length in bits&800\\
rate&0.95 bits/channel use
\end{tabular}
\end{center}
\begin{enumerate}
\item Your colleague suggests to use an $(n=2^7-1,k)$ RS code together with shortening.
\begin{enumerate}
\item Specify the shortening procedure and the code dimension $k$ such that the specification is fulfilled exactly, i.e., 800 uses of the binary channel are needed to transmit one code word and the rate is 0.95 bits/channel use.
\item How many bit errors can your colleague guarantee to correct by a minimum distance decoder?
\end{enumerate}
\item You propose to design a BCH code that meets the requirements. How many bit errors can you guarantee to correct by a minimum distance decoder? \emph{Hint:} Appendix C of \cite{lin2004error} may be helpful. 
\item Both for the RS code suggested by your colleague and for your BCH code, plot block error probability upper -bounds for BSC crossover probabilities\begin{align*}\delta=10^{-1}, 10^{-2}, 10^{-3}, 10^{-4}.\end{align*}
\end{enumerate}

\bibliographystyle{IEEEtran}
\bibliography{IEEEabrv,confs-jrnls,cc}

\printindex

\end{document}